\newtheorem{theorem}{Theorem}
\newtheorem{lemma}{Lemma}
\newtheorem{corollary}{Corollary}
\renewcommand\footnoterule{%
  \kern-3\p@
  \hrule\@width.4\columnwidth
  \kern2.6\p@}
\newcommand{\E}{\mathbb{E}}
\newcommand{\nt}{\tilde{n}}
\newcommand{\at}{\tilde{a}}
\newcommand{\setC}{\mathcal{C}}
\newcommand{\Rt}{\tilde{R}}
\newcommand{\bt}{\tilde{b}}
\newcommand{\qt}{\tilde{q}}
\begin{document}
%
% paper title
% Titles are generally capitalized except for words such as a, an, and, as,
% at, but, by, for, in, nor, of, on, or, the, to and up, which are usually
% not capitalized unless they are the first or last word of the title.
% Linebreaks \\ can be used within to get better formatting as desired.
% Do not put math or special symbols in the title.
\title{The Impact of Mobile Blockers on Millimeter Wave Cellular Systems}
%\{(Invited Paper)}
%\author{Ish Kumar Jain, Rajeev Kumar, and Shivendra Panwar\\
%Department of Electrical and Computer Engineering, Tandon School of Engineering, NYU, NY 11201, USA\vspace{-10mm}
% \\\textit{(Invited Paper)}\vspace{-3mm}
%}
\author{Ish~Kumar~Jain,~\IEEEmembership{Student~Member,~IEEE,}
Rajeev~Kumar,~\IEEEmembership{Student~Member,~IEEE,} and
Shivendra~Panwar,~\IEEEmembership{Fellow,~IEEE}
\thanks{This work was supported in part by the U.S. National Science Foundation  under  Grant 1527750, NYU Wireless, and by the NY State Center for Advanced Technology in Telecommunications (CATT).}
\thanks{Ish Kumar Jain was with the Department of Electrical and Computer Engineering, Tandon School of Engineering, New York University, Brooklyn, NY 11210, USA (e-mail: ishjain@nyu.edu). Currently, he is with Department of the Electrical and Computer Engineering at the University of California, San Diego, CA 99093, USA (e-mail: ikjain@ucsd.edu)}
\thanks{Rajeev Kumar and Shivendra Panwar are with the Department of Electrical and Computer Engineering, Tandon School of Engineering, New York University, Brooklyn, NY 11201, USA (e-mail: rajeevkr@nyu.edu; panwar@nyu.edu).}% 
\thanks{This paper earlier published on arxiv with a different name as: \textit{``Can Millimeter Wave Cellular Systems provide High Reliability and Low Latency? An analysis of the impact of Mobile Blockers.''} The paper has been accepted at IEEE Journal in Selected Area of Communication and published as: I. K. Jain, R. Kumar and S. Panwar, \emph{``The Impact of Mobile Blockers on Millimeter Wave Cellular Systems,''} in IEEE Journal on Selected Areas in Communications. This version of paper also corrects Figure 3 and Figure 11(b) of the published JSAC version.}
\vspace{-10mm}}
\maketitle
\thispagestyle{empty}
% As a general rule, do not put math, special symbols or citations
% in the abstract
\begin{abstract}
Millimeter Wave (mmWave) communication systems can provide high data rates, but the system performance may degrade significantly due to interruptions by mobile blockers such as humans or vehicles. A high frequency of interruptions and lengthy blockage durations will degrade the quality of the user's experience. A promising solution is to employ the macrodiversity of Base Stations (BSs), where the User Equipment (UE) can handover to other available BSs if the current serving BS gets blocked. However, an analytical model to evaluate the system performance of dynamic blockage events in this setting is unknown. In this paper, we develop a Line of Sight (LOS) dynamic blockage model and evaluate the probability, duration, and frequency of blockage events considering all the links to the UE which are not blocked by buildings or the user's own body. For a dense urban area, we also analyze the impact of non-LOS (NLOS) links on blockage events. Our results indicate that the minimum density of BS required to satisfy the Quality of Service (QoS) requirements of Ultra Reliable Low Latency Communication (URLLC) applications will be driven mainly by blockage and latency constraints, rather than coverage or capacity requirements.

% Millimeter wave (mmWave) communication systems can provide high data rates, but the system performance may degrade significantly due to mobile blockers and the user’s own body. A high frequency of interruptions and long duration of blockage may degrade the quality of experience. Macrodiversity of base stations (BSs) has been considered a promising solution where the user equipment (UE) can handover to other available BSs if the current serving BS gets blocked. However, an analytical model to evaluate the system performance of dynamic blockage events in this setting is largely unknown. In this paper, we consider an open park-like scenario and obtain closed-form expressions for the blockage probability, the expected frequency, and duration of blockage events using stochastic geometry. Our results indicate that the minimum density of BS that is required to satisfy the Quality of Service (QoS) requirements of AR/VR and other Ultra Reliable Low Latency Communication (URLLC) applications is largely driven by blockage events rather than capacity requirements. As an alternative to increasing BS density, placing the BS at a greater height reduces the likelihood of blockage. We present a closed-form expression for the BS density height trade-off that can be used for network planning.
\end{abstract}
\begin{IEEEkeywords}
Macrodiversity, static blockages, mobile blockers, self-blockage, reliability, 5G, mmWave, stochastic geometry, URLLC, QoS, LOS, NLOS, network planning.
\end{IEEEkeywords}

\IEEEpeerreviewmaketitle

\section{Introduction}
\label{sec:intro}

Recent advances in applications ranging from Machine-Type Communication (MTC) to mission-critical services (connected-vehicle-to-everything (V2X), eHealth, Augmented/Virtual Reality (AR/VR), and tactile Internet) are posing tremendous challenges regarding capacity, reliability, latency, and scalability. V2X, eHealth, and MTC may require low capacity, but impose a very tight constraint on the network for ultra-high service reliability (99.999\%), ultra-low latency ($1 - 10$ ms), and low Block Error Rates (BLER) ($10^{-9}-10^{-5}$)~\cite{mohr20165g, DBLP, ATTARVR}. On the other hand, AR/VR applications require a significantly higher capacity (100 Mbps - few Gbps) and low latency (1ms-10ms), while being able to tolerate a relatively higher BLER~\cite{DBLP}. Furthermore, most of these application will require a Service Interruption Time (SIT) of close to 0 ms~\cite{doetschdeliverable}. The stringent requirements of these applications propel the rethinking of 5G cellular network design for Ultra Reliable Low Latency Communication (URLLC) applications. While many proposals to achieve low latency and high reliability have been proposed, such as  edge caching, edge computing, network slicing~\cite{7387263,6871674}, a shorter Transmission Time Interval (TTI), frame structure~\cite{7980747}, and flow queueing with dynamic sizing of the Radio Link Control (RLC) buffer at Data Link Layer~\cite{Kumar2018DynamicCO}, we focus our analysis on issues related to Radio Access Network (RAN) planning to achieve the stringent Quality of Service (QoS) requirements of URLLC applications.

Millimeter wave (mmWave) frequencies are being considered for the 5G RAN due to their abundant bandwidth as compared to traditional sub-6 GHz bands~\cite{Rappaport2013Millimeter}. Thanks to the high bandwidth available at mmWave frequencies, the mmWave RAN can achieve data rates of the order of a few Gbps~\cite{CellularCap-Rap}, suitable for the QoS requirements for AR/VR applications. However, mmWave communication systems are quite vulnerable to blockages due to higher penetration losses and reduced diffraction~\cite{bai2015coverage}. Even the human body can reduce the signal strength by 20 dB~\cite{georgeFading}. Thus, an unblocked Line of Sight (LOS) link is highly desirable for mmWave systems. When a LOS link is blocked, strong Non-Line of Sight (NLOS) links may also be helpful in achieving high reliability. However, NLOS links can also be blocked by mobile blockers. A mobile human blocker can block a link 
for approximately 500 ms~\cite{georgeFading}. The frequent blockages of mmWave links and high blockage durations can be detrimental to URLLC applications.

One potential solution to blockages in the mmWave cellular network is employing a combination of macrodiversity of Base Stations (BSs) and Coordinated Multipoint (CoMP) techniques. These techniques have shown a significant reduction in interference and improvement in reliability, coverage, and capacity in the current sub-6 GHz Long Term Evolution-Advance (LTE-A) deployments~\cite{kim2011analysis}. Furthermore, RANs are moving towards the cloud-RAN (C-RAN) architecture that implements macrodiversity and CoMP techniques by pooling a large number of BSs in a single centralized Base Band Unit (BBU)~\cite{IBMCloudRAN,chen2011c}. As a single centralized BBU handles multiple BSs, the handover and beam-steering time can be reduced significantly~\cite{cloudRAN}. To reduce the SIT to close to 0 ms, the 3$^\text{rd}$ Generation Partnership Project (3GPP) has introduced Make-before-break (MBB) and Random Access Channel (RACH)-less techniques in Release 14~\cite{lte2017evolved}. In the MBB handover procedure, the connection to a serving BS is released only after the handover to the new BS is complete. In the RACH-less technique, the target BS obtains the necessary RACH information from the serving BS. Furthermore, to achieve a close to 0 ms handover latency, a multi-connectivity functional architecture is proposed, where different Radio Access Technologies (RATs) such as 5G New Radio (NR), LTE, and WiFi can be tightly coupled and a single handover command can be sufficient for service migration between different RATs and BSs~\cite{Multi2016Ravanshid}.  

%Furthermore, LTE BSs can help in achieving high reliability by serving as a backup in the case of a blockage of mmWave BSs. However, the LTE and 5G New Radio (NR) stacks must be coupled tightly for low handover delays. Note that the migration of services from 5G NR to LTE in case of blockages may cause up to 30 ms extra delay in the control plane~\cite{samsung2017}. The delay caused due to service migration from 5G NR to LTE can be reduced ($\sim 10$ ms) by considering shorter TTI and processing delay at BSs. Thus, even though LTE and 5G NR interworking may increase the service reliability, it fails to satisfy latency requirements of URLLC applications if the two stacks are not tightly-coupled.       

%Considering different URLLC applications and their QoS requirements, 
The impact of service interruptions due to blockage can also be alleviated by caching the downlink content at the BSs or the network edge for AR/VR applications~\cite{bastug2017toward}. However, caching the content more than about 10 ms may degrade the user experience and may cause nausea to the users particularly for AR applications~\cite{westphal2017challenges}. For applications like MTC, autonomous driving, eHealth, and the tactile Internet, the 5G network must achieve high reliability and low latency. Careful network planning can meet the requirements of URLLC applications.
%An alternative when blocked is to offload traffic to sub-6GHz networks such as 4G, but this needs to be carefully engineered to not overload them.
Therefore, it is important to study the blockage probability and blockage duration to obtain the optimal density of BS that satisfies the desired QoS requirements.

As shown in previous work by Bai, Vaze, and Heath~\cite{bai2012using,bai2014analysis2}, mmWave networks may not provide very high coverage due to static blockages (blockages such as buildings, trees, and other static structures). Thus, we envision the 5G cellular network as the overlap of sub-6 GHz LTE and 5G NR where the coverage holes of the 5G NR would be taken care of by LTE. In this architecture, both 5G and LTE will be collectively responsible for achieving the QoS requirements of URLLC applications and the C-RAN will be responsible for seamless migration of services from 5G NR to LTE (or vice-versa).
Our primary concern is to quantify the QoS requirements of URLLC applications in the mmWave cellular network where blockages may cause a significant problem.
To provide seamless connectivity for URLLC applications, we present an analysis considering key QoS parameters such as the \textit{probability of blockage events, blockage frequency, and blockage duration}.  
The major contributions of this paper are summarized as follows:

%For instance, to satisfy the QoS requirement of mission-critical applications such as AR/VR, tactile Internet, and eHealth applications, 5G cellular networks target a service reliability of 99.999\% \cite{mohr20165g}. 
%\textcolor{red}{ (insert some requirements for AR/VR requirements and give a citation.)}
% and the optimal amount of caching needed.    

% In this paper, we consider an open park scenario where every available BS in the vicinity of the UE are potential serving BSs to the UE.
% We consider a scenario where all the blockers in the system are mobile blockers. Although there are many

%This work presents a simple blockage model for the LOS link using tools from stochastic geometry. In particular, our contributions are as follow:

\begin{enumerate}
\item We provide a stochastic geometry based analytical model of the combined effect of static blockages (User Equipment (UE) blocked by permanent structures such as buildings), dynamic blockage (UE blocked by mobile blockers) and self-blockage (UE blocked by the user's own body) to evaluate the impact on key QoS metrics.
\item We obtain analytical expressions for the probability, frequency, and duration of simultaneous blockage of all BSs in the range of the UE. 
% We also compare two models, one based on stochastic geometry and the other based on a hexagonal cell layout. 
% For urban scenarios, we also consider the impact of static blockages and NLOS paths to obtain analytical expressions for blockage probability and duration.

% To measure the significance of NLOS path on the performance, we used random reflections from the buildings and other objected and modeled the number of reflected paths in the NLOS region around the user.   
\item We verify our dynamic blockage model through Monte-Carlo simulations by considering a random waypoint mobility model for mobile blockers. 
% \item For the case of hexagonal cellular cells, we showed that a well planned cellular architecture could combat blockages in mmWave cellular networks more efficiently. 
\item Finally, we present a case study to find the minimum BS density that is required to satisfy the QoS requirements of URLLC applications for two scenarios: an open park-like scenario and an urban scenario.
We also  analyze the trade-off between BS height and density to satisfy the QoS requirements.
\end{enumerate}

The rest of the paper is organized as follows. The related work is presented in Section \ref{sec:related-work}. The system model is described in Section \ref{sec:system-model}. Section~\ref{sec:analysis} provides an analysis of blockage events and evaluates the key blockage metrics for LOS link. The analysis is extended to consider blockage of both LOS and NLOS links in Section \ref{sec:NLOS}. Section \ref{sec:hex} considers the special case of a hexagonal cell layout. Numerical results are presented in Section~\ref{sec:numResults}.  
%The validation of our theoretical results with simulations and various case studies are presented in Section~\ref{sec:numResults}. 
% evaluates our theoretical results with MATLAB simulations using random waypoint mobility model.
Finally, Section \ref{sec:conclusion} concludes the paper.

\section{Related Work}
\label{sec:related-work}
% The mmWave blockage has been extensively studied for the coverage and capacity analysis. 

% Another approach is to \textbf{collect real-world data} on received signal and fit a blockage model. However, the model is very specific to the considered scenario and may not be generalized well.

% The third approach is to study the impact of blockages through \textbf{analytical approach} such as stochastic geometry ~\cite{wang2017blockage} and Markov models.

A mmWave link may have three kinds of blockages, namely, static, dynamic, and self-blockage. Static blockage due to buildings and permanent structures has been thoroughly studied by Bai, Vaze, and Heath in~\cite{bai2012using} and~\cite{bai2014analysis2} using random shape theory and a stochastic geometry approach for urban microwave systems. The underlying static blockage model is incorporated into the cellular system coverage and rate analysis by Bai \textit{et al.}~\cite{bai2015coverage}. Static blockages from permanent structures may cause a significant reduction in LOS link quality. On the other hand, reflections from such structures can provide sufficient NLOS paths to recover from the LOS path loss. Bai \textit{et al.}~\cite{bai2015coverage} modeled the LOS and NLOS BSs as independent Poisson point processes and showed rate and coverage gain in the mmWave cellular network. A detailed NLOS model is presented in \cite{dong2016reliability} by considering first-order reflections. However, they consider the blockages as squares with a fixed orientation. Akdeniz \textit{et al.}~\cite{CellularCap-Rap} obtained a statistical model of the number of NLOS BSs through measurements in an urban scenario. Note that for an open area such as a public park, static blockages and NLOS paths play a small role.     

The second type of blockage is dynamic blockage due to mobile humans and vehicles (collectively called mobile blockers) which may cause frequent interruptions to the LOS link. Dynamic blockage has been given significant importance by 3GPP in TR 38.901 of Release 14~\cite{3gpptr}. An analytical model in~\cite{gapeyenko2016analysis} considers a single access point, a stationary user, and blockers located randomly in an area. The model in~\cite{wang2017blockage} is developed for a specific scenario of a road intersection using a Manhattan Poisson point process model.
MacCartney \textit{et al.}~\cite{georgeFading} developed a Markov model for blockage events based on measurements on a single BS-UE link. Similarly, Raghavan \textit{et al.}~\cite{raghavan2018statistical} fits the blockage measurements with various statistical models. However, a model based on experimental analysis is generally specific to the measurement scenario and may not generalize well to other scenarios. The authors in~\cite{han20173d} considered a 3D blockage model and analyzed the blocker arrival probability for a single BS-UE pair. 
% Other dynamic blockage models consider a detailed model for single BS-UE link~\cite{eliasi2015stochastic}, based on either measurement or ray-tracing data. 
Studies of spatial correlation and temporal variation in blockage events for a single BS-UE link are presented in~\cite{samuylov2016characterizing} and~\cite{gapeyenko2017temporal}. However, their analytical model is not easily scalable to multiple BSs.
% , important when considering the impact of macrodiversity.

Apart from static and dynamic blockage, self-blockage plays a significant role in mmWave systems performance.
% The \textbf{self-blockage }is also important and is studied in.
The authors of~\cite{abouelseoud2013effect} studied human body blockage through simulation. A statistical self-blockage model is developed in~\cite{raghavan2018statistical} through experiments considering various modes (landscape or portrait) of hand-held devices. The impact of self-blockage on received signal strength is studied by Bai and Heath in~\cite{bai2014analysis} through a stochastic geometry model. They assume the self-blockage due to a user's body blocks the BSs in an area represented by a cone.

All the above blockage models consider the UE's association with a single BS. 
% Blockage models for a single BS-UE link has been developed in . 
% These models try to capture the specific details of the blockage event, but don't provide insight into the blockage events of multiple BSs.
%In this paper, we consider %\textbf{macrodiversity of BSs}, 
Macrodiversity of BSs is considered as a potential solution to alleviate the effect of blockage events in a cellular network.
%such that the UE can connect to any unblocked BS in case the currently serving BS gets blocked.
The authors of~\cite{zhu2009leveraging} and~\cite{zhang2012improving} proposed an architecture for macrodiversity with multiple BSs and showed improvements in network throughput. A blockage model with macrodiversity is developed in~\cite{choi2014macro} for independent blocking and in~\cite{gupta2018macrodiversity} and \cite{gupta2018impact} for correlated blocking. However, they consider only static blockage due to buildings. 
%and do not consider mobile blockers. 

%A detailed methodology for BS deployment is proposed in~\cite{petrov2017dynamic} based on a precise 3D map of the area and a ray-tracing model.  However, an accurate simulation requires the tracking of all the moving objects in an area. This process is computationally intensive and requires specific knowledge of the terrain information. On the other hand, 
The primary purpose of the blockage models in previous papers was to study the coverage and capacity analysis of the mmWave system. 
However, apart from signal degradation, blockage frequency and duration also affect the performance of the mmWave system and are critically important for URLLC applications. 

In our previous work~\cite{jain2018driven}, we presented the effect of dynamic blockage and self-blockage on the LOS link in an open park scenario. The results presented in~\cite{jain2018driven} indicate that a high density of BS is required to satisfy the QoS requirements of URLLC applications. The study motivated us to consider a regular hexagonal cell layout in this paper. We showed that a well-planned cellular network such as the hexagonal layout could combat blockage events more efficiently than a random BS deployment in an open park scenario. Furthermore, in this paper, we also consider an urban setting, where, apart from dynamic and self-blockage, there are static blockages which may block the LOS paths between UE and BSs, while at the same time provide NLOS paths between UE and BSs. %\textcolor{red}{Considering the fact that in both urban environment and indoor environment, there may exist static blockage and NLOS path, we consider applications like  connected-vehicle-to-everything communication (V2X) in cities, AR/VR in dense urban environment, eHealth, indoor applications such as M2M communication, indoor AR/VR, tele-surgery and industrial automation under the umbrella of urban scenario.} 
This paper extends our previous dynamic blockage analysis to include static blockages and derives the blockage probability and duration considering both LOS and NLOS paths.

\begin{table}[!t]
%\vspace{2mm}
\caption{Summary of Notations}
\label{tab:notations}
\centering
\begin{tabular}{|p{3.5em}|p{24.5em}|}\hline
	\textbf{Notation} & \textbf{Description} \\\hline 
     $R$ & LOS Range. \\
     $\Rt$ &  NLOS Range. \\
     $\lambda_T$&BS Density.\\
     $\lambda_B$, $\lambda_S$ & Density of dynamic blockers, and static blockages resp. \\
    $\alpha_i$ & Arrival rate of dynamic blockers.\\
    $\omega$ & Self-blockage angle. \\ 
    $m$& Number of BS in disc $B(o,R)$.\\
    $n$& Number of BS in coverage.\\
    $k$& Number of NLOS links for a given BS.\\
    $\kappa$ & Parameter for the distribution of number of NLOS paths. \\
    $r_i$&UE-BS distance for $i$th BS. \\
    $\setC^{d}$& LOS coverage considering only self-blockage for open park scenario.\\
    $\setC^{LOS}$& LOS Coverage considering static and self-blockage.\\
    $\setC$& LOS and NLOS coverage considering static and self-blockage.\\
    $B^s,B^d$& Indicator for static and dynamic blockage respectively for a single BS-UE link.\\
    $B^{\text{self}}$& Indicator for self-blockage.\\
    $B^{\text{hex}}$& Blockage indicator for hexagonal case. \\
    $B^{LOS}$& LOS Blockage indicator considering static, dynamic, and self-blockage.\\
    $B$& LOS and NLOS blockage indicator considering static, dynamic, and self-blockage.\\
    $\zeta^d$& Blockage frequency considering only dynamic and self-blockage.\\
    $T^{d}$& Blockage duration of LOS paths considering dynamic and self-blockage.\\
    $T^{LOS}$& Blockage duration of LOS paths considering static, dynamic, and self-blockage without NLOS paths.\\
    $T$& Blockage duration considering static, dynamic, and self-blockage with both LOS and NLOS paths.\\
\hline
\end{tabular}
\vspace{-7mm}
\end{table}

%!!!!!!!!!!!!!!!!-----------------------!!!!!!!!!!!!!!!!!!
\section{System Model}
% We discuss the various component of our system model in this section.

\begin{figure}[!t]
	\centering
	\includegraphics[width=0.45\textwidth]{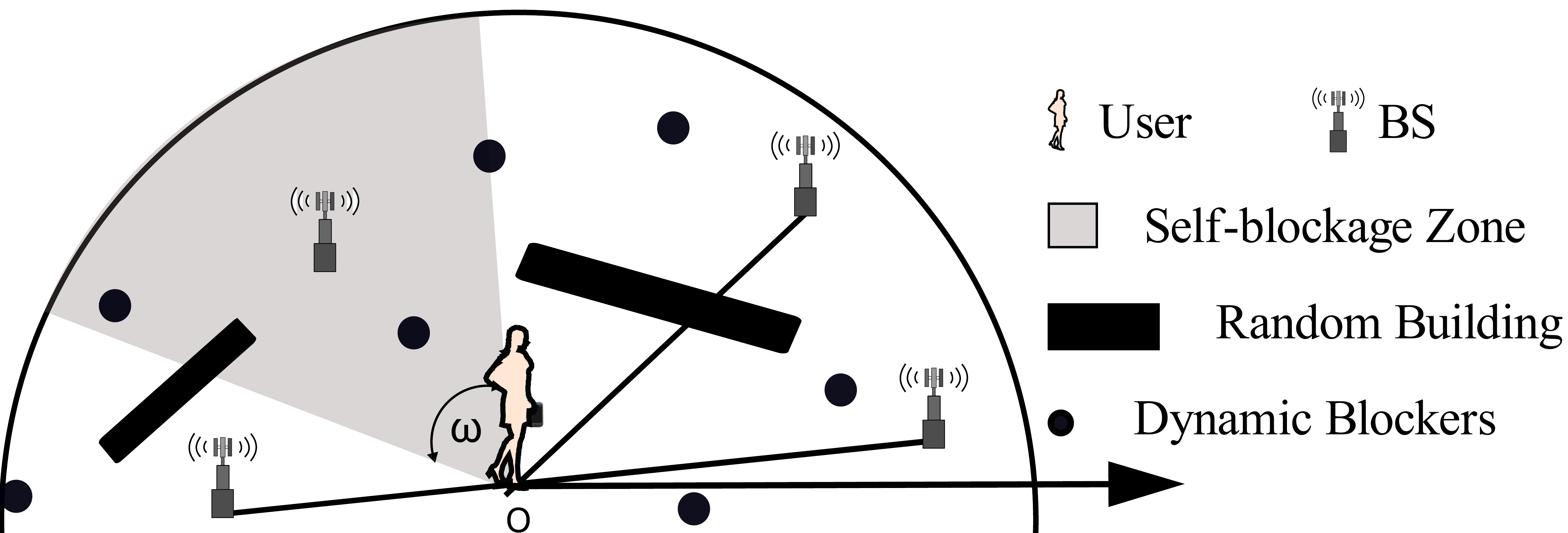}
    \caption{System Model: Service quality to the UE can be degraded by static blockage, self-blockage, and dynamic blockages. A typical UE is located at the center and BSs and blockers are located uniformly in a disc around the UE.}
\label{fig:sysMod}
\vspace{-4mm}
\end{figure}
\label{sec:system-model}
Various components of our system model and the associated assumptions are presented in this section. The system model is shown in Figure \ref{fig:sysMod}.
\subsection{Assumptions} 
\begin{itemize}
\item \textit{BS Model}: The mmWave BS locations are modeled as a homogeneous Poisson Point Process (PPP) with density $\lambda_T$ (a brief summery of notations is provided in Table \ref{tab:notations}).
% A typical UE (UE) is located at origin  $o$. 
% Let $\Lambda=\{\mathbf{x}_i,i\in \mathbb{N}\}$ denotes the BS PPP with location $\mathbf{x}_i$ and distance $r_i$ from the origin $o$.
Consider a disc $B(o,R)$ of radius $R$ and centered around the origin $o$, where a typical UE is located.
We assume that each BS in $B(o,R)$ is a potential serving BS for the UE. 
%\textcolor{blue}{However, at a time the UE can be served by either a single BS or multiple BSs depending upon hardware capability of the UE's device. Further, assume that in case of UE can be served by only one BS, beam-steering and handover time as approximately zero.}
Thus, the number of BSs $M$ in the disc $B(o,R)$ of area $\pi R^2$ follows a Poisson distribution with parameter $\lambda_T\pi R^2$, i.e.,
% Now, consider an event $\mathcal{N}$ that $N$ BSs are inside this disc of area $\pi R^2$, then the PDF of the corresponding random variable $N$ is:
\begin{equation}\label{eqn:poisson}
    P_M(m) = \frac{[\lambda_{T} \pi R^2]^m}{m!}e^{-\lambda_{T} \pi R^2}.
\end{equation}  
% \textcolor{red}{where the symbol `m!' means the factorial of m.}
% Let $P_O$ indicates the probability of UE is in out of coverage of all of BSs, i.e., there is no BS in the disc $B(o,R)$. Then, from~(\ref{eqn:poisson}), we can write
% \begin{equation}\label{eqn:outofCov}
% P_O = P_N(n=0) = e^{-\lambda_{T} \pi R^2}.
% \end{equation}

Given the number of BSs $m$ in the disc $B(o,R)$, we have a uniform probability distribution for BS locations.
The BSs distances $R_i$, $\forall i=1,\ldots,m$, from the UE are independent and identically distributed (i.i.d.) with distribution
% Therefore, the distance distributions are:
\begin{equation}\label{eqn:distribution}
 f_{R_i|M}(r|m) = \frac{2r}{R^2}; \ 0< r\le R, \forall i=1,\ldots,m,
%     f_{r_n|N}(r_n|n) = \frac{2r_n}{R^2}\quad0\le r\le R,
\end{equation}
%\textcolor{blue}{[Reviewer3]: BS model: distribution of distances of BSs are clear, but how are their angles of their position distributed (I assume uniformly over the arc)?}
and the angular positions $\theta_i, \forall \;i=1,\ldots,m$ of the BSs with respect to the x-axis are i.i.d. and follow a uniform distribution in $[0,2\pi]$.

%\textcolor{red}{and the angle $\theta_i$ that the BS makes from x-axis is distributed uniformly in $[0,2\pi]$.}
% distance of BS from the UE and $\mathbb{N}$ is the set of BSs those can satisfy QoS requirements of the UE.

%\begin{figure}[!t]
%	\centering
%	\includegraphics[width=0.3\textwidth]{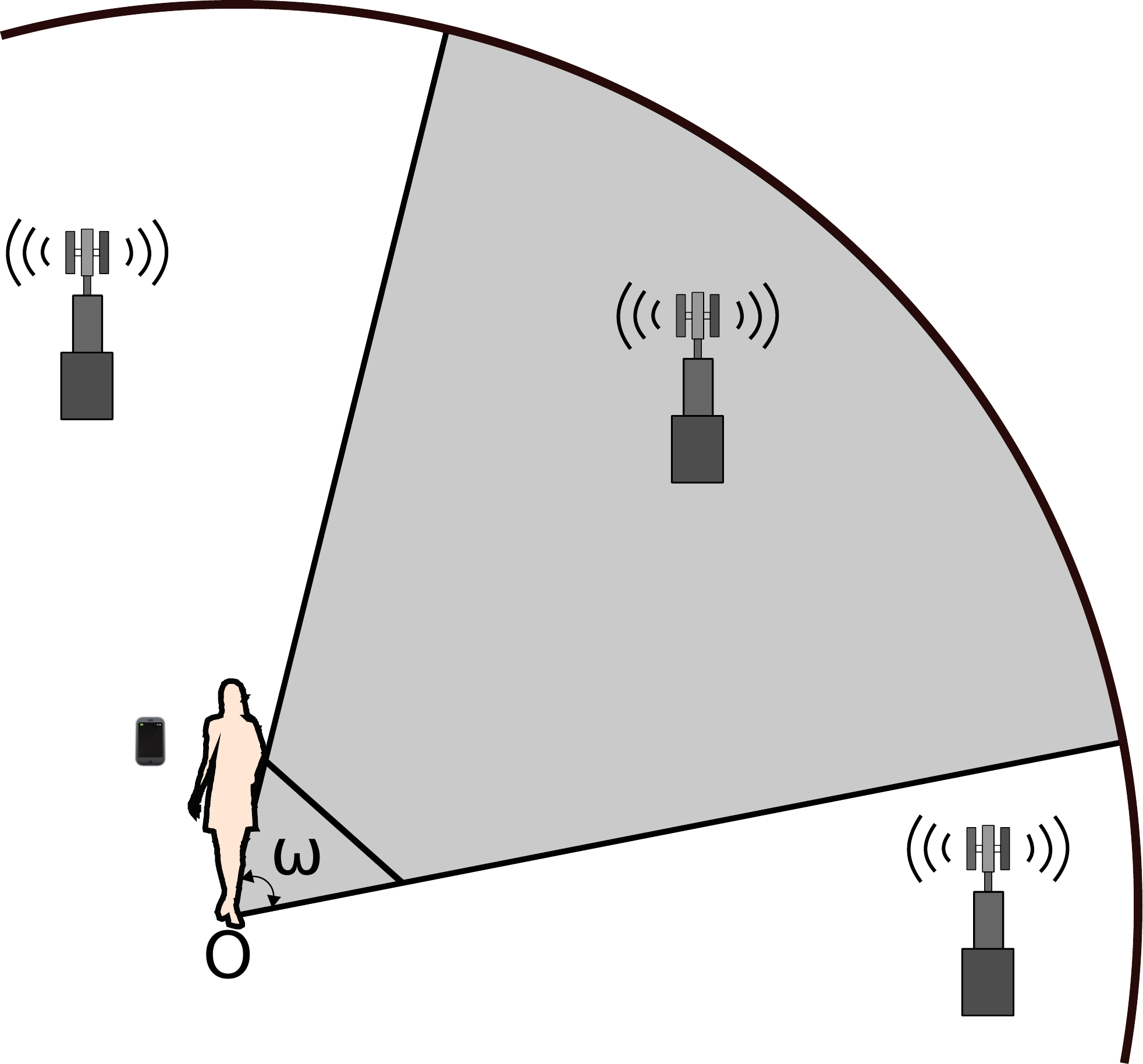}
%	\caption{Self-blockage: Due to the orientation of UE and the user's body, a fraction of mmWave BSs will be blocked. The self-blockage-zone is represented by a sector $\mathcal{S}(o,R,\omega)$ of angle $\omega$.}
%	\label{fig:selfblock}
%	\vspace{-6mm}
%\end{figure}

\item \textit{Static Blockage Model}: Static blockage due to buildings and permanent structures is thoroughly studied by Bai and Heath~\cite{bai2014analysis2}. They modeled static blockages such as buildings using random shape theory. The buildings are considered to be of length $\ell$, and width $w$. The probability\footnote{Note that the notation $P(B^s_i|m,r_i)$ is a compressed version of $P_{B^s_i|M,R_i}(B^s_i|m,r_i)$, which we use for convenience. Similar notations used elsewhere in this paper should be clear based on the context.}
  $P(B^s_i|m,r_i)$ that the $i$th BS-UE link is blocked by a static blockage is calculated in~\cite{bai2014analysis2} as
\begin{equation}\label{eqn:pBstatic}
P(B^s_i|m,r_i) = 1-e^{-(\beta r_i+\beta_0) };\quad \forall i=1,\cdots m,
\end{equation}
where $\beta = \frac{2}{\pi}\lambda_S(\E[\ell]+\E[w])$ and $\beta_0=\lambda_S\E[\ell]\E[w]$; where $\lambda_S$ is the density of static blockage represented in terms of static blockages per km$^2$ (sbl/km$^2$), and $\E[\ell]$ and $\E[w]$ are the expected length and width of buildings respectively. The effect of the height of buildings on static blockage probability is analyzed in~\cite{bai2014analysis2}. For simplicity, we assume the static blockages are higher than the BS. 

\item \textit{Self-blockage Model}: The user blocks a fraction of BSs due to his/her own body. The self-blockage zone is defined as a sector of the disc $B(o,R)$ making an angle $\omega$ at the user's body as shown in Figure~\ref{fig:sysMod}. The orientation of the user's body is uniform in [$0,2\pi$]. We consider a BS is blocked by self-blockage if it lies in the self-blockage zone. %\textcolor{blue}{[R3]: eq (4): consistently, you should give a definition like \(P(B...) = p = 1-\omega/2\pi\)}
Therefore, the probability that a randomly chosen BS is blocked by self-blockage is
\begin{equation}\label{eqn:self1}
P(B^\text{self})=\frac{\omega}{2\pi}.
\end{equation}
For ease of notation, we denote by $p$ the probability that a randomly chosen BS is not blocked by self-blockage:
\begin{equation}\label{eqn:self}
p=1-P(B^\text{self})=1-\frac{\omega}{2\pi}.
\end{equation}
% \textcolor{red}{We assume that the self-blockage zone is not changing with time, i.e., the orientation of user's body is assumed to be a constant.}

\item \textit{Dynamic Blockage Model}: 
Dynamic blockers are distributed according to a homogeneous PPP with density $\lambda_{B}$ represented in terms of blockers per m$^2$ (bl/m$^2$). The blockers\footnote{In the rest of the paper, the term \textit{blocker} implies dynamic blockers, unless otherwise stated.} are moving in the disc $B(o,R)$ with velocity $V$ in a random direction. Further, the arrival process of the blockers crossing the $i$th BS-UE link is Poisson with intensity $\alpha_i$. The blockage duration is independent of the blocker arrival process and is exponentially distributed with mean $1/\mu$. %\textcolor{blue}{[R3]: dyn. blockage model: please state the formula of the distribution and clarify all used parameters}
Thus, the blocking event follows an exponential on-off process with $\alpha_i$ and $\mu$ being the blocking and unblocking rates, respectively. The probability of blockage $P(B^d_{i}|m,r_i)$ of the $i$th BS-UE link due to dynamic blockers can therefore be expressed as follows:
\begin{equation}\label{eqn:AllBSdynamic1}
P(B^d_{i}|m,r_i)=\frac{1/\mu}{1/\alpha_i + 1/\mu}=\frac{\alpha_i}{\alpha_i + \mu}.
\end{equation} 

We will derive an expression for the blocker arrival rate $\alpha_i$ in Lemma \ref{lemma:alphan} in the next subsection. 

\begin{figure}[!t]
    \centering
    \includegraphics[width=0.405\textwidth]{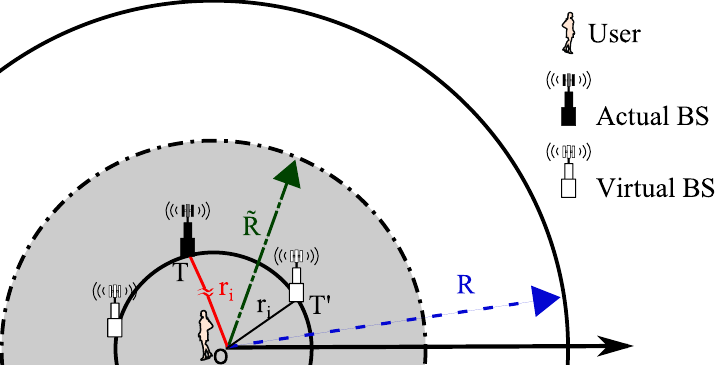}
    \caption{NLOS model: An actual BS located in a disc $B(o,\Rt)$ of radius $\Rt$ (shown as shaded region) may provide strong NLOS signals. For a given BS at a distance $r_i$ from the origin, the corresponding virtual BSs are located on a circle of radius $r_i$ around the origin.}
    \label{fig:NLOS_analysis_approx}
    \vspace{-4mm}
\end{figure}

% A blocker can change the direction of motion at any time. However, we assume that for a small time interval $\Delta t$, the direction of blockers' motion remains consistent. As the future positions of blockers only depend upon their most recent positions, we modeled blocker's arrival process as a Markov process with the blockage rate $\alpha$. 

% \item \textit{Blockage duration Model}: We consider a simple scenario where the blockers are modeled as cylinders of constant height, but with a varying diameter. \textcolor{red}{(Need to rewrite)}
% The cumulative effect of multiple blockers can be approximated by an exponentially distributed blocking period with mean $1/\mu$. Where $\mu$ is the mean service rate of each blockage event. 

%We compared the results of this simple exponential on-off model with the $M/GI/\infty$ model~\cite{gapeyenko2017temporal} in Figure \ref{fig:fracB} and found that for a blocking rate $\mu=1.9$, the two results are similar. This justifies the validity of our simple model.

\item \textit{NLOS Model}: NLOS links can fill the coverage holes in the mmWave cellular system and can enhance reliability. However, the signal degradation, due to reflections from reflectors such as buildings, may limit the number of strong NLOS paths. Since the NLOS link length is always longer than the corresponding LOS link length, we only consider those NLOS paths which correspond to BSs within $\Rt \le R$ distance from the origin. 
The corresponding disc $B(o,\Rt)$ is shown in Figure \ref{fig:NLOS_analysis_approx}. 
The number of NLOS links $K$ for a given BS-UE pair was obtained by Akdeniz \textit{et al.}~\cite{CellularCap-Rap} through measurements in an urban area as
\begin{equation}
K\sim\max\{\text{Poisson}(\kappa),1\},
\end{equation}
where $\kappa$ is obtained empirically. %\textcolor{red}{Note that the symbol $\sim$ denotes distributed according to a given probability distribution. } 
The distribution $P_K(k)$ follows:
\begin{equation}\label{eqn:K}
P_K(k) =
\begin{cases}
0 & \textrm{if } k=0,\\
e^{-\kappa}+e^{-\kappa}\kappa & \textrm{if } k=1,\\
\frac{e^{-\kappa}\kappa^k}{k!} & \textrm{if } k>1.
\end{cases}
\end{equation}
We consider virtual locations of NLOS BSs as images of LOS BS via reflectors. For a given BS at a distance $r_i$ from origin, the corresponding virtual BSs would be at a distance longer than $r_i$. To obtain a lower bound on blockage probability, we consider the best case scenario when the NLOS BSs are located exactly at a distance $r_i$ from origin. For simplicity, we assume the virtual BSs are distributed uniformly in a ring of radius $r_i$, which is the same as the actual BS-UE link length.

% we only consider the locations of virtual BS within a range of $\Rt$ from the UE. For the $i$th BS, we consider the location of its $K$ virtual BSs is uniform  in an annular region as shown in Figure \ref{fig:NLOS_analysis_approx}. Thus the distribution of distance $\rt_{i,j}$ from the UE to the $j$th virtual BS corresponding to the $i$th LOS BS is i.i.d. for $j=1,\cdots,k$, and is given by
% \begin{equation}\label{eqn:rtij}
% \begin{split}
% f(\rt_{i,j}|m,r_i,k) &= \frac{2\rt_{i}}{\Rt^2-r_i^2};\quad \textrm{for}\; r_i<\rt_i<\Rt\\
% &\forall i=1,\cdots, m; \quad \forall j=1,\cdots,k.
% \end{split}
% \end{equation}

\item \textit{Connectivity Model}: We say the UE is blocked when all of the potential serving BSs, actual and virtual, in the disc $B(o, R)$ are blocked simultaneously.

% if all the BSs those can satisfy QoS requirement of the UE are blocked, simultaneously.
\end{itemize}

%\begin{figure}[!t]
%	\centering
%	\includegraphics[width=0.45\textwidth]{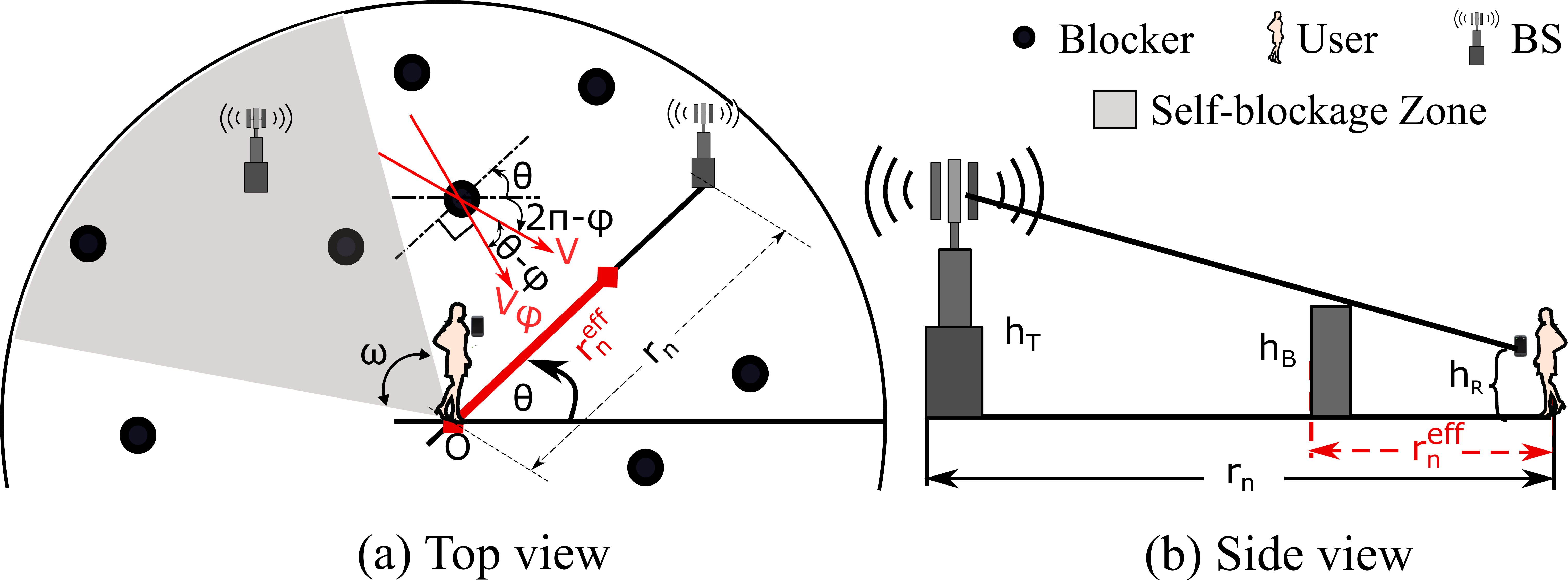}
%	\caption{System Model}
%	\label{fig:sysMod}
%	\vspace{-4mm}
%\end{figure}

 %\begin{figure}
  %   \centering
   %  \includegraphics[width=0.40\textwidth]{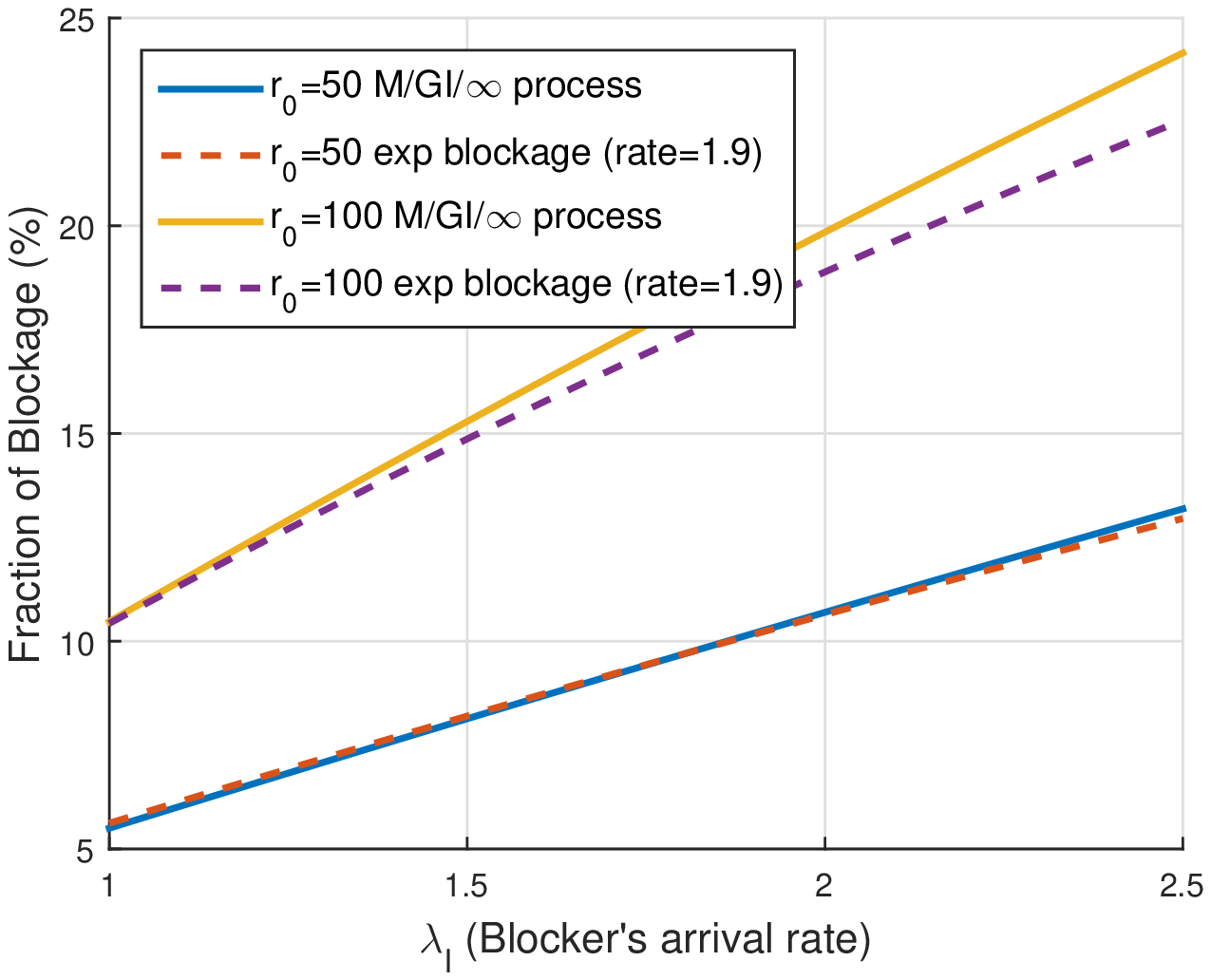}
    % \caption{Fraction of blockage for $M/GI/\infty$ process~\cite{gapeyenko2017temporal} compared with exponential on-off process with $\mu=1.9\;$ and varying $\lambda_B$.}
     %\label{fig:fracB}
     %\vspace{-6mm}
 %\end{figure}

\begin{figure}[!t]
	\centering
	\includegraphics[width=0.5\textwidth]{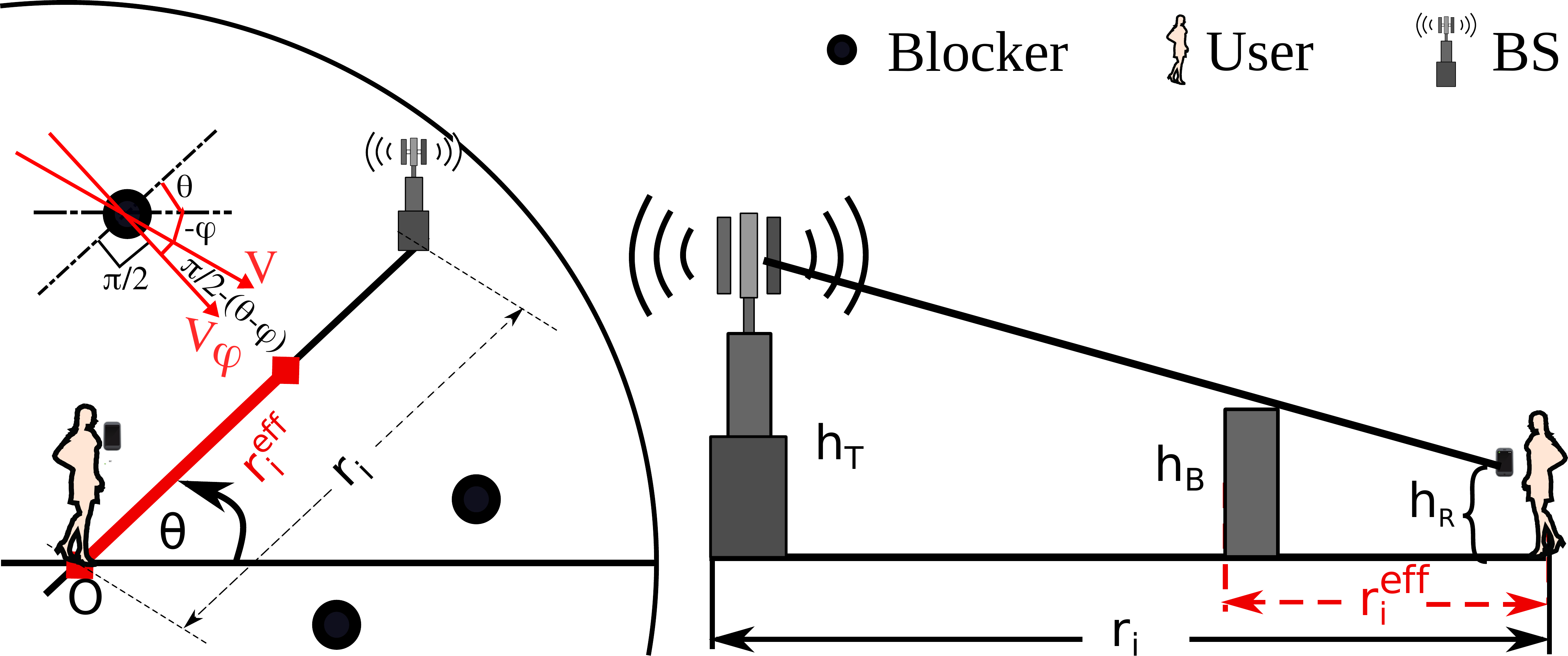}
    \caption{Dynamic blockage model: Blockers are moving with velocity $V$ at a random angle $\varphi$ from the x-axis. A blocker within $r^{\text{eff}}_i$ (a fraction of the BS-UE distance $r_i$) distance away from the UE can cause interruptions to the $i$th BS-UE link.}
    \label{fig:dynblck}
\end{figure}
\subsection{Dynamic Blockage Model for a Single BS-UE Link}
For a sound understanding of the dynamic blockage model, consider a single BS-UE LOS link in Figure~\ref{fig:dynblck}. The 2D distance between the $i$th BS and the UE is $r_i$, and the LOS link makes an angle $\theta$ from the positive x-axis in the azimuth plane. Further, the blockers in the region move with constant velocity $V$ at an angle $\varphi$ with the positive x-axis, where $\varphi$ is distributed uniformly in $[0,2\pi]$. 
Note that only a fraction of blockers crossing the BS-UE link will be blocking the LOS path, as shown in Figure~\ref{fig:dynblck}. The effective link length $r_i^{\text{eff}}$ that is affected by the movement of dynamic blockers is given by,
% We define the blockage zone as a fraction of the BS-UE link such that only a blocker in the blockage zone will block the LOS signal. The length of the blockage zone is evaluated as
\begin{equation}\label{eqn:rieff}
    r_i^{\text{eff}}=\frac{\left(h_B-h_R\right)}{\left(h_T-h_R\right)}r_i,
\end{equation}
where $h_B$, $h_R$, and $h_T$ are the heights of blocker, UE (receiver), and BS (transmitter) respectively. The blocker arrival rate $\alpha_i$ (also called the blockage rate) is evaluated in Lemma~\ref{lemma:alphan}. 
% is called the blockage rate and is evaluated in Lamma 1. 
% The blockage rate is evaluated in Lemma 1.

\begin{lemma}\label{lemma:alphan}
 The blockage rate $\alpha_i$ of the $i$th BS-UE link is
% The rate at which the blockers cross the BS-UE link is defined as blockage rate and is given by
\begin{equation}\label{eqn:SingleBS}
\alpha_{i}=Cr_i,\quad\forall i=1,\cdots,m,
\end{equation}
where $C$ is proportional to blocker density $\lambda_B$ as,
\begin{equation}\label{eqn:C}
C = \frac{2}{\pi}\lambda_{B} V\frac{(h_B-h_R)}{(h_T-h_R)}.
\end{equation} 

\end{lemma}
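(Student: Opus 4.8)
The plan is to compute the expected number of blockers that cross the $i$th BS--UE link per unit time by a standard kinematic/geometric argument for a Poisson field of moving blockers. First I would set up the relevant geometry: the blockers live in a homogeneous PPP of density $\lambda_B$ in the plane, each moving with speed $V$ in a direction $\varphi$ that is uniform on $[0,2\pi]$ and independent across blockers. The link of interest is a segment whose projected ``effective'' length (the part at blocker height that can actually occlude the beam) is $r_i^{\mathrm{eff}}$, given by \eqref{eqn:rieff}. A crossing event is a blocker trajectory that intersects this segment; by the refresh/ergodic property of the Poisson field the arrival process of such crossings is itself Poisson, so it suffices to compute its intensity $\alpha_i$.

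Next I would condition on the blocker's direction $\varphi$ and use the classical ``swept-area'' argument: in a small time $\delta t$, the blockers that newly cross a fixed segment of length $r_i^{\mathrm{eff}}$ moving at relative angle $\psi$ to the segment are exactly those lying in a parallelogram of base $r_i^{\mathrm{eff}}$ and height $V\,\delta t\,|\sin\psi|$, hence area $r_i^{\mathrm{eff}} V\,\delta t\,|\sin\psi|$. Multiplying by $\lambda_B$ and dividing by $\delta t$ gives a direction-conditional rate $\lambda_B V r_i^{\mathrm{eff}} |\sin\psi|$. Averaging $|\sin\psi|$ over $\psi$ uniform on $[0,2\pi]$ yields $\tfrac{1}{2\pi}\int_0^{2\pi}|\sin\psi|\,d\psi = \tfrac{2}{\pi}$, so
\begin{equation*}
\alpha_i \;=\; \frac{2}{\pi}\,\lambda_B V\, r_i^{\mathrm{eff}}
\;=\; \frac{2}{\pi}\,\lambda_B V\,\frac{h_B-h_R}{h_T-h_R}\, r_i \;=\; C r_i,
\end{equation*}
with $C$ as in \eqref{eqn:C}. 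I would then note that, since the segment's own orientation $\theta$ is fixed while $\varphi$ is uniform, there is no loss in taking the segment along a coordinate axis, so the averaging over relative angle is exactly the averaging over $\varphi$; this is why only $\tfrac{2}{\pi}$ appears and the result is independent of $\theta$.

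The main obstacle — really the only subtle point — is justifying that the relevant quantity is $r_i^{\mathrm{eff}}$ rather than the full 2D distance $r_i$, i.e. accounting for the 3D geometry (blocker of height $h_B$ occluding a link between heights $h_R$ and $h_T$). Here one argues that a blocker occludes the LOS ray iff its horizontal position lies within the sub-segment whose endpoint is where the ray reaches height $h_B$; by similar triangles that sub-segment has length $\frac{h_B-h_R}{h_T-h_R} r_i$, which is \eqref{eqn:rieff}. A secondary (but routine) point is confirming the Poisson character of the crossing stream and that double-counting of blockers that enter and leave within $\delta t$ is negligible as $\delta t\to 0$; both follow from the independence properties of the PPP and a standard Palm-calculus / line-crossing argument. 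Everything else is the elementary integral $\int_0^{2\pi}|\sin\psi|\,d\psi = 4$.
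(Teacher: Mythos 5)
Your proposal is correct and follows essentially the same swept-area argument as the paper: conditioning on the blocker's direction, counting blockers in the region of area $r_i^{\mathrm{eff}}V\,\delta t\,|\sin\psi|$ swept across the effective link segment, and averaging the $|\sin\psi|$ factor over the uniform direction to obtain $\tfrac{2}{\pi}$. The only cosmetic difference is that the paper integrates $\sin(\theta-\varphi)$ over the half-range $[\theta-\pi,\theta]$ with an explicit factor of $2$ for arrivals from either side of the link, whereas you integrate $|\sin\psi|$ over the full circle; the two computations are identical.
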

\begin{proof}
Consider a blocker moving at an angle $(\theta-\varphi)$ relative to the BS-UE link (See Figure~\ref{fig:dynblck}). The component of the blocker’s velocity perpendicular to the BS-UE link is $V_\varphi = V\sin(\theta-\varphi)$, where $V_\varphi$ is positive when $(\theta-\pi)<\varphi<\theta$. 
% angle of motion of the blocker perpendicular to the line of connectivity of UE and BS is given by $(\theta-\varphi)$. Thus, the component of blocker’s velocity perpendicular to the BS-UE link is $V_\varphi = V\sin(\theta-\varphi)$, where $V_\varphi$ is positive when $(\theta-\pi)<\varphi<\theta$. 
% i.e., as relative height of BSs can be quite high compared to blocker, only a fraction of blockers crossing the link close to the UE would be able to block LOS link. Equivalently, we can say that if a blocker cross the LOS path at a distance $r_n^{eff}=\frac{r_n\left(h_B-h_R\right)}{\left(h_T-h_R\right)}$, it will block the UE~\cite{gapeyenko2017temporal}. Here, $h_B$ is the height of blocker, $h_R$ is the height of UE or receiver, and $h_T$ is the height of transmitter or BSs. 
Next, we consider a rectangle of length $r_i^{\text{eff}}$ and width $V_\varphi\Delta t$. The blockers in this area will block the LOS link over the interval of time $\Delta t$. Note there is an equivalent area on the other side of the link. Therefore, the average frequency of blockage is $2\lambda_{B}r_i^{\text{eff}}V_\varphi\Delta t = 2\lambda_{B} r_i^{\text{eff}}V\sin(\theta-\varphi)\Delta t$. Thus, the frequency of blockage per unit time is $2\lambda_{B} r_i^{\text{eff}}V\sin(\theta-\varphi)$. Taking an average over the uniform distribution of $\varphi$ (uniform over $[0,2\pi]$), we get the blockage rate $\alpha_i$ as follows:
\begin{equation}\label{eqn:SingleBS1}
\begin{split}
\MoveEqLeft\alpha_{i} = 2\lambda_{B} r_i^{\text{eff}}V\int_{\varphi = \theta-\pi}^{\theta}\sin(\theta-\varphi)\frac{1}{2\pi}\,d\phi \\
\MoveEqLeft \qquad \quad =\frac{2}{\pi}\lambda_{B} r_i^{\text{eff}}V = \frac{2}{\pi}\lambda_{B} V\frac{(h_B-h_R)}{(h_T-h_R)}r_i.
\end{split}
\end{equation}
This concludes the proof. 
\end{proof}
% \textcolor{red}{Note that each blockage event will initiate the handover processes, thus the UE handover rate  }

Following~\cite{gapeyenko2017temporal}, we model the blocker arrival process as Poisson with parameter $\alpha_i$ blockers/sec (bl/s).
Note that there can be more than one blocker simultaneously blocking the LOS link. The overall blocking process has been modeled in~\cite{gapeyenko2017temporal} as an alternating renewal process with alternate periods of blocked/unblocked intervals, where the distribution of the blocked interval is obtained as the busy period distribution of a general $M/GI/\infty$ queuing system. Here $M$ stands for Poisson arrival process of blockers and $GI$ denotes a general independent distribution of blockage durations (service time), which depends on the velocity and direction of the arrival of the blockers. Since there can be many independent  blockage events overlapping in time, we assume infinite servers.
% Number of servers is assumed to be infinite, i.e., when a blocker enters the blockage zone, it immediately starts to block the link (start getting served, i.e., the blockage duration due to a particular blocker mobility is not dependent upon another).

For mathematical simplicity, we assume the blockage duration of a single blocker is exponentially distributed with parameter $\mu$. Thus we can model the blocker arrival process as an $M/M/\infty$ queuing system. We further approximate the overall blocking process as an alternating renewal process with exponentially distributed periods of blocked and unblocked intervals with parameters $\alpha_i$ and $\mu$ respectively. 
Equivalently, the simultaneous blocking by two or more blockers is assumed to be a negligible probability event. 
This approximation works for a wide range of blocker densities, as is verified through simulations in Section \ref{sec:numResults}.
% This approximation is justified in Appendix \ref{app:mmoo_approx}             

% In general, any number of blockers may simultaneously block the BS-UE LOS link and the overall blockage period is evaluated as the busy period of $M/M/\infty$ system~\cite{gapeyenko2017temporal}.  

% However, for mathematical simplicity, we approximate the $M/M/\infty$ system with an alternating renewal process with exponentially distributed block and unblock interval. In the alternating renewal process each block and unblock event is followed by each other. At the same time, each block and unblock duration is exponential in length. For this to happen, we require low probability of simultaneous blockage of a given BS-UE link. 
The blocking event of a BS-UE link follows an on-off process with $\alpha_i$ and $\mu$ as blocking and unblocking rates, respectively. The probability of blockage $P(B^d_{i}|m,r_i)$ of the $i$th BS-UE link due to dynamic blockers follows:
\begin{equation}\label{eqn:AllBSdynamic}
P(B^d_{i}|m,r_i)=\frac{\alpha_i}{\alpha_i + \mu} = \frac{\frac{C}{\mu}r_i}{1+\frac{C}{\mu}r_i},\quad\forall i=1,\cdots,m,
\end{equation} 
where $C$ is proportional to the blocker density $\lambda_B$ (see (\ref{eqn:C})). 
%For readers interested in obtaining state probabilities other than blockage, a Markov chain analysis for dynamic blockages is presented in Appendix \ref{app:Markov_NBS}.

In the next section, we consider a generalized LOS blockage model considering all three kind of blockages. We assume the UE keeps track of all available BSs using beam-steering and handover protocols, which we assume can be achieved with a SIT of 0 ms, as discussed in Section I.
\section{Generalized LOS Blockage Model}\label{sec:analysis}
%\textcolor{blue}{[R3]: Please add some short overview of the section before A. Coverage Probability in order to provide some guidance.} 
% We present a generalized LOS blockage model considering the impact of static blockages such as buildings, dynamic blockers such as people and vehicles and self-blockage by user's own body. 
In this Section, we evaluate the impact of mobile blockers such as people and vehicles on the direct LOS BS-UE link. We consider macrodiversity of BS and evaluate the blockage probability given that at least one of the BSs-UE links is available. A link is said to be available if it is not permanently obstructed by static blockages such as buildings and the user's own body.
%In the open park scenario, static blockers such as buildings and permanent structures do not exists. Thus, the mmWave links do not suffer static blockages and NLOS paths do not exists at the same time. Thus, for open park scenario, we present only the blockages of LOS paths in this section. 
\subsection{LOS Coverage Probability}
We say that the UE is in coverage if there is at least one BS not blocked by either static blockage or by self-blockage.
% Let there be $N$ BSs within the range of the UE that are not blocked by static and self-blockage. 
 \begin{lemma}\label{lemma:PN}
The distribution of the number of BSs ($N$) which are not blocked by static or self-blockage follows a Poisson distribution with parameter $pq\lambda_{T} \pi R^2$, i.e.,
% outside the self-blockage zone and in the disc $B(o,R)$ is
\begin{equation}\label{eqn:PN}
P_N(n) = \frac{[pq\lambda_{T} \pi R^2]^n}{n!}e^{-pq\lambda_{T} \pi R^2},
\end{equation}
where, \begin{equation}\label{eqn:p}
p= 1-\omega/2\pi,\; \text{and}, \;q = \frac{2e^{-\beta_0}}{\beta^2R^2}\left[1-(1+\beta R)e^{-(\beta R)}\right].
\end{equation} 
% is the probability that a randomly chosen BS lies outside the self-blockage zone in the disc $B(o,R)$ (the area of such region would be $p\pi R^2$).
\end{lemma}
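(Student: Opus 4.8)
The plan is to combine the static-blockage and self-blockage thinning operations with the thinning theorem for Poisson point processes. Starting from the homogeneous PPP of BSs with density $\lambda_T$, restricted to the disc $B(o,R)$, I would regard ``being kept'' (i.e.\ not blocked by static blockage and not blocked by self-blockage) as an independent thinning whose retention probability depends only on the BS's position. Concretely, for a BS at distance $r$ from the origin, the probability it survives static blockage is $e^{-(\beta r+\beta_0)}$ by \eqref{eqn:pBstatic}, and the probability it survives self-blockage is $p = 1-\omega/2\pi$ by \eqref{eqn:self}; since self-blockage depends on the random orientation of the user's body (independent of everything else) and static blockage is assumed independent across links, the two thinnings are independent of each other and of the underlying PPP. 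Hence the retained process $N$ is again Poisson, with intensity measure obtained by integrating the position-dependent retention probability against the original intensity.

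The key computation is therefore the mean of the thinned process:
\begin{equation}\label{eqn:meanN}
\E[N] = \lambda_T \int_{B(o,R)} p\, e^{-(\beta \|x\|+\beta_0)}\,dx
      = p\,\lambda_T \int_0^{2\pi}\!\!\int_0^R e^{-(\beta r+\beta_0)}\, r\,dr\,d\theta.
\end{equation}
The angular integral contributes a factor $2\pi$. For the radial integral I would use $\int_0^R r e^{-\beta r}\,dr = \frac{1}{\beta^2}\bigl[1-(1+\beta R)e^{-\beta R}\bigr]$, which together with the $e^{-\beta_0}$ prefactor gives exactly $\pi R^2 q$ with $q = \frac{2 e^{-\beta_0}}{\beta^2 R^2}\bigl[1-(1+\beta R)e^{-\beta R}\bigr]$ as in \eqref{eqn:p}. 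Thus $\E[N] = pq\,\lambda_T \pi R^2$, and the thinning theorem yields the claimed Poisson law \eqref{eqn:PN}.

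An equivalent route, staying closer to the notation of the paper, is to condition on $M=m$: given $m$, the BSs are i.i.d.\ with radial density $2r/R^2$ on $(0,R]$, so the probability a given BS is retained is $q_{\mathrm{ret}} = p\int_0^R e^{-(\beta r+\beta_0)}\frac{2r}{R^2}\,dr = pq$ (the same integral as above, now normalized by $\pi R^2$). Each BS is retained independently with probability $pq$, so $N \mid M=m \sim \mathrm{Binomial}(m, pq)$, and mixing a Binomial$(M,pq)$ over a Poisson$(\lambda_T\pi R^2)$ variable $M$ gives a Poisson$(pq\,\lambda_T\pi R^2)$ variable — the standard Poisson-thinning identity, provable by a short probability-generating-function argument.

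The only real subtlety — the step I would be most careful about — is the independence claim needed to invoke thinning: static blockage of distinct BS-UE links must be (at least modeled as) independent, and self-blockage, although it is a single global event determined by the body orientation, must be handled correctly. Here it works out because whether a BS falls in the self-blockage cone is, for a fixed BS position and a uniformly random independent orientation $\varphi\sim\mathrm{Unif}[0,2\pi]$, an independent Bernoulli$(\omega/2\pi)$ event across BSs with distinct angular positions (the angles $\theta_i$ are i.i.d.\ uniform, so ties have probability zero). Once this independence is granted, the rest is the radial integral, which is routine.
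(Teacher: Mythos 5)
Your proposal is correct and follows essentially the same route as the paper's proof in Appendix~\ref{app:proof_PNn}: the paper computes the per-BS retention probability $pq$ via the same radial integral, observes that $N\mid M=m$ is Binomial$(m,pq)$, and carries out the Poisson mixing by summing the series explicitly, which is exactly your second route (your first route via the PPP thinning theorem is the same argument in point-process language). Your careful justification of why the single global body orientation still yields independent Bernoulli$(\omega/2\pi)$ indicators across BSs is a point the paper simply assumes, so that remark is a welcome addition rather than a divergence.
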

\begin{proof}
See Appendix \ref{app:proof_PNn}.
\end{proof}
\begin{corollary}
Let $\setC^{LOS}$ denote the LOS coverage event that represents that the UE has at least one serving BS in the disc $B(o,R)$ and is not blocked by static or self-blockage. The probability of this LOS coverage event $\setC^{LOS}$ is calculated as
\begin{equation}\label{eqn:coveragep}
\begin{split}
P(\setC^{LOS}) &=P_N(n\ne 0)=1-P_N(0)\\&=1- e^{-pq\lambda_T\pi R^2},
\end{split}
\end{equation}
where $p$ and $q$ are defined in (\ref{eqn:p}).
\end{corollary}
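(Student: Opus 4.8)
The plan is to recognize that the LOS coverage event $\setC^{LOS}$ is, by its very definition, the event $\{N \neq 0\}$, where $N$ is the number of BSs in the disc $B(o,R)$ that are blocked by neither static blockage nor self-blockage. The phrase ``the UE has at least one serving BS in $B(o,R)$ that is not blocked by static or self-blockage'' is exactly the assertion $N \ge 1$, so the corollary will follow immediately from the distribution of $N$ obtained in Lemma~\ref{lemma:PN}.

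Concretely, I would first invoke Lemma~\ref{lemma:PN}, which states that $N$ is Poisson with parameter $pq\lambda_{T}\pi R^2$, so that evaluating the pmf at zero gives $P_N(0) = e^{-pq\lambda_{T}\pi R^2}$. Since $\{N = 0\}$ and $\{N \neq 0\}$ are complementary events, it follows that
\begin{equation*}
P(\setC^{LOS}) = P(N \neq 0) = 1 - P_N(0) = 1 - e^{-pq\lambda_{T}\pi R^2},
\end{equation*}
with $p$ and $q$ as in (\ref{eqn:p}). This is precisely the claimed expression, and the derivation amounts to a single evaluation of the Poisson pmf at zero.

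There is essentially no obstacle here: the only point worth stating explicitly is the set-theoretic identification $\setC^{LOS} = \{N \ge 1\}$, which makes clear that all of the probabilistic content — in particular the thinning of the BS point process by the independent static-blockage and self-blockage indicators — has already been absorbed into Lemma~\ref{lemma:PN}. No further conditioning or independence argument is required beyond citing that lemma.
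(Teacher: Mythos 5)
Your proof is correct and matches the paper's own reasoning exactly: the corollary is an immediate consequence of Lemma~\ref{lemma:PN}, obtained by identifying $\setC^{LOS}$ with $\{N\neq 0\}$ and evaluating the Poisson pmf at zero, which is precisely the chain $P(\setC^{LOS})=1-P_N(0)=1-e^{-pq\lambda_T\pi R^2}$ the paper writes into the statement itself. No gap; your explicit remark that all the probabilistic work is absorbed into the lemma is accurate.
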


\vspace{-4mm}
\subsection{LOS Blockage Probability}
% Given that there are $n$ BSs in the communication range of the UE and are not blocked by any buildings or the user's body, they can still get blocked by mobile blockers. The blocking event of these $n$ BS-UE links is assumed to be independent on-off processes with $\alpha_i$ and $\mu$ as blocking and unblocking rates, respectively. 
% The probability of blockage of the $i$th BS-UE link is $\alpha_i/(\alpha_i + \mu)$. 
Our objective is to develop a blockage model for the mmWave cellular system where the UE can connect to any of the potential serving BSs. In this setting, a blockage occurs when all the BS-UE links are blocked simultaneously. 
% Since we assume independence, the probability that all are in blockage is their product.
We define an indicator random variable $B^{LOS}$ that indicates the LOS blockage of all available BSs in the range of the UE. 
% The blockage occurs when the BS-UE link is blocked by either static, dynamic or self-blockage. 
We obtain the blockage probability by utilizing the independence of static, dynamic, and self-blockage events. 
A BS is considered blocked if it is blocked by either static blockage or dynamic blockage or self-blockage. The LOS blockage probability $P(B_i^{LOS}|m,r_i)$ of the $i$th BS-UE link is conditioned on the number of BSs $m$ in the disc $B(o,R)$ and the BS-UE distance $r_i$, and is given by:
\begin{equation}\label{eqn:blProbLOSstep1}
\begin{split}
P(B_i^{LOS}|m,r_i) &= 1-\left(1-P(B^{\text{self}})\right)\left(1\!-\!P(B^{s}_i|m,r_i)\right)\\&\qquad\qquad\qquad\qquad\quad\times\left(1\!-\!P(B^{d}_i|m,r_i)\right)\\
&=1-pe^{-(\beta r_i+\beta_0)}\frac{1}{1+\frac{C}{\mu}r_i},
\end{split}
\end{equation}

where we use the expressions for static and dynamic blockage probability from (\ref{eqn:pBstatic}) and (\ref{eqn:AllBSdynamic}) respectively. The blockage events of multiple BSs may be correlated based on the location of dynamic blockers. For instance, a blocker closer to the user may simultaneously block multiple BSs as compared to a blocker farther away from the the user. However, in this paper, we assume independent blockage of all the BS-UE links and leave the correlation analysis for future work. Denoting the set of distances $\{r_1, r_2,\cdots,r_m\}$ by a compressed form $\{r_i\}$, we evaluate the probability of simultaneous blockage of all the LOS BS-UE links, i.e., $P(B^{LOS}|m,\{r_i\})$, as follows:
% \footnote{We use $\{r_i\}$ to denote the set $\{r_1,r_2,\cdots,r_m\}$. }
\begin{equation}\label{eqn:blProbLOSstep2}
\begin{split}
P(B^{LOS}|m,\{r_i\})
&= \prod_{i=1}^m P(B_i^{LOS}|m,r_i)\\
&=\prod_{i=1}^m\left(1-pe^{-(\beta r_i+\beta_0)}\frac{1}{1+\frac{C}{\mu}r_i}\right).
\end{split}
\end{equation}

% which are not blocked by buildings and the user's body and the link lengths $\{r_i\}$ for $i=1,\cdots,n$.
% Since we assume independent blockage of BS-UE links, $P(B_d|n,\{r_i\})$ can be written as the product of individual blockage probabilities as
% \begin{equation}\label{eqn:AllBS}
%         P(B_d|n,\{r_i\})= \prod_{i=1}^n\frac{\alpha_i/\mu}{1+\alpha_i/\mu}   =\prod_{i=1}^n\frac{\frac{C}{\mu}r_i}{1+\frac{C}{\mu}r_i},
% \end{equation} 
% where $C$ is defined in (\ref{eqn:C}). \textcolor{red}{The dynamic blockage probability in (\ref{eqn:AllBS}) can also be obtained by an extensive Markov Chain analysis for the dynamic blockage events in Appendix \ref{app:Markov_NBS}.}
% Note that the notation $P(B_d|n,\{r_i\})$ is a compressed version of $P_{B_d|N,\{R_i\}}(B_d|n,\{r_i\})$, which we use for convenience.

% where the random variables are represented in capitals and their realizations in the corresponding small letters. We keep the short notation throughout the paper for simplicity.
% where event $\mathcal{B}$ denote the blockage of all $N$ available BSs and the expression for $\alpha_i$ is taken from (\ref{eqn:SingleBS}).
% From the ~(\ref{eqn:AllBS}), we can observe that the probability of simultaneous blockage of all BSs is a function of the number of BSs ($N$) and the distance between UE and BS ($r_n$). 

% \subsection{Unconditional Blockage Probability}
To obtain the marginal LOS blockage probability $P(B^{LOS})$, we first evaluate the conditional LOS blockage probability $P(B^{LOS}|m)$ by taking the average of $P(B^{LOS}|m,\{r_i\})$ over the distribution of distances $\{r_i\}$ and then find $P(B^{LOS})$ by taking the average of $P(B^{LOS}|m)$ over the distribution of $m$.
% \begin{equation}\label{eqn:pBgivenN}
% \begin{split}
% P(B^{LOS}|m)&=\!\!\int_{r_1}\!\!\!\cdots\!\int_{r_m} \prod_{i=1}^m \;P(B_i^{LOS}|m,r_i)\times\\&\qquad\qquad\qquad f(\{r_i\}|m)\; dr_1\cdots dr_m
%  \end{split}
% \end{equation}
% \begin{equation}\label{eqn:ep1n}
% \begin{split}
%   P(B^{LOS})=\sum_{m=0}^{\infty} P(B^{LOS}|m)P_M(m).
% \end{split}
% \end{equation}

% We obtain $P(B^{LOS}|m) = (1-a)^n$ in Appendix \ref{app:th_prob}, where $a$ is defined in (\ref{eqn:a}). The final analytical expression for the marginal blockage probability is given in Theorem \ref{th1}.

\noindent
\begin{theorem} \label{th1} The  marginal LOS blockage probability and the conditional LOS blockage probability given LOS coverage (\ref{eqn:coveragep}) is given by:
% The probability of simultaneous blockage of all BSs (including out of coverage) in the disc around UE of radius $R$ as defined in (\ref{eqn:ep1n}) with the density of BSs $\lambda_{T}$ per unit area  is:

\begin{equation}\label{eqn:expblockage}
        P(B^{LOS}) = e^{-a p\lambda_T\pi R^2},
\end{equation} 
\begin{equation}\label{eqn:expblockagecond}
       P(B^{LOS}|\setC^{LOS}) = \frac{e^{-a p\lambda_T\pi R^2} - e^{-pq\lambda_T\pi R^2}}{1-e^{-pq\lambda_T\pi R^2}},
\end{equation} 
where,
\begin{equation}\label{eqn:a}
\begin{split}
%       a=\frac{2\mu}{RC}-\frac{2\mu^2}{R^2C^2} \log\left(1+\frac{RC}{\mu}\right).
a=\int_r\frac{e^{-(\beta r+\beta_0)}}{1+\frac{C}{\mu}r}\frac{2r}{R^2} dr
   \end{split}
\end{equation} 
can be computed by numerical integration.
Note that $C$ is proportional to blocker density $\lambda_B$ as shown in (\ref{eqn:C}). Also, $p$ and $q$ are defined in (\ref{eqn:p}) and are functions of the self-blockage angle $\omega$, and static blockage density $\lambda_S$, respectively.
% \begin{}\label{eqn:expblockage}
%         P(B) = e^{-\frac{2\pi R\mu\lambda_{T}}{C}}\left(1+\frac{CR}{\mu} \right)^{\frac{2\pi\mu^2\lambda_{T}}{C^2}}.
% \end{} 
\end{theorem}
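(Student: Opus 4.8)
The plan is to compute $P(B^{LOS})$ by conditioning first on the number $M$ of BSs in $B(o,R)$ and on the distances $\{r_i\}$, then averaging out $\{r_i\}$ and finally $M$, using the product form of $P(B^{LOS}\mid m,\{r_i\})$ in (\ref{eqn:blProbLOSstep2}) together with the i.i.d.\ structure of the $R_i$'s and the Poisson law of $M$.

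First I would average (\ref{eqn:blProbLOSstep2}) over the distances. Given $M=m$, the $R_i$ are i.i.d.\ with density $2r/R^2$ and the right-hand side of (\ref{eqn:blProbLOSstep2}) is a product of functions of the individual $R_i$'s, so the conditional expectation factors:
\[
P(B^{LOS}\mid m)=\prod_{i=1}^{m}\E\!\left[1-p\,e^{-(\beta R_i+\beta_0)}\tfrac{1}{1+\frac{C}{\mu}R_i}\right]=(1-pa)^{m},
\]
where $a$ is precisely the integral in (\ref{eqn:a}) over $0<r\le R$; one checks $\E\big[e^{-(\beta R_i+\beta_0)}/(1+\tfrac{C}{\mu}R_i)\big]=a$ by substituting the density $f_{R_i|M}$. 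Then I would average over $M\sim\mathrm{Poisson}(\lambda_T\pi R^2)$ using (\ref{eqn:poisson}):
\[
P(B^{LOS})=\sum_{m\ge0}(1-pa)^{m}\frac{(\lambda_T\pi R^2)^{m}}{m!}e^{-\lambda_T\pi R^2}=e^{-\lambda_T\pi R^2}\,e^{(1-pa)\lambda_T\pi R^2}=e^{-ap\lambda_T\pi R^2},
\]
which is (\ref{eqn:expblockage}) --- essentially the probability generating function of a Poisson variable evaluated at $1-pa$.

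For the conditional statement (\ref{eqn:expblockagecond}) I would use Bayes' rule together with the observation that the complement of the coverage event $\setC^{LOS}$, namely $\{N=0\}$, is contained in the blockage event $B^{LOS}$: if no BS survives static and self-blockage then every BS is LOS-blocked (and $B^{LOS}$ holds vacuously when $M=0$). Hence $B^{LOS}\cap\setC^{LOS}$ differs from $B^{LOS}$ exactly by $\{N=0\}$, so by (\ref{eqn:expblockage}) and Lemma~\ref{lemma:PN},
\[
P(B^{LOS}\cap\setC^{LOS})=P(B^{LOS})-P(N=0)=e^{-ap\lambda_T\pi R^2}-e^{-pq\lambda_T\pi R^2}.
\]
Dividing by $P(\setC^{LOS})=1-e^{-pq\lambda_T\pi R^2}$ from (\ref{eqn:coveragep}) yields the claim.

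The calculation is essentially mechanical; the only points needing care are (i) the interchange of expectation and product, which is legitimate because of the independence of the per-link blockage events assumed just before (\ref{eqn:blProbLOSstep2}) and the conditional independence of the $R_i$, and (ii) the inclusion $\{N=0\}\subseteq B^{LOS}$ used in the conditional part, in particular the $M=0$ case where the blockage event is vacuously satisfied. I expect (ii) to be the subtlest step, since it is where the precise definition of $B^{LOS}$ (all BSs in range of the UE simultaneously blocked) actually enters.
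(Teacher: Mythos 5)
Your proposal is correct and follows essentially the same route as the paper's proof in Appendix B: factor the $m$-fold integral over the i.i.d.\ distances into $(1-ap)^m$, average against the Poisson law of $M$ to get $e^{-ap\lambda_T\pi R^2}$, and recover the conditional probability by noting that non-coverage implies blockage (the paper phrases this as $P(B^{LOS}) = P(B^{LOS}|\setC^{LOS})P(\setC^{LOS})+1-P(\setC^{LOS})$, which is exactly your inclusion $\{N=0\}\subseteq B^{LOS}$). No gaps.
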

\begin{proof} See Appendix \ref{app:th_prob}. 
\end{proof}
% We observed from Theorem \ref{th1} that the expected probability of simultaneous blockage decreases exponentially with the BS density $\lambda_T$. Further, note that $a\in(0,1)$, where $a\rightarrow 1$ when $RC/\mu\rightarrow 0$ and $a\rightarrow 0$ when $RC/\mu\rightarrow \infty$. Since the upper bound is trivial, we only prove the lower bound. 
We also consider an open park scenario, with no static blockages, and obtain a closed-form expression for the blockage probability considering only dynamic and self-blockage.
\begin{corollary}\label{cor:LOSprobOpen}
The coverage probability $P(\setC^{d})$ for an open park scenario is given by
\begin{equation}\label{eq:coverage_dyn}
P(\setC^{d}) = e^{-p\lambda_T\pi R^2}.
\end{equation}
The LOS dynamic blockage probability $P(B^d|\setC^{d})$, conditioned on coverage $\setC^d$, is given by
\begin{equation}\label{eqn:expblockagecond_dyn}
       P(B^d|\setC^{d}) = \frac{e^{-a' p\lambda_T\pi R^2} - e^{-p\lambda_T\pi R^2}}{1-e^{-p\lambda_T\pi R^2}},
\end{equation}
where,
\begin{equation}\label{eqn:aprime}
	\begin{split}
      a'=\frac{2\mu}{RC}-\frac{2\mu^2}{R^2C^2} \log\left(1+\frac{RC}{\mu}\right),
   \end{split}
\end{equation}
and $p=1-\omega/2\pi$.
\end{corollary}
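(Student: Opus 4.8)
The plan is to obtain Corollary~\ref{cor:LOSprobOpen} as the ``no static blockage'' specialization of Lemma~\ref{lemma:PN} and Theorem~\ref{th1} — that is, set $\lambda_S=0$, so that $\beta=0$, $\beta_0=0$, and $q=1$ — followed by one elementary integration. For the coverage statement: with $\beta=\beta_0=0$ we have $P(B^s_i\mid m,r_i)=0$ for every link, so the only thinning of the BS PPP is self-blockage; repeating the marking argument behind Lemma~\ref{lemma:PN} shows the number of BSs surviving self-blockage is Poisson with mean $p\lambda_T\pi R^2$, and $\setC^{d}$ is exactly the event that this count is nonzero. Hence $P(\setC^{d})=1-e^{-p\lambda_T\pi R^2}$, the $q=1$ instance of~(\ref{eqn:coveragep}).

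For the unconditional LOS blockage probability I would substitute $\beta=\beta_0=0$ into~(\ref{eqn:expblockage})--(\ref{eqn:a}); the exponent constant $a$ then reduces to
\[
a' \;=\; \frac{2}{R^2}\int_{0}^{R}\frac{r}{1+\frac{C}{\mu}r}\,dr .
\]
Evaluating this by the elementary decomposition $\dfrac{r}{1+\frac{C}{\mu}r}=\dfrac{\mu}{C}-\dfrac{\mu}{C}\cdot\dfrac{1}{1+\frac{C}{\mu}r}$ and integrating term by term gives $a'=\dfrac{2\mu}{RC}-\dfrac{2\mu^{2}}{R^{2}C^{2}}\log\!\bigl(1+\tfrac{RC}{\mu}\bigr)$, which is~(\ref{eqn:aprime}); Theorem~\ref{th1} with $a\mapsto a'$ then yields $P(B^{d})=e^{-a'p\lambda_T\pi R^{2}}$.

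For the conditional probability I would reuse the bookkeeping behind~(\ref{eqn:expblockagecond}): blockage occurs vacuously whenever no BS is available (the empty-product case in~(\ref{eqn:blProbLOSstep2})), and the ``no BS available'' event has probability $e^{-p\lambda_T\pi R^2}$ and is contained in $\{B^{d}\}$, so $P(B^{d},\setC^{d})=P(B^{d})-e^{-p\lambda_T\pi R^2}$. Dividing by $P(\setC^{d})=1-e^{-p\lambda_T\pi R^2}$ gives~(\ref{eqn:expblockagecond_dyn}).

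There is essentially no real obstacle: every structural step is inherited verbatim from Lemma~\ref{lemma:PN} and Theorem~\ref{th1} by putting $\lambda_S=0$, and the only genuinely new computation is the closed form for $a'$, whose sole subtlety is tracking the $\mu/C$ factors correctly through the integration. The one point to be careful about is applying the vacuous-blockage accounting ($N=0\Rightarrow B^{d}$) consistently in the numerator and denominator, exactly as in the derivation of~(\ref{eqn:expblockagecond}).
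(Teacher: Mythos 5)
Your proposal is correct and follows essentially the same route as the paper: the paper's own proof of this corollary likewise specializes the argument of Theorem~\ref{th1} by setting $\beta=\beta_0=0$ (equivalently $q=1$), reduces $a$ to $a'=\int_{r=0}^{R}\frac{1}{1+\frac{C}{\mu}r}\frac{2r}{R^2}\,dr$, and evaluates that integral to (\ref{eqn:aprime}), with the coverage and conditioning steps inherited from Lemma~\ref{lemma:PN} and (\ref{eqn:pureBlk}). Note that your coverage value $1-e^{-p\lambda_T\pi R^2}$ is the correct one (it is what the denominator of (\ref{eqn:expblockagecond_dyn}) requires); the displayed equation (\ref{eq:coverage_dyn}) in the corollary statement omits the leading ``$1-$''.
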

\begin{proof}
The proof is given in Appendix \ref{app:th_prob}.
\end{proof}

The closed-form expression of blockage probability for the open park scenario gives insights while facilitating a quick analysis for network planning. 
For instance, we can approximate $a'$ (in (\ref{eqn:aprime})) by taking a series expansion of $\log(1+RC/\mu)$, i.e.,
\begin{equation}\label{eq:approx_a}
\begin{split}
a'&= \frac{2\mu}{RC}-\frac{2\mu^2}{R^2C^2} \left(\frac{RC}{\mu}-\frac{R^2C^2}{2\mu^2}+\frac{R^3C^3}{3\mu^3}+\cdots\right),\\
&\approx 1-\frac{2RC}{3\mu}, \quad \text{when } \frac{RC}{\mu} \ll 1.
% \text{ is small}.
\end{split}
\end{equation}
% thus, when $RC/\mu\rightarrow 0$, then $a\rightarrow 1$.

Note that for the blocker density as high as $0.1$ blocker/m$^2$ (bl/m$^2$), and for other parameters in Table \ref{tab:params}, we get $RC/\mu = 0.35$, which shows that the approximation (\ref{eq:approx_a}) holds for a wide range of blocker densities.
For large BS density $\lambda_T$, the coverage probability $P(\setC^{d})$ is approximately 1 and $P(B^d|\setC^{d})\approx P(B^d)$. 
In order to have a blockage probability $P(B^d)$ less than a threshold $\bar{P}$, i.e.,
\begin{equation}
P(B^d)=e^{-a'p\lambda_T\pi R^2}\leq\bar{P},
\end{equation}
the required BS density follows
\begin{equation} \label{eqn:approx_lamT}
\lambda_T\geq\frac{-\log(\bar{P})}{a'p\pi R^2}\approx \frac{-\log(\bar{P})(1+\frac{2RC}{3\mu})}{p\pi R^2},
\end{equation}
where $C$ is proportional to the blocker density $\lambda_B$ in (\ref{eqn:C}). 
% Thus, the approximation holds for smaller $\lambda_B$. 
The result (\ref{eqn:approx_lamT}) shows that the BS density approximately scales linearly with the blocker density (along with a constant factor).
\subsection{Expected Blockage Duration}
Recall that the duration of the blockage of a single BS-UE link is an exponential random variable with mean $1/\mu$. 
% $T_i \sim \exp(\mu)$. %, i.e.,
%\begin{equation}
%f_{T_i}(t_i) = \mu e^{-\mu t_i},\quad \text{for} \;\;i=1:n.
%\end{equation}
Since the blockage of individual BS-UE links are independent, the duration $T^{LOS}$ of the LOS blockage of all $n$ BSs (which are in coverage) follows an exponential distribution with mean $1/n\mu$ as follow
%We show that the duration of the blockage of all $n$ BSs follows an exponential distribution with mean $1/n\mu$.   
%Consider a time instant when all $n$ BSs are blocked; the residual duration of the blockage period of the $i^{\text{th}}$ BS-UE link follows the same distribution as $f_{T_i}(t_i)$ because of the memoryless property of the exponential distribution.
%Therefore, the duration of the period of simultaneous blockage of all $n$ BSs is a random variable $T_B=\min\{T_1,T_2,\cdots,T_n\}$. Note that $T_B$ follows the distribution 
%$T_B \sim \exp(n\mu)$, conditioned on the number of BSs $N=n$. 
% We can therefore write the expected LOS blockage duration as 
\begin{equation}\label{eqn:exp_TBgivenN}
\E[T^{LOS}|n] = \frac{1}{n\mu},
\end{equation}
where $T^{LOS}$ denote the LOS blockage duration.  
% and,
% \begin{equation}
% \E[D_B|N] = \frac{1}{n\mu}P(B|N)
% \end{equation}
% becomes exponential random variable $(D_B \sim \exp(N\mu))$, i.e., $min\{D_1,D_2,\cdots,D_N\}$. As, the BSs are considered as homogeneous PPP, we need to sum over number of BSs. Thus, the expected duration of blockage is evaluated as
% \begin{equation}\label{eqn:expdurblck}
%         \mathbb{E} \left[D_B|\setC^{LOS}\right] := \frac{\sum_{n=1}^N \frac{1}{n\mu} P_N(n)}{1-\mathbb{P}(n=0)}.
% \end{equation}

\noindent
\begin{theorem}\label{th3} The expected blockage duration of simultaneous blockage of all the BSs in $B(o,R)$, conditioned on the coverage event $\setC^{LOS}$ in (\ref{eqn:coveragep}), is obtained as
\begin{equation}\label{eqn:blockageDurationExp}
        \mathbb{E}\left[T^{LOS}|\setC^{LOS}\right] = \frac{e^{-pq\lambda_T\pi R^2}}{\mu\left(1-e^{-pq\lambda_T\pi R^2}\right)}\text{Ei}\left[pq\lambda_T\pi R^2\right],
\end{equation}        
where, $\text{E}\text{i}\left[pq\lambda_T\pi R^2\right]$
% = \int_{0}^{p\lambda_T\pi R^2} \frac{e^x-1}{x} dx$ %Rajeev: Is this important??
= $\sum_{n=1}^\infty\frac{[pq\lambda_T\pi R^2]^n}{nn!}$ is a well-known series which can be written as the exponential integral function~\cite{weisstein2002exponential}.
\end{theorem}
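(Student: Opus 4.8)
The plan is to compute $\E[T^{LOS}\mid\setC^{LOS}]$ by conditioning on the number $N$ of base stations that survive static and self-blockage, which is exactly the quantity whose law Lemma~\ref{lemma:PN} supplies. Writing $\nu := pq\lambda_T\pi R^2$, Lemma~\ref{lemma:PN} gives $N\sim\text{Poisson}(\nu)$, and by construction the coverage event $\setC^{LOS}$ is precisely $\{N\ge 1\}$; hence for $n\ge 1$,
\[
P(N=n\mid \setC^{LOS}) = \frac{P_N(n)}{1-P_N(0)} = \frac{\nu^n e^{-\nu}}{n!\,(1-e^{-\nu})}.
\]
Combining this with the per-$N$ conditional mean $\E[T^{LOS}\mid n]=1/(n\mu)$ from \eqref{eqn:exp_TBgivenN} (which itself holds because the blockages of the $n$ surviving links are independent exponentials of rate $\mu$, so their minimum is exponential of rate $n\mu$) and applying the tower property over the conditional law of $N$ yields
\[
\E[T^{LOS}\mid\setC^{LOS}] = \sum_{n=1}^{\infty}\frac{1}{n\mu}\cdot\frac{\nu^n e^{-\nu}}{n!\,(1-e^{-\nu})} = \frac{e^{-\nu}}{\mu\,(1-e^{-\nu})}\sum_{n=1}^{\infty}\frac{\nu^n}{n\,n!}.
\]

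The only remaining step is to identify the inner sum: with the definition adopted in the statement (and recorded in~\cite{weisstein2002exponential}), $\text{Ei}[\nu]=\sum_{n\ge 1}\nu^n/(n\,n!)$, which immediately gives \eqref{eqn:blockageDurationExp}. Along the way I would note the two routine justifications: the interchange of expectation and infinite summation is legitimate because every summand is nonnegative (monotone convergence / Tonelli), and the series converges for every finite $\nu$ since $\nu^n/(n\,n!)\le \nu^n/n!$, the terms of $e^{\nu}$.

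I do not expect a genuine obstacle here; the only point that needs care is the normalisation by $1-e^{-\nu}$ arising from the fact that $T^{LOS}$ is defined only on the coverage event, so the $n=0$ term must be excluded from the average. This is exactly what converts a bare Poisson average into the claimed conditional expectation, and it is what introduces the $\text{Ei}$ series rather than an elementary closed form.
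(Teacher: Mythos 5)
Your proposal is correct and follows essentially the same route as the paper's Appendix~C: condition on the Poisson-distributed number $N$ of surviving BSs from Lemma~\ref{lemma:PN}, use $\E[T^{LOS}\mid n]=1/(n\mu)$ from \eqref{eqn:exp_TBgivenN}, normalise by $P(\setC^{LOS})=1-e^{-pq\lambda_T\pi R^2}$ to exclude the $n=0$ term, and identify the resulting series as $\text{Ei}$. The only difference is cosmetic: you phrase the normalisation via the conditional law of $N$ given $\{N\ge 1\}$, whereas the paper writes $\E[T^{LOS},n\ne 0]/P(\setC^{LOS})$, and you add the (welcome but routine) justification of the series interchange and convergence.
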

\begin{proof}
 See Appendix \ref{app:th_dur}.
\end{proof}
\noindent
\begin{lemma}
$\text{E}\text{i}\left[pq\lambda_T\pi R^2\right]$ converges.
\end{lemma}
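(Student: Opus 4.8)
The plan is to recognize that $\text{Ei}[pq\lambda_T\pi R^2]$ is simply the value, at a fixed positive argument, of the power series $\sum_{n=1}^\infty z^n/(n\,n!)$, so that convergence reduces to an elementary application of the ratio test. First I would set $x := pq\lambda_T\pi R^2$ and note that, since $p = 1-\omega/2\pi \in (0,1]$ and $q \in (0,1]$ by its definition in (\ref{eqn:p}), and $\lambda_T, R > 0$, the quantity $x$ is a fixed, finite, strictly positive real number. Consequently every term $a_n := x^n/(n\,n!)$ is positive, and it suffices to show that $\sum_{n\ge 1} a_n < \infty$.

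Next I would apply the ratio test: $a_{n+1}/a_n = x\,n/(n+1)^2 \to 0$ as $n\to\infty$, and since this limit is strictly less than $1$, the series of positive terms converges. An equally clean alternative is the comparison $a_n = x^n/(n\,n!) \le x^n/n!$, valid for every $n\ge 1$, together with $\sum_{n\ge 1} x^n/n! = e^{x}-1 < \infty$, which yields convergence directly by comparison with the exponential series.

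There is essentially no obstacle here; the only point that deserves care is confirming that the argument $pq\lambda_T\pi R^2$ is a fixed finite positive number, so that we are dealing with convergence of a numerical series rather than a question of uniform convergence or an improper integral — and this is immediate from the definitions of $p$ and $q$. One may additionally remark that the underlying power series $\sum_{n\ge 1} z^n/(n\,n!)$ has infinite radius of convergence, hence converges for every real (indeed complex) value of its argument, so the exponential integral representation cited from \cite{weisstein2002exponential} is well defined in full generality.
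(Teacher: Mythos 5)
Your proof is correct and follows essentially the same route as the paper: both apply the ratio test to the series $\sum_{n\ge 1} x^n/(n\,n!)$ with $x = pq\lambda_T\pi R^2$ and compute the limit $x\,n/(n+1)^2 \to 0 < 1$. The extra remarks (the comparison with $e^x - 1$ and the infinite radius of convergence) are sound but not needed beyond what the paper already does.
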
 
\begin{proof}
 We can use Cauchy ratio test to show that the series $\sum_{n=1}^\infty\frac{[pq\lambda_T\pi R^2]^n}{nn!}$ is convergent. Consider 
$L = \lim_{n\rightarrow\infty}\frac{[pq\lambda_T\pi R^2]^{n+1}/((n+1)(n+1)!)}{[pq\lambda_T\pi R^2]^n/(nn!)} = \lim_{n\rightarrow\infty}\frac{[pq\lambda_T\pi R^2]n}{(n+1)^2}=0$. Hence, the series converges.
\end{proof}

An approximation of blockage duration (\ref{eqn:blockageDurationExp}) can be obtained for a high BS density ($\lambda_T$) as follows:
\begin{equation}\label{eqn:durLOSapproxfinal}
\begin{split}
&\E\left[T^{LOS}|\setC^{LOS}\right]
% &\approx \frac{1}{\mu\left(1-e^{-pq\lambda_T\pi R^2}\right)}\left(\frac{1}{ pq\lambda_T\pi R^2}+ \frac{1}{( pq\lambda_T\pi R^2)^2}\right)\\
\approx\frac{1}{\mu pq\lambda_T\pi R^2\left(1-e^{-pq\lambda_T\pi R^2}\right)}.
\end{split}
\end{equation}

This approximation is justified in Appendix \ref{app:approx_dur}.

\begin{corollary}
The expected blockage duration given coverage $\setC^d$ (\ref{eq:coverage_dyn}) for the open park scenario is given by
\begin{equation}
\mathbb{E}\left[T^{d}|\setC^{d}\right] = \frac{e^{-p\lambda_T\pi R^2}}{\mu\left(1-e^{-p\lambda_T\pi R^2}\right)}\text{Ei}\left[p\lambda_T\pi R^2\right],
\end{equation}
which can be obtained from (\ref{eqn:blockageDurationExp}), by setting $q=1$ (corresponding to $\beta=0$ and $\beta_0=0$).
\end{corollary}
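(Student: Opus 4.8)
The plan is to obtain this as a direct specialization of Theorem~\ref{th3}. By definition the open park scenario is the general model of Section~\ref{sec:analysis} with no static blockages, i.e.\ $\lambda_S = 0$, which forces $\beta = \tfrac{2}{\pi}\lambda_S(\E[\ell]+\E[w]) = 0$ and $\beta_0 = \lambda_S\E[\ell]\E[w] = 0$. So the whole derivation of Theorem~\ref{th3} should go through verbatim once one checks that the auxiliary constant $q$ of (\ref{eqn:p}) collapses to $1$ in this limit and that nothing else in that derivation relied on static blockage in an essential way.

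First I would verify $q \to 1$ as $\beta \to 0$. The expression for $q$ in (\ref{eqn:p}) is nominally $0/0$ at $\beta = 0$, so I would expand $e^{-\beta R}$: one has $(1+\beta R)e^{-\beta R} = 1 - \tfrac12(\beta R)^2 + O((\beta R)^3)$, hence $1-(1+\beta R)e^{-\beta R} = \tfrac12(\beta R)^2 + O((\beta R)^3)$, and with $e^{-\beta_0}\to 1$ this gives $q \to 1$. (Equivalently, $q = \E[e^{-\beta R_i}]$ under the distance law (\ref{eqn:distribution}), which manifestly tends to $1$ as $\beta \to 0$.) Consequently Lemma~\ref{lemma:PN} specializes to $N \sim \mathrm{Poisson}(p\lambda_T\pi R^2)$, the count underlying $P(\setC^d) = e^{-p\lambda_T\pi R^2}$ in (\ref{eq:coverage_dyn}).

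Next, the conditional identity $\E[T^d|n] = 1/(n\mu)$ is unchanged: with no static blockages, each of the $n$ non-self-blocked BSs carries an independent dynamic on--off process with unblocking rate $\mu$, so the time until all $n$ are simultaneously blocked is a minimum of $n$ i.i.d.\ exponentials with mean $1/(n\mu)$, exactly as in (\ref{eqn:exp_TBgivenN}). From here the computation is verbatim that of the proof of Theorem~\ref{th3} (Appendix~\ref{app:th_dur}): average $1/(n\mu)$ against the law of $N$ conditioned on $N\ge 1$, recognize $\sum_{n\ge 1}\frac{[p\lambda_T\pi R^2]^n}{n\,n!}$ as $\mathrm{Ei}[p\lambda_T\pi R^2]$, and normalize by $P(\setC^d)$. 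Replacing every $pq\lambda_T\pi R^2$ in (\ref{eqn:blockageDurationExp}) by $p\lambda_T\pi R^2$ yields the claimed formula.

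There is essentially no hard step. The only point needing care is the $0/0$ limit $q\to 1$ just discussed, together with the (routine) observation that no other part of the proof of Theorem~\ref{th3} uses the static-blockage parameters except through $q$ and $\beta_0$, both of which behave correctly as $\beta,\beta_0\to 0$. Given these, the corollary follows.
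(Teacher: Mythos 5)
Your proposal is correct and follows the same route as the paper, which obtains the corollary simply by setting $q=1$ (i.e.\ $\beta=\beta_0=0$) in Theorem~\ref{th3}; your extra verification that the closed form of $q$ in (\ref{eqn:p}) indeed tends to $1$ in this $0/0$ limit (most cleanly via $q=\int_0^R e^{-(\beta r+\beta_0)}\tfrac{2r}{R^2}\,dr$) is a welcome detail the paper leaves implicit.
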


\subsection{Expected Blockage Frequency}
For the open park scenario, we define the dynamic blockage frequency $\zeta^d$ as the rate of blockage of all the BSs in the range of UE, i.e.,
\begin{equation}\label{eqn:AllBSfreq}
\begin{split}
    \zeta^d = n\mu P(B^{d}|n,\{r_i\}),
%     =n\mu \prod_{i=1}^n\frac{\frac{C}{\mu}r_i}{1+\frac{C}{\mu}r_i},
\end{split}
\end{equation}
where $P(B^{d}|n,\{r_i\})$ represents the dynamic blockage probability given $n$ and $\{r_i\}$, $n$ is the number of BS in the disc $B(o,R)$ not blocked by the user's body, and $\{r_i\}$ is the set of UE-BS distances $r_i$, for $i=1,\cdots,n$. We do not consider the effect of static blockages for the open park scenario. A closed-form expression for the expected blockage frequency is obtained in Theorem \ref{th2}.
% The expected rate of blockage is obtained where the expectation is taken over the joint distribution of $N$ and $\{R_i\}$.
% \begin{equation}\label{eqn:blockageDurEQn1}
% \E[\zeta_B|N] = \!\!\int\!\!\!\int_{r_i}\!\! \zeta_B\; f(\{R_i\}|N)\;dr_1\cdots dr_n,
% \end{equation}
% % following from (\ref{eqn:blockageDurEQn1}), we have
% \begin{equation}\label{ep1n}
%         \mathbb{E} \left[\zeta_B\right] = \sum_{n=0}^\infty \E[\zeta_B|N]P_N(n).
% \end{equation}
\begin{theorem} \label{th2} The expected blockage frequency given coverage $\setC^d$ (\ref{eq:coverage_dyn}) for the open park scenario is obtained as:
% \begin{equation}\label{eqn:blkfrq}
%         \mathbb{E}\left[\zeta_B\right] = \mu (1-a')p \lambda_T\pi R^2 e^{-a'p\lambda_T\pi R^2},
% \end{equation}       
% and the expected frequency conditioned on the coverage event (\ref{eqn:C}) is 
\begin{equation}\label{eqn:blkfreq_cond}
\begin{split}
\MoveEqLeft \mathbb{E}\left[\zeta^d|\setC^{d}\right] = \frac{\mu (1-a')p\lambda_T\pi R^2e^{-a'p\lambda_T\pi R^2}}{{1-e^{-p\lambda_T\pi R^2}}}.
% \frac{1}{{1-e^{-\lambda_T\pi R^2}}} \times \\
% \MoveEqLeft \left[\mu a \lambda_T\pi R^2 e^{-\frac{2\pi R\mu\lambda_{T}}{C}}\left(1+\frac{CR}{\mu} \right)^{\frac{2\pi\mu^2\lambda_{T}}{C^2}}-e^{-\lambda_T\pi R^2}\right].
\end{split}
\end{equation}
where $p$ and $a'$ are defined in (\ref{eqn:self}) and (\ref{eqn:aprime}) respectively.
\end{theorem}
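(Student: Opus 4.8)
The plan is to evaluate $\E[\zeta^{d}\mid\setC^{d}]$ by the tower rule: first average over the base-station distances conditioned on the number $N=n$ of base stations not blocked by the user's body, and then average over $N$. From the definition (\ref{eqn:AllBSfreq}) and the per-link dynamic-blockage probability (\ref{eqn:AllBSdynamic}),
\[
\zeta^{d} \;=\; N\mu\,P(B^{d}\mid N,\{r_i\}) \;=\; N\mu\prod_{i=1}^{N}\frac{(C/\mu)r_i}{1+(C/\mu)r_i}.
\]
Since $\zeta^{d}=0$ on $\{N=0\}$, conditioning on the coverage event $\setC^{d}=\{N\ge1\}$ contributes only the normalizer $P(\setC^{d})=1-e^{-p\lambda_T\pi R^2}$ from (\ref{eq:coverage_dyn}); hence it suffices to compute the unconditional mean $\E[\zeta^{d}]$ and divide.

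First I would condition on $N=n$. The distances $R_1,\dots,R_n$ are i.i.d.\ with density $2r/R^2$ on $(0,R]$ by (\ref{eqn:distribution}), and each factor depends only on its own distance, so the expectation of the product factorizes:
\[
\E[\zeta^{d}\mid N=n] \;=\; n\mu\left(\E\!\left[\frac{(C/\mu)R_1}{1+(C/\mu)R_1}\right]\right)^{n}.
\]
The key algebraic simplification is $\dfrac{(C/\mu)r}{1+(C/\mu)r}=1-\dfrac{1}{1+(C/\mu)r}$, which gives
\[
\E\!\left[\frac{(C/\mu)R_1}{1+(C/\mu)R_1}\right] \;=\; 1-\int_0^R\frac{1}{1+(C/\mu)r}\,\frac{2r}{R^2}\,dr \;=\; 1-a',
\]
where the last step is precisely the definition of $a'$ in (\ref{eqn:aprime}) — equivalently the $\beta=\beta_0=0$ specialization of $a$ in (\ref{eqn:a}) — obtained from the elementary integral $\int_0^R\frac{r}{1+cr}\,dr=\frac{R}{c}-\frac{1}{c^2}\log(1+cR)$ with $c=C/\mu$. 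Therefore $\E[\zeta^{d}\mid N=n]=n\mu(1-a')^{n}$.

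Next I would average over $N\sim\text{Poisson}(p\lambda_T\pi R^2)$, using its probability mass function from the distribution underlying (\ref{eq:coverage_dyn}). Writing $\nu=p\lambda_T\pi R^2$ and invoking the standard series identity $\sum_{n\ge1}nx^{n}/n!=x\,e^{x}$ with $x=(1-a')\nu$,
\[
\E[\zeta^{d}] \;=\; \mu e^{-\nu}\sum_{n=1}^{\infty}\frac{n\big[(1-a')\nu\big]^{n}}{n!} \;=\; \mu(1-a')\nu\,e^{-a'\nu}.
\]
Dividing by $P(\setC^{d})=1-e^{-\nu}$ yields the claimed expression (\ref{eqn:blkfreq_cond}). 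The computation is essentially routine; the step I expect to require the most care is the two-stage conditioning bookkeeping — in particular recognizing that the per-link expectation collapses exactly to $1-a'$, which is what turns the Poisson sum into the $\sum n x^n/n!$ series, and verifying that the conditioning on $\setC^{d}$ introduces nothing beyond the factor $1-e^{-\nu}$ because $\zeta^{d}$ vanishes when $N=0$. I do not anticipate any genuine analytic obstacle.
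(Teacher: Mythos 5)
Your proposal is correct and follows essentially the same route as the paper's proof: condition on the number $n$ of unblocked BSs, factor the expectation of the product into $n$ identical integrals to get $\E[\zeta^d\mid n]=n\mu(1-a')^n$ with $1-a'=\int_0^R\frac{(C/\mu)r}{1+(C/\mu)r}\frac{2r}{R^2}\,dr$, sum against the $\text{Poisson}(p\lambda_T\pi R^2)$ law of $N$, and divide by $P(\setC^d)=1-e^{-p\lambda_T\pi R^2}$ since $\zeta^d$ vanishes on $\{N=0\}$. The only cosmetic difference is that you invoke $\sum_{n\ge1}nx^n/n!=xe^x$ directly where the paper re-indexes the sum as a shifted Poisson mass function; the results agree.
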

\begin{proof}See Appendix \ref{app:th_freq}.
\end{proof}

\section{Effect of NLOS links on Blockage}\label{sec:NLOS}
In the previous section, we analyzed the LOS blockage probability and duration of blockage events due to mobile blockers. Apart from the LOS signal, the NLOS paths through reflections by buildings, trees and other permanent structures (collectively called reflectors) also play a major role in the mmWave cellular systems. 
% Akdeniz \textit{et al.}~\cite{CellularCap-Rap} have shown through measurements that there would be at least one NLOS path available for each BS-UE link. We therefore extend our blockage model to the NLOS links for a complete blockage analysis of mmWave cellular system. 
We first present a NLOS link model and evaluate the coverage probability by incorporating the NLOS links. Then we will evaluate a lower bound on the blockage probability given coverage by considering both LOS and NLOS links.
\subsection{Coverage Probability incorporating NLOS links}
When we consider both LOS and NLOS links as potential links for connectivity, the coverage is defined as the availability of at least one unblocked LOS or NLOS link (unblocked by static or self-blockage). Considering the experimental results by Akdeniz \textit{et al.}~\cite{CellularCap-Rap}, there is always at least one NLOS signal for a given BS location, and we assume the NLOS signal is sufficiently strong when the BS is in a disc $B(o,\Rt)$ around the UE (See Figure \ref{fig:NLOS_analysis_approx}). Therefore, we say that an NLOS link is not available when $r_i> \Rt$.
% which we indicate by $I_{(r_i> \Rt)}$. 
Assuming the availability of LOS and NLOS links to be independent of each other, we obtain the coverage probability $P(\setC|m,\{r_i\})$ conditioned on $m$ and $\{r_i\}$ for $i=1,\ldots, m$ as follow
\begin{equation}
P(\setC|m,\{r_i\}) = 1-\prod_{i=1}^m\left[\left(1-pe^{-(\beta r_i+\beta_0)}\right)\left(1-I_{(r_i\le \Rt)}\right)\right],
\end{equation}
where $I_{(r_i\le \Rt)}=1$ when $r_i\le \Rt$ and $I_{(r_i\le \Rt)}=0$ otherwise.
\begin{lemma}\label{lemma:NLOScoverage}
    The coverage probability considering LOS and NLOS links is defined by
    \begin{equation}\label{eqn:coverageNLOS}
        P(\setC) = 1-e^{-\qt\lambda_T\pi R^2},
    \end{equation}
    where,
    \begin{equation}\label{eqn:qt}     
    \qt = \frac{\Rt^2}{R^2}-\frac{2pe^{-\beta_0}}{\beta^2\Rt^2}\left((1+\beta R)e^{-\beta R}-(1+\beta \Rt)e^{-\beta \Rt}\right).
    \end{equation}
\end{lemma}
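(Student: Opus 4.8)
The plan is to follow the same three-stage de-conditioning used for Lemma~\ref{lemma:PN}: condition on the number $M=m$ of BSs in $B(o,R)$, average the conditional coverage expression over the i.i.d.\ distances, and then average over the Poisson law of $M$. Starting from the conditional coverage probability displayed just before the statement, write $P(\setC\mid m,\{r_i\})=1-\prod_{i=1}^m g(r_i)$ with $g(r):=\bigl(1-pe^{-(\beta r+\beta_0)}\bigr)\bigl(1-I_{(r\le\Rt)}\bigr)$; that is, $g(r)$ is the probability that the $i$th BS supplies neither a usable LOS link (unblocked by static or self-blockage, probability $pe^{-(\beta r+\beta_0)}$) nor an available NLOS link (available iff $r\le\Rt$). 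Given $M=m$, the distances $R_1,\dots,R_m$ are i.i.d.\ with density $2r/R^2$ on $(0,R]$ by~\eqref{eqn:distribution}, and the per-BS availabilities are assumed independent across $i$, so the expectation of the product separates and $P(\setC\mid m)=1-\bigl(\E[g(R)]\bigr)^m$. It is convenient to set $\qt:=1-\E[g(R)]=\E[1-g(R)]=\int_0^R\bigl(1-g(r)\bigr)\tfrac{2r}{R^2}\,dr$, so that $P(\setC\mid m)=1-(1-\qt)^m$.

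Next I would evaluate $\qt$ explicitly. The indicator $I_{(r\le\Rt)}$ splits the integral at $r=\Rt$. On $(0,\Rt]$ we have $g\equiv 0$ (the NLOS link is always available there), hence $1-g(r)=1$, contributing $\int_0^{\Rt}\tfrac{2r}{R^2}\,dr=\Rt^2/R^2$. On $(\Rt,R]$ we have $I_{(r\le\Rt)}=0$, hence $1-g(r)=pe^{-(\beta r+\beta_0)}$, contributing $\tfrac{2pe^{-\beta_0}}{R^2}\int_{\Rt}^{R}r e^{-\beta r}\,dr$. The remaining integral is elementary, $\int r e^{-\beta r}\,dr=-\beta^{-2}(1+\beta r)e^{-\beta r}$, and inserting the endpoints $r=\Rt$ and $r=R$ produces the closed form for $\qt$ given in~\eqref{eqn:qt}.

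Finally I would de-condition over $M\sim\mathrm{Poisson}(\lambda_T\pi R^2)$ using its probability generating function, $\sum_{m\ge 0}P_M(m)z^m=e^{\lambda_T\pi R^2(z-1)}$, evaluated at $z=1-\qt$:
\[
P(\setC)=\sum_{m\ge 0}P_M(m)\bigl(1-(1-\qt)^m\bigr)=1-e^{-\qt\lambda_T\pi R^2}.
\]
A useful cross-check is that this is the thinning/marking theorem in disguise: the BSs supplying at least one usable link form a PPP with mean $\lambda_T\pi R^2\,\E[1-g(R)]=\qt\lambda_T\pi R^2$, and coverage is precisely the event that this thinned process is non-empty. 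There is no substantial obstacle; the only step that needs care is the piecewise evaluation of $\qt$, where one must keep the two radii $R$ and $\Rt$ distinct throughout and use that $g\equiv 0$ on the inner disc $B(o,\Rt)$.
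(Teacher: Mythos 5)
Your proof is correct and follows essentially the same route as the paper's own argument in Appendix~\ref{app:proof_NLOS_coverage}: factor the conditional coverage probability over the i.i.d.\ distances, identify $\qt = 1-\int_{\Rt}^{R}\left(1-pe^{-(\beta r+\beta_0)}\right)\frac{2r}{R^2}\,dr$, and de-condition over the Poisson number of BSs. One remark: your piecewise evaluation correctly produces $\beta^2 R^2$ in the denominator of the second term of $\qt$, whereas the printed formula (\ref{eqn:qt}) shows $\beta^2\Rt^2$; since the integral the paper itself writes in (\ref{eqn:qtIntegral}) evaluates exactly to your expression, this is a typo in the stated closed form rather than an error in your derivation.
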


% o NLOS links if at least one NLOS link is available to the UE. \\
\begin{proof}
The proof is provided in Appendix \ref{app:proof_NLOS_coverage}.
\end{proof}

\subsection{Incorporating NLOS links in Blockage Probability}
The NLOS links can potentially help in achieving high coverage and reducing the blockage probability. In general, for a given BS at a distance $r_i$ from the origin, the corresponding $k$ virtual BSs are located at a distance larger than $r_i$. The corresponding NLOS link length can be obtained by tracing the path from the BS to a reflection point (the point where the NLOS signal got reflected), and then from the reflection point to the UE. However, in order to avoid the underlying complexity, we present a lower bound on blockages for the NLOS case by considering that all the $k$ virtual BSs are located at a distance $r_i$ from the origin, and uniformly over an angular range of $[0,2\pi]$. By thus minimizing the length of NLOS paths and making their angular range independent of the BS, we will minimize the incidence of blockage. For such a scenario, the NLOS blockage probability $P(B_i^{NLOS}|m,r_i,k)$, corresponding to the $i$th BS, considering blockage by dynamic blockers, is given by:
\begin{equation}\label{eqn:NLOSprobBound1}
\begin{split}
P(B_i^{NLOS}|m,r_i,k) 
&= \prod_{j=1}^k\left(\frac{\frac{C}{\mu}r_i}{1+\frac{C}{\mu}r_i}I_{(r_i\le \Rt)}+1-I_{(r_i\le \Rt)}\right)\\
% &= \prod_{j=1}^k\left(1-\frac{1}{1+\frac{C}{\mu}r_i}I_{(r_i\le \Rt)}\right)\\
&=\left(1-\frac{1}{1+\frac{C}{\mu}r_i}I_{(r_i\le \Rt)}\right)^k\\
&=(1-\bt(r_i))^k,
\end{split}
\end{equation}
where we assume the independent blocking of all the NLOS links and define 
\begin{equation}\label{eqn:btilda_r}
\bt(r_i)=\frac{1}{1+\frac{C}{\mu}r_i}I_{(r_i\le \Rt)}.
\end{equation}

Next, we integrate the blockage probability in (\ref{eqn:NLOSprobBound1}) w.r.t. the distribution of $k$ given in (\ref{eqn:K}) as follows:
\begin{equation}\label{eqn:NLOSprobBound2}
\begin{split}
P(B_i^{NLOS}|m,r_i)& = \sum_{k=1}^\infty P(B_i^{NLOS}|m,r_i,k)P_K(k) \\
&=\sum_{k=1}^{\infty}(1-\bt(r_i))^kP_K(k).
\end{split}
\end{equation}

We further solve (\ref{eqn:NLOSprobBound2}) and write the final expression as follows (we omit the calculation for brevity):
 \begin{equation}\label{eqn:NLOSprobBound3}
\begin{split}
P(B_i^{NLOS}|m,r_i)= e^{-\bt(r_i)\kappa}-\bt(r_i) e^{-\kappa}.
\end{split}
\end{equation}

 The blockage probability $P(B)$ follows by assuming independent blocking of LOS and NLOS links and by averaging over the distributions  $P_M(m)$ and $f(\{r_i\}|m)$, given in (\ref{eqn:poisson}) and (\ref{eqn:distribution}),  respectively, as follows:
\begin{equation}\label{eqn:probBLbothLOSandNLOS}
\begin{split}
P(B) &= \sum_{m=0}^{\infty}\int_{r_1}\!\!\!\cdots\!\!\!\int_{r_m}\prod_{i=1}^{m}P(B_i^{LOS}|m,r_i)P(B_i^{NLOS}|m,r_i)\\&\qquad\qquad\qquad\quad \times f(\{r_i\}|m)\,d r_1\cdots dr_m\, P_M(m).
\end{split}
\end{equation}

The final expression of the blockage probability $P(B)$ is given in Theorem \ref{th:LOSandNLOS}.
\begin{theorem}\label{th:LOSandNLOS}
The blockage probability $P(B)$ considering both LOS and NLOS links is given by,
\begin{equation}
\begin{split}
P(B) = e^{-\at\lambda_T\pi R^2},
\end{split}
\end{equation}
where $\at$ is obtained as,
\begin{equation}\label{eqn:at}
\begin{split}
\at &= 1-\int_{r}\left(1-\frac{pe^{-(\beta r +\beta_0)}}{1+\frac{C}{\mu}r}\right)\left(e^{-\bt(r)\kappa}-\bt(r) e^{-\kappa}\right)\frac{2r}{R^2}dr,
\end{split}
\end{equation}
where $\bt(r)$ is given in (\ref{eqn:btilda_r}).

The conditional probability given coverage $\setC$ (\ref{eqn:coverageNLOS}) is given by
\begin{equation}
P(B|\setC) =  \frac{e^{-\at\lambda_T\pi R^2}-e^{-\qt\lambda_T\pi R^2}}{1-e^{-\qt\lambda_T\pi R^2}},
\end{equation}
where $\qt$ and $\at$ are given in (\ref{eqn:qt}) and  (\ref{eqn:at}) respectively.
\end{theorem}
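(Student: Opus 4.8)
The plan is to start from the integral expression (\ref{eqn:probBLbothLOSandNLOS}) for $P(B)$ and exploit the fact that, conditioned on the number of base stations $m$ in $B(o,R)$, the distances $R_1,\dots,R_m$ are i.i.d.\ with density (\ref{eqn:distribution}). Both the integrand $\prod_{i=1}^m P(B_i^{LOS}|m,r_i)P(B_i^{NLOS}|m,r_i)$ and the joint density $f(\{r_i\}|m)=\prod_{i=1}^m \tfrac{2r_i}{R^2}$ factor over $i$, so the $m$-fold integral collapses to the $m$-th power of the single-link quantity
\[
I=\int_0^R P(B_i^{LOS}|m,r)\,P(B_i^{NLOS}|m,r)\,\frac{2r}{R^2}\,dr,
\]
which does not depend on $m$. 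Substituting $P(B_i^{LOS}|m,r)=1-\tfrac{pe^{-(\beta r+\beta_0)}}{1+\frac{C}{\mu}r}$ from (\ref{eqn:blProbLOSstep1}) and $P(B_i^{NLOS}|m,r)=e^{-\bt(r)\kappa}-\bt(r)e^{-\kappa}$ from (\ref{eqn:NLOSprobBound3}) shows that $I=1-\at$ with $\at$ exactly the integral in (\ref{eqn:at}).

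Next I would average over $m$ using the Poisson pmf (\ref{eqn:poisson}):
\[
P(B)=\sum_{m=0}^{\infty} I^{m}\,\frac{(\lambda_T\pi R^2)^m}{m!}\,e^{-\lambda_T\pi R^2}=e^{-\lambda_T\pi R^2}\,e^{I\lambda_T\pi R^2}=e^{-(1-I)\lambda_T\pi R^2}=e^{-\at\lambda_T\pi R^2},
\]
which is the first claimed identity; the $m=0$ term (empty product equal to $1$) is automatically included and corresponds to the UE seeing no base station at all.

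For the conditional statement I would relate the blockage event $B$ to the coverage event $\setC$ of Lemma~\ref{lemma:NLOScoverage}. The key observation is the containment $\setC^{c}\subseteq B$: if the UE is not in coverage, then for every BS $i$ the LOS link is blocked by static or self-blockage — hence blocked — and every NLOS image is unavailable ($r_i>\Rt$, hence counted as blocked by $P(B_i^{NLOS})$), so all actual and virtual links are blocked and the UE lies in $B$. Consequently $P(B)=P(B\cap\setC)+P(\setC^{c})$, and using $P(\setC)=1-e^{-\qt\lambda_T\pi R^2}$ from (\ref{eqn:coverageNLOS}),
\[
P(B|\setC)=\frac{P(B\cap\setC)}{P(\setC)}=\frac{P(B)-e^{-\qt\lambda_T\pi R^2}}{1-e^{-\qt\lambda_T\pi R^2}}=\frac{e^{-\at\lambda_T\pi R^2}-e^{-\qt\lambda_T\pi R^2}}{1-e^{-\qt\lambda_T\pi R^2}}.
\]

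The algebra above is routine once the factorization is set up; the one place that needs care — and the main, if mild, obstacle — is the bookkeeping around conditioning: one must verify that the quantity produced by (\ref{eqn:probBLbothLOSandNLOS}) is genuinely the full blockage probability $P(B)$ (all BSs, actual and virtual, blocked, including the no-BS case) rather than $P(B\cap\setC)$, since it is precisely the containment $\setC^{c}\subseteq B$ that makes the numerator of the conditional probability appear as $e^{-\at\lambda_T\pi R^2}-e^{-\qt\lambda_T\pi R^2}$. I would also note in passing that for $r\in(\Rt,R]$ one has $\bt(r)=0$, so $P(B_i^{NLOS}|m,r)=1$ there, which is exactly the ``NLOS unavailable $=$ blocked'' convention and keeps the integral in (\ref{eqn:at}) consistent over the whole range $[0,R]$.
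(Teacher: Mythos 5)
Your proposal is correct and follows essentially the same route as the paper, which proves Theorem~\ref{th:LOSandNLOS} by substituting (\ref{eqn:blProbLOSstep1}) and (\ref{eqn:NLOSprobBound3}) into (\ref{eqn:probBLbothLOSandNLOS}) and repeating the factorization-plus-Poisson-averaging steps of Theorem~\ref{th1}; your explicit justification of $P(B|\setC^{c})=1$ via the containment $\setC^{c}\subseteq B$ is exactly the bookkeeping the paper leaves implicit in the step analogous to (\ref{eqn:pureBlk}).
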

The proof follows by putting the expressions (\ref{eqn:blProbLOSstep1}) and (\ref{eqn:NLOSprobBound3}) in (\ref{eqn:probBLbothLOSandNLOS}) and following steps similar to the proof of Theorem \ref{th1} in Appendix \ref{app:th_prob}.

\subsection{Expected Blockage Duration considering NLOS links}
Similar to the LOS case, we can evaluate the the expected duration of blockage events assuming independent blocking of LOS and NLOS links. We define the number of BS in the disc $B(o,\Rt)$ as $\nt$, which follows $\text{Poisson}(\lambda_T\pi \Rt^2)$ as
\begin{equation}
f_{\tilde{N}}(\nt) = \frac{[\lambda_T\pi \Rt^2]^{\nt}}{\nt!}e^{-\lambda_T\pi \Rt^2}.
\end{equation}

Given a BS in $B(o,\Rt)$, there are $k$ virtual BSs corresponding to that actual BS. Therefore, the total number of NLOS links are $\nt k$. We know from Section \ref{sec:analysis} that  $n$ LOS paths cover the UE. Therefore, the total number of LOS and NLOS paths is $n+\nt k$. The expected blockage duration is defined as
\begin{equation}\label{eqn:durNLOS}
\E[T|\setC] = \frac{1}{1-e^{-\qt\lambda_T\pi R^2}}\E\left[\frac{1}{\mu(n+\nt k)}\right], 
\end{equation}
where the expectation is taken over the distribution of $n$, $\nt$, and $k$. We can approximate the blockage duration in (\ref{eqn:durNLOS}) using a first order approximation as follows:
\begin{equation}\label{eqn:durNLOSapprox}
\begin{split}
\E[T|\setC] &\approx \frac{1}{1-e^{-\qt\lambda_T\pi R^2}}\frac{1}{\E[\mu(n+\nt k)]}\\
&=\frac{1}{1-e^{-\qt\lambda_T\pi R^2}}\frac{1}{\mu\left(\E[n]+\E[\nt]\E[ k]\right)}\\
&=\frac{1}{1-e^{-\qt\lambda_T\pi R^2}}\frac{1}{\mu\left(pq\lambda_T\pi R^2 + \kappa\lambda_T\pi \Rt^2\right)},
\end{split}
\end{equation}
where $p$ and $q$ are defined in (\ref{eqn:p}) and $\qt$ defined in (\ref{eqn:qt}). Equation  (\ref{eqn:durNLOSapprox}) is a good approximation of blockage duration for high BS densities, as discussed in Appendix \ref{app:approx_dur}.

\begin{figure}[!t]
    \centering
    \includegraphics[width=0.4\textwidth]{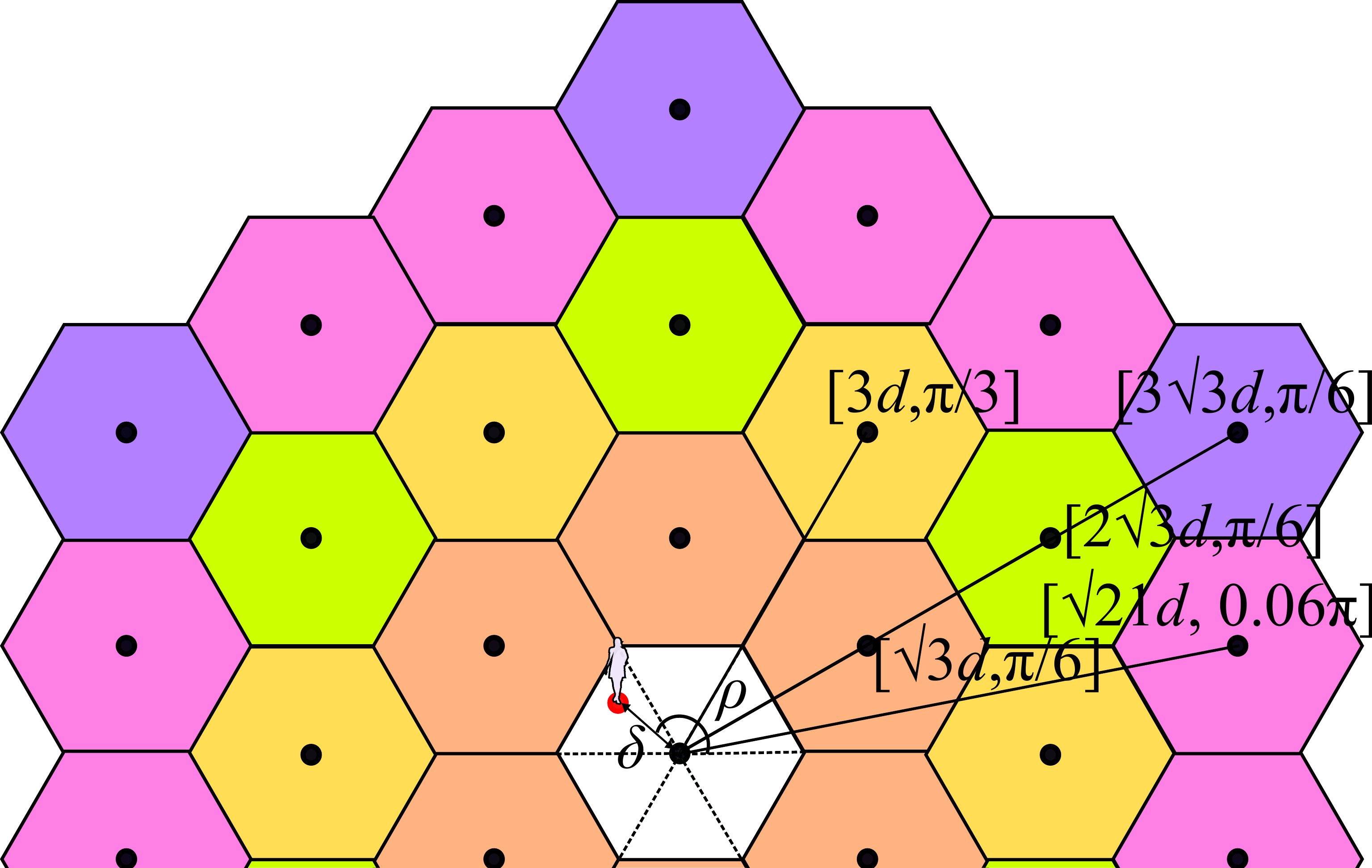}
    \caption{Hexagonal cellular layout.}
    \label{fig:hexcell}
    \vspace{-4mm}
\end{figure}

\begin{table}[!t]
\vspace{2mm}
\caption{Hexagonal cell distances and angular distributions.}
\label{tab:hexcell}
\centering
\begin{tabular}{|c|c|c|c|}\hline
    \textbf{Level} & \textbf{distance $d_i$} & \textbf{angle $\psi_i$}&\textbf{Num BS}\\\hline
    0&0&reference&1\\
    1&$\sqrt{3}d$&$[\pm\pi/6,\pm\pi/2,\pm5\pi/6]$ &6\\
    2&3d&$[0,\pm\pi/3,\pm2\pi/3,\pi]$&6\\
    3&$2\sqrt{3}d$&$[\pm\pi/6,\pm\pi/2,\pm5\pi/6]$&6\\
    4&$\sqrt{21}d$&$[\pm 0.06\pi,\pm\pi/3\pm 0.06\pi,$&12\\
    &&$\pm2\pi/3\pm 0.06\pi,\pm(\pi-0.06\pi)]$&\\
    5&3$\sqrt{3}d$&$[\pm\pi/6,\pm\pi/2,\pm5\pi/6]$&6\\
\hline
\end{tabular}
% \vspace{-5mm}
\end{table}

%!!!!!!!!!!!!!!!!!!!!!!!!!!!!!!!!
\section{Hexagonal Cell Layout}
\label{sec:hex}
In Section \ref{sec:analysis}, we analyzed the random deployment of BS in an area. We observed that a high density of BS is required to meet the QoS requirements (See  Section \ref{sec:numResults}). These results motivated us to look at a planned network such as a grid-based hexagonal cell layout for the open park scenario.  For such a planned network, we analyze the probability of blockage events due to random mobile blockers. We consider the hexagonal cell layout shown in Figure \ref{fig:hexcell}. A typical BS is located at the origin, and the UE is located within that cell at a distance $\delta$ from the origin and angle $\rho$ from the x-axis. The user location is uniform in the central hexagonal cell. The other cells follow the hexagonal grid around the central cell. 
The distances  $d_i$ from the origin and the angles $\psi_i$ from the x-axis of BSs at different cells are given in Table \ref{tab:hexcell} and shown in Figure \ref{fig:hexcell}. We define a set of BSs lie at a particular level $\ell$ (shown in Table \ref{tab:hexcell}, and shown as hexagons with the same color in Figure \ref{fig:hexcell}), if they all are at the same distance from the origin.
The distance $r_i$ of the UE from the $i$th BS is given by:
\begin{equation}\label{eqn:hexDiatance}
r_{i}^2 = d_{i}^2+\delta^2-2d_{i}\delta\cos(\psi_{i}-\rho), 
\end{equation}
which is a function of the UE's random location $\{\delta,\rho\}$.
% We assume that the UE is located uniformly in cell. What is the distribution of $r,\vartheta$? Also, find the distribution of $d_{ik}$?

For self-blockage, we consider a circle around UE of radius $R$ and define a sector of this circle with an angle $\omega$ as the self-blockage zone. We consider the worst case of self-blockage by choosing a location and orientation of the user which accounts for the blockage of the maximum number of BSs. For instance, for $\omega=60^{\text{o}}$, there can be up to 10 BSs blocked by self-blockage out of total 37 BSs (See Table \ref{tab:hexcell} and Figure \ref{fig:hexcell}). Considering this worst-case scenario, we get an upper bound on the blockage probability when we consider self-blockage.
% Find which BSs are blocked as a function of the user's orientation $\vartheta$? 

% For now we assume 

\begin{table}[!t]
%\vspace{2mm}
\caption{Simulation parameters}
\label{tab:params}
\centering
\begin{tabular}{|c|c|}\hline
	\textbf{Parameters} & \textbf{Values} \\\hline 
    % LOS Range,  $R$ &100 m \\
    % NLOS Range,  $\Rt$ & 65 m \\
    %Density of Blockers, $\lambda_B$ & 0.1/m$^2$, 0.2/m$^2$\\
    Velocity of dynamic blockers, $V$&1 m/s\\
    Height of dynamic blockers, $h_B$ & 1.8 m\\
    Height of UE, $h_R$ & 1.4 m\\
    Height of BSs, $h_T$ & 5 m\\
    Expected blockage duration, $1/\mu$ &1/2 s\\
   Self-blockage angle, $\omega$ & 60$^\text{o}$ \\ 
   Parameter for the number of NLOS paths, $\kappa$ & $3$ \\
   Average size of static blockages, $\ell\times w$ &$10$ m $\times 10$ m\\
\hline
\end{tabular}
% \vspace{-5mm}
\end{table}
 
The expression for dynamic blockage probability for the hexagonal case can be obtained by using the expression in (\ref{eqn:AllBSdynamic}) and assuming the independent blocking of all the links as follows: 
\begin{equation}\label{eqn:AllBS_hex}
        P(B^{\text{hex}}|\delta,\rho)  =\prod_{i=1}^{m'}\left(1-\frac{1}{1+\frac{C}{\mu}r_i}\right),
\end{equation} 
where $m'$ is the number of BSs which are within the range of the UE and are not blocked by self-blockage, and $r_i$ for $i=1,\cdots,m'$ is a function of $(\delta,\rho)$, as given in (\ref{eqn:hexDiatance}).
% where $d_i$ is a function of $(r,\theta)$. 
We perform a numerical integration over the distribution of the UE's location $\{\delta,\rho\}$, assuming the UE is uniformly located in the central hexagon.
%--------------------figure Tables

\begin{table}[!t]
%\vspace{2mm}
\caption{Reliability and latency requirements~\cite{DBLP,orlosky2017virtual}} 
\label{tab:applications}
\centering
\begin{tabular}{|c|c|c|c|}\hline
	\textbf{Application} & \textbf{Reliability [\%]} & \textbf{Latency [ms]} & \textbf{Caching}\\\hline 
	V2X & $\ge$ 99  & $\le$ 20 & $\mathbf{\times}$ \\
    AR/VR & $\ge$ 99.99 & $\le$ 20 & $\checkmark$ \\
    Smart Grid & $\ge$ 99.999 & $\le$ 10 & $\mathbf{\times}$ \\
    Tactile Internet & $\ge$ 99.999 & $\le$ 10 & $\mathbf{\times}$ \\
\hline
\end{tabular}
\vspace{-4mm}
\end{table}

\begin{figure}[!t]
	\centering
   \begin{subfigure}[b]{0.20\textwidth}
% 		\captionsetup{skip=-1pt}	
       \includegraphics[width=\textwidth]{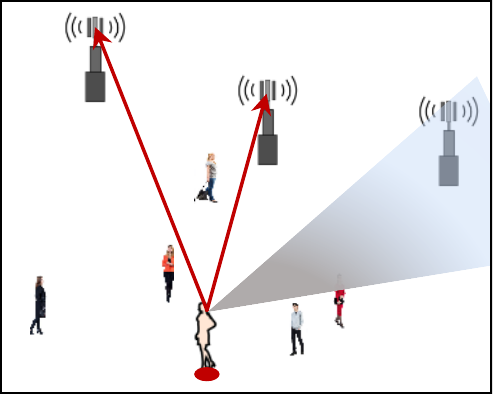}
       \caption{\small{Open park scenario.}}
       \label{fig:open}
	\end{subfigure}%
   \begin{subfigure}[b]{0.20\textwidth}
% 		\captionsetup{skip=-1pt}	
       \includegraphics[width=\textwidth]{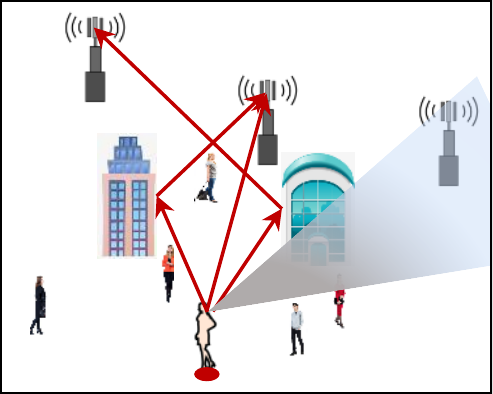}
       \caption{\small{Urban scenario.}}
       \label{fig:urban}
	\end{subfigure}%
\caption{Outdoor scenarios considered for numerical analysis. We represent static blockages by buildings, dynamic blockers by moving pedestrians, and self-blockage by a cone centered around the UE.}
\label{fig:scenarios}
\vspace{-4mm}
\end{figure}

\begin{figure*}[!t]
	\centering
    %\begin{subfigure}[b]{0.33\textwidth}
    %	\captionsetup{skip=-2pt}	
	%	\includegraphics[width=\textwidth]{img/coverageLOS}
     %   \caption{\small{Coverage probability}}
     %   \label{fig:condBlProb}
	%\end{subfigure}%
    \begin{subfigure}[b]{0.33\textwidth}
    	\captionsetup{skip=-2pt}	
		\includegraphics[width=\textwidth]{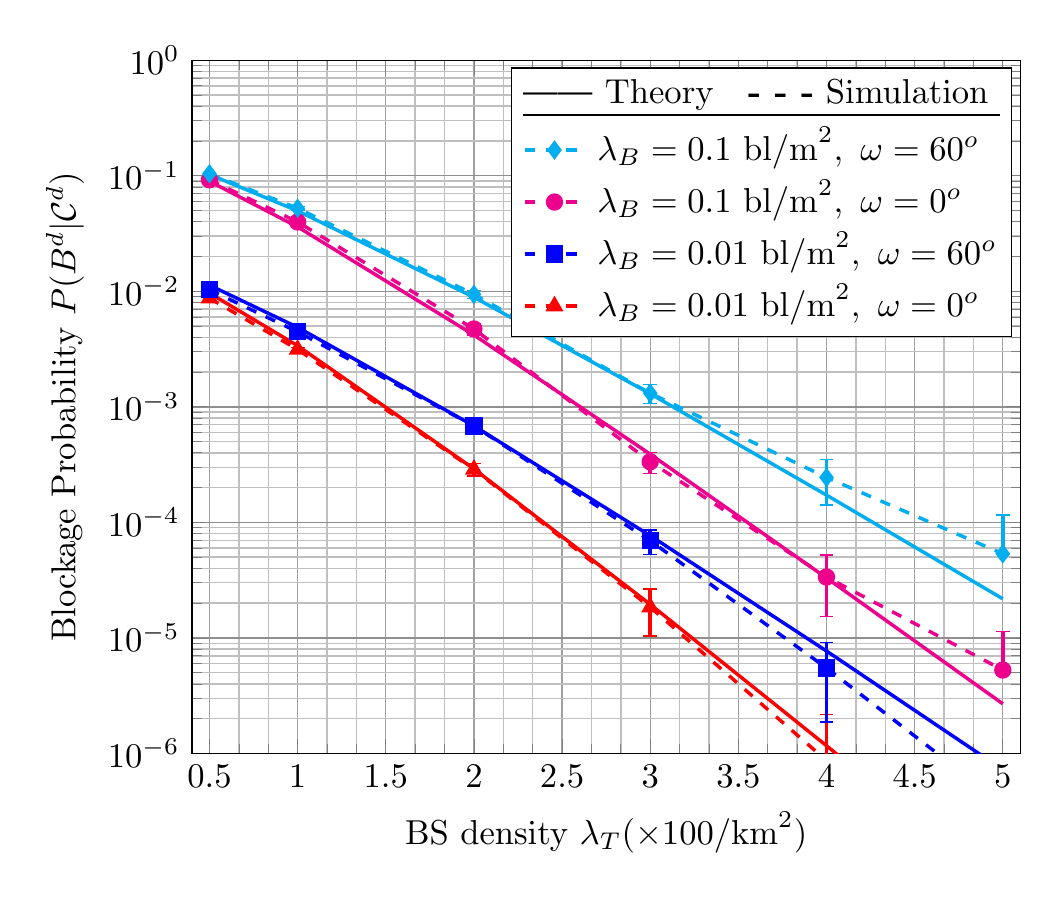}
        \caption{\small{Dynamic blockage probability.}}
        \label{fig:condBlprob}
	\end{subfigure}%
       \begin{subfigure}[b]{0.33\textwidth}
		\captionsetup{skip=-2pt}	
       \includegraphics[width=\textwidth]{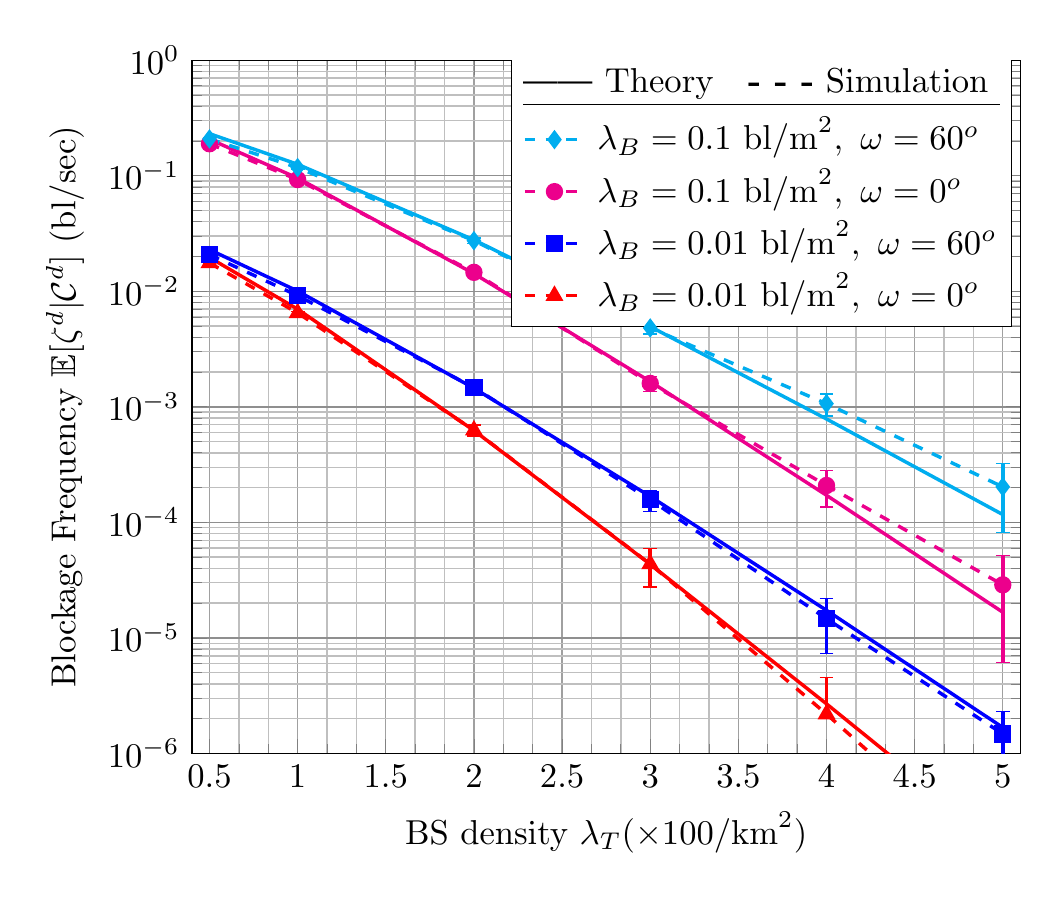}
       \caption{\small{Expected dynamic blockage frequency.}}
       \label{fig:condBlfreq}
	\end{subfigure}%
    \begin{subfigure}[b]{0.315\textwidth}
    	\captionsetup{skip=-2pt}	
		\includegraphics[width=\textwidth]{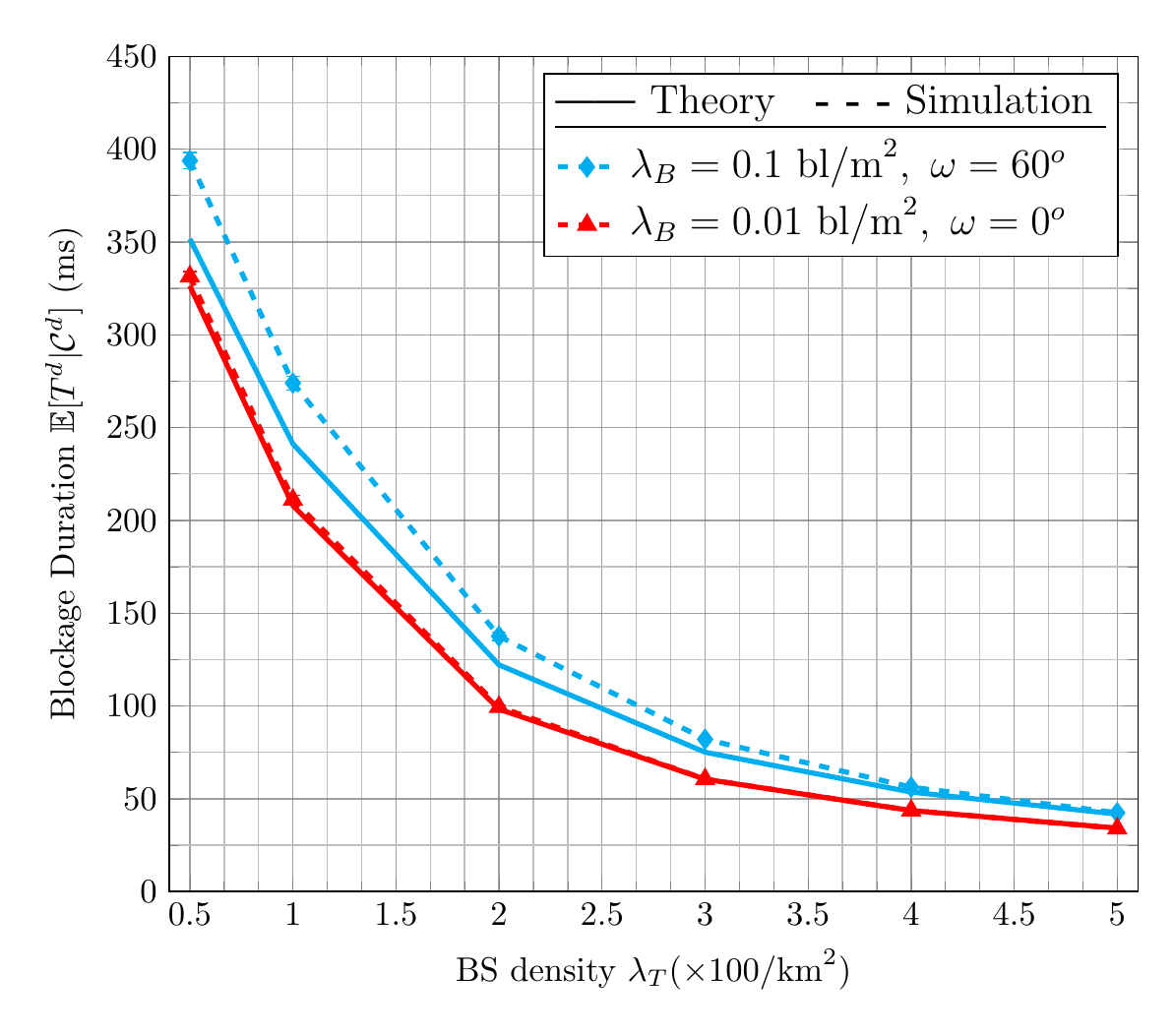}
		\caption{\small{Expected dynamic blockage duration.}}
		\label{fig:condBldur}
	\end{subfigure}%
\caption{Open park scenario: Conditional dynamic blockage statistics given coverage (defined by at least one BS within reach of the UE and outside the self-blockage zone) for a communication range $R=100$ m. The conditional probability and duration of dynamic blockage events are shown against BS density $\lambda_T$ for various blocker densities $\lambda_B$ and self-blockage angles $\omega$. 
}
\label{fig:condBl}
\vspace{-4mm}
\end{figure*}

\begin{figure*}[!t]
	\centering
    %\begin{subfigure}[b]{0.33\textwidth}
    %	\captionsetup{skip=-2pt}	
	%	\includegraphics[width=\textwidth]{img/coverageLOS}
     %   \caption{\small{Coverage probability}}
     %   \label{fig:condBlProb}
	%\end{subfigure}%
    \begin{subfigure}[b]{0.33\textwidth}
    	\captionsetup{skip=-2pt}	
		\includegraphics[width=\textwidth]{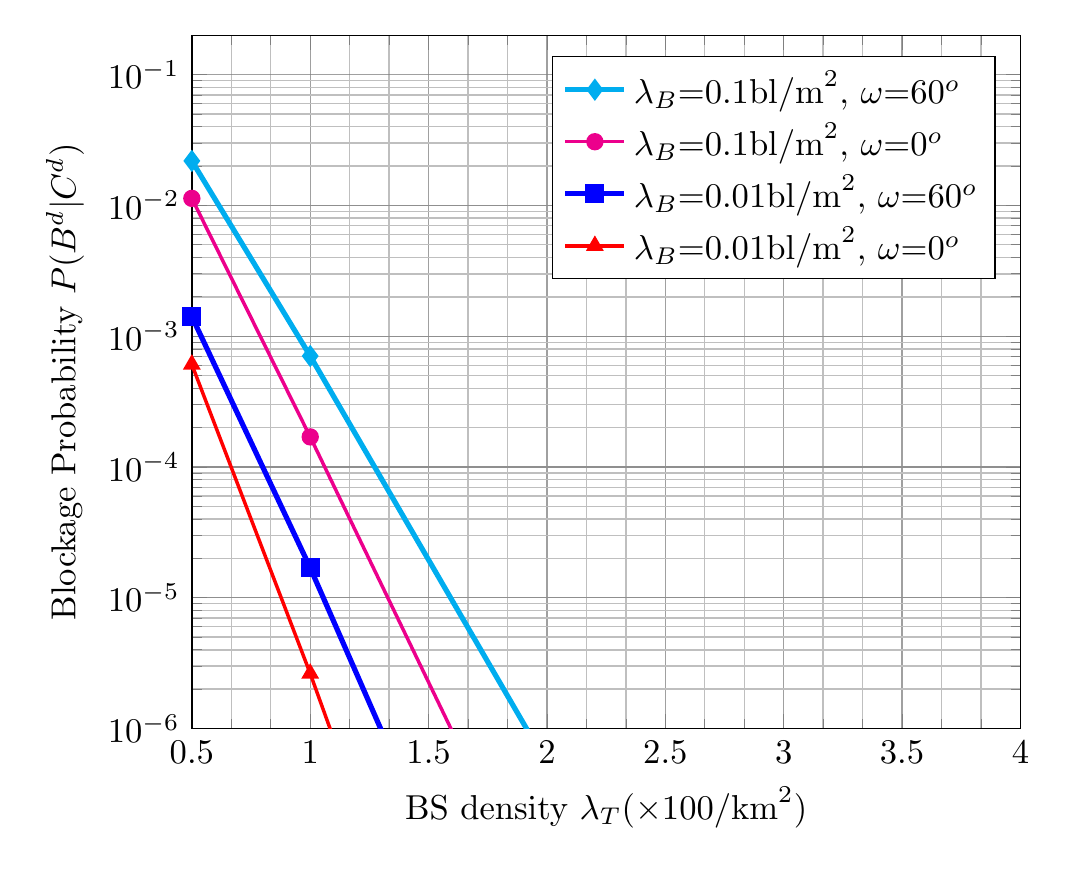}
        \caption{\small{Dynamic blockage probability.}}
        \label{fig:condBlprobR200}
	\end{subfigure}%
       \begin{subfigure}[b]{0.33\textwidth}
		\captionsetup{skip=-2pt}	
       \includegraphics[width=\textwidth]{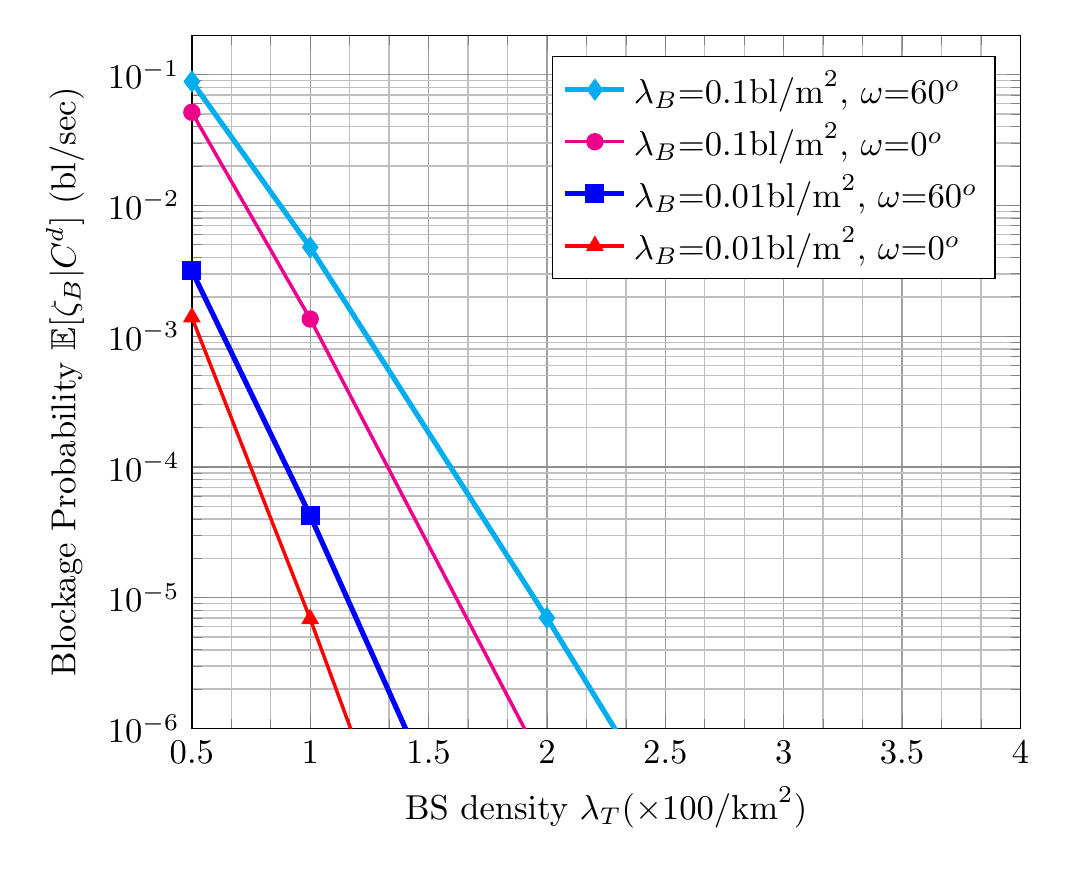}
       \caption{\small{Expected dynamic blockage frequency.}}
       \label{fig:condBlfreqR200}
	\end{subfigure}%
    \begin{subfigure}[b]{0.315\textwidth}
    	\captionsetup{skip=-2pt}	
		\includegraphics[width=\textwidth]{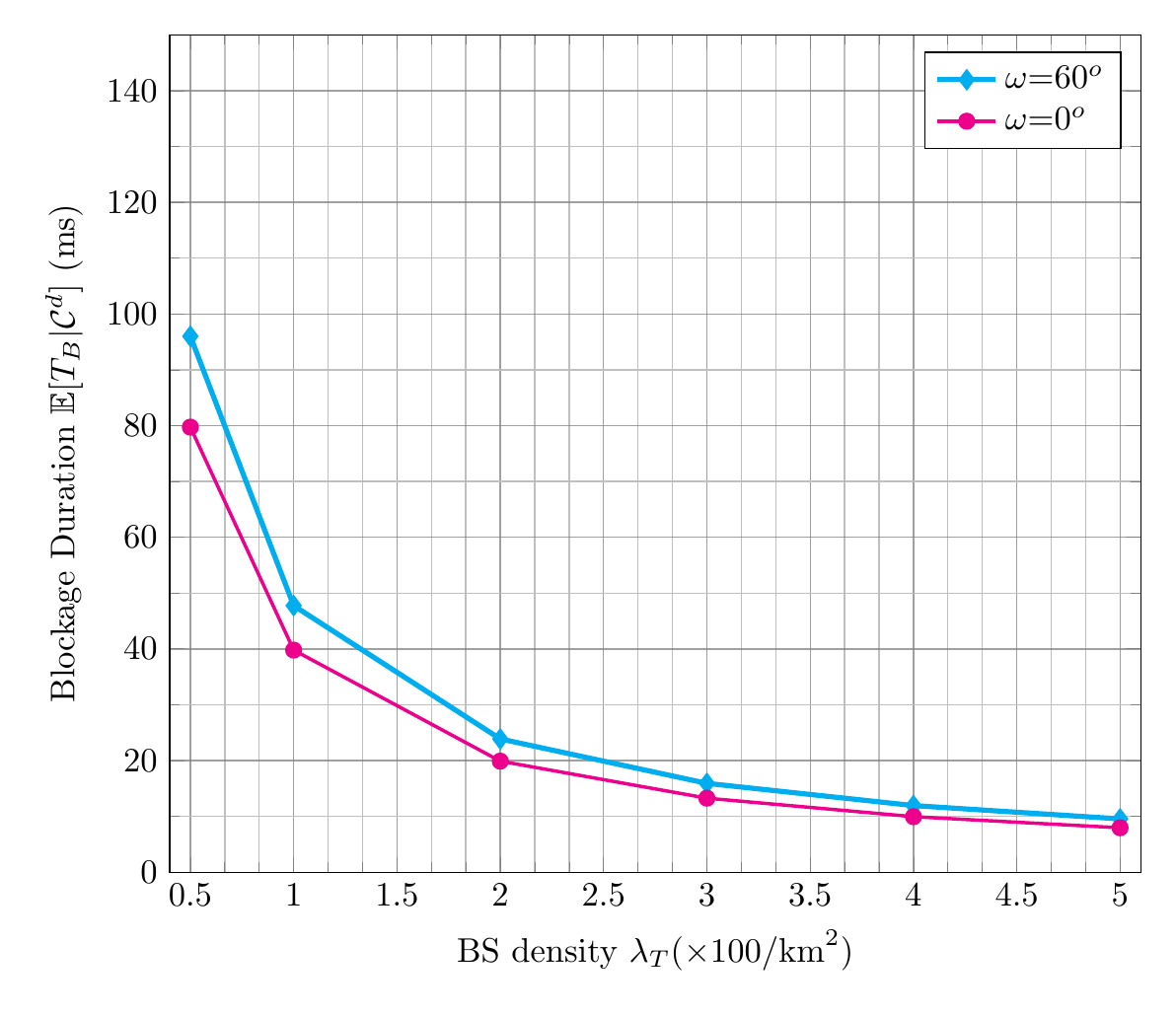}
		\caption{\small{Expected dynamic blockage duration.}}
		\label{fig:condBldurR200}
	\end{subfigure}%
\caption{Open park scenario: Conditional dynamic blockage statistics given coverage for a communication range $R=200$ m.}
\label{fig:condBlR200}
\vspace{-4mm}
\end{figure*}

% \begin{figure}[!t]
% 	\centering
% 	\includegraphics[width=0.435\textwidth]{img/prob_blockage_cond}
%     \captionsetup{skip=-2pt}
% 	\caption{Blockage Probability conditioned on coverage event vs BS density for different values of blocker density and self-blockage angle.}
% 	\label{fig:condBlProb}
% 	\vspace{-6mm}
% \end{figure}
% \begin{figure}[!t]
% 	\centering
% 	\includegraphics[width=0.435\textwidth]{img/blockage_freq_cond}
%     \captionsetup{skip=-2pt}
% 	\caption{Expected blockage frequency conditioned on coverage event vs BS density.}
% 	\label{fig:condBlfrac}
% 	\vspace{-5mm}
% \end{figure}
% \begin{figure}[!t]
% 	\centering
% 	\includegraphics[width=0.42\textwidth]{img/blockage_time}
%     \captionsetup{skip=-2pt}
% 	\caption{Expected blockage duration conditioned on coverage event vs BS density. Note that the theoretical blockage duration is the same for different blocker densities for a fixed self-blockage angle $\omega$.}
% 	\label{fig:condBldur}
% 	\vspace{-5mm}
% \end{figure}

%--------------Results-----------------
\section{Numerical Evaluation}
\label{sec:numResults}
We consider two outdoor scenarios for 5G mmWave cellular networks as shown in Figure \ref{fig:scenarios}.
%: (i) open park scenario: lack of permanent structure, which results in low static blockage and reflecting paths, and (ii) urban scenario: large number of static blockers, which results into a higher static blockage and reflecting path. , due to lack of permanent structures and reflecting paths. Further, indoor environment and cities are considered under urban scenario, due to high number of permanent structures and reflecting paths.     
\begin{enumerate}
\item \textit{Open park scenario}: In an open park scenario (shown in Figure \ref{fig:scenarios}(\subref{fig:open})), due to lack of buildings and permanent structures, we assume that the UE does not suffer static blockages. We also assume that due to the lack of reflecting surfaces, we may not have strong NLOS paths available. Other environments, such as those found in rural areas, may also fall in this category. We only considered dynamic and self-blockage of LOS links in this scenario. 
% \textcolor{red}{We consider applications such as AR/VR in parks, V2X, and car entertainment services that may include AR/VR streaming for this scenario.}
% LOS paths for the evaluation of performance. The LOS path between UE and BSs in this scenario can still suffer dynamic and self-blockage. 
%Further, applications that may be used by users mainly consists of AR/VR video streaming and online gaming.
\item \textit{Urban scenario}: In the case of an urban scenario (shown in Figure \ref{fig:scenarios}(\subref{fig:urban})), the signal to the UE may suffer static blockages due to buildings. At the same time, it may also have many NLOS paths available due to refections by buildings and other structures. 
% \textcolor{red}{ Due to aforementioned reasons, indoor environment and cities are considered under urban scenario.} 
We evaluate our blockage analysis in the urban area with static blockages, LOS and NLOS paths along with dynamic and self-blockage. 
% \textcolor{red}{All of the indoor applications, V2X in cities, and M2M communication are considered in this scenario.}
%A network provider may have to support the QoS requirements of a verity of URLLC application in the metropolitan areas ranging from MTC, eHealth, tactile Internet, AR/VR, and autonomous driving.  
\end{enumerate}

The typical parameters used for simulation and numerical evaluation are presented in the Table~\ref{tab:params}. A list of applications as well as their latency and reliability requirements are presented in Table~\ref{tab:applications}. The checks and the cross marks in Table~\ref{tab:applications} represent whether caching can be used for the applications to satisfy the QoS requirements.

\subsection{Open Park Scenario}

\textbf{Simulation framework:} For open areas, we conducted our analysis using both the stochastic geometry model and the hexagonal cell deployment. 
% We first evaluate the effect of random topology on the probability and expected duration of blockages, and then present a comparison of the random model with the hexagonal cells case.   
We consider two values of dynamic blocker density, $0.01$ bl/m$^2$ and $0.1$ bl/m$^2$, and two values of the self-blocking angle $\omega$ ($0^{\text{o}}$ and $60^{\text{o}}$) for our study. We compare our analytical results using stochastic geometry with a MATLAB simulation\footnote{Our simulator MATLAB code is available at github.com/ishjain/mmWave.},
where the movement of blockers is generated using the random waypoint mobility model~\cite{randomWayPoint,randomWayPointSim}. For the simulation, we consider a square of size $200 \ \text{m}\times 200 \ \text{m}$ with blockers located uniformly in this area.
% that perfectly fit a disc of radius $R = 100$ m.
% Blockers are uniformly distributed in the considered rectangular area.
Our area of interest is the disc $B(o, R)$ of radius $R=100 $ m, which perfectly fits in the considered square area. The blockers choose a direction randomly, and move in that direction for a time-duration chosen uniformly in $[0,60]$ seconds. For the simulation, we performed 10,000 runs and each run consisted of the equivalent of 3 hours of blockers mobility. 
%\textcolor{red}{Each iteration in our simulation takes up to an hour of computation on CPU. We run the simulation for 10,000 iterations by parallelizing them on New York University High Performance Computing (HPC) clusters.} 
To maintain a fixed density of blockers in the square region, we consider that once a blocker reaches the edge of the square, it gets reflected. Note that for the given height of BSs, blockers, and the UE in Table \ref{tab:params}, we obtain from equation (\ref{eqn:rieff}) that the blockers can block the LOS link only when they are within a range corresponding to a small fraction (11\%) of the link length from the UE.

%We used the Mathwork code~\cite{randomWayPointSim} for this purpose. The simulation runs for an hour. We note the time instant when the blocker crosses a BS-UE link and generate a blockage duration through a realization of an exponential distribution with mean $\mu=2$. Further, we collect the time-series of alternate blocked/unblocked intervals for all the BS-UE links and take their intersection to obtain a time-series that represent the events of blockage of all available BSs. The blockage probability, frequency, and duration can be obtained from this time-series. Finally, we repeat the procedure for 10,000 iterations and report the average results. Rest of the simulation parameters are presented in Table~\ref{tab:params}.   

\begin{figure}[!t]
	\vspace{-2mm}
	\centering
	\includegraphics[width=0.43\textwidth]{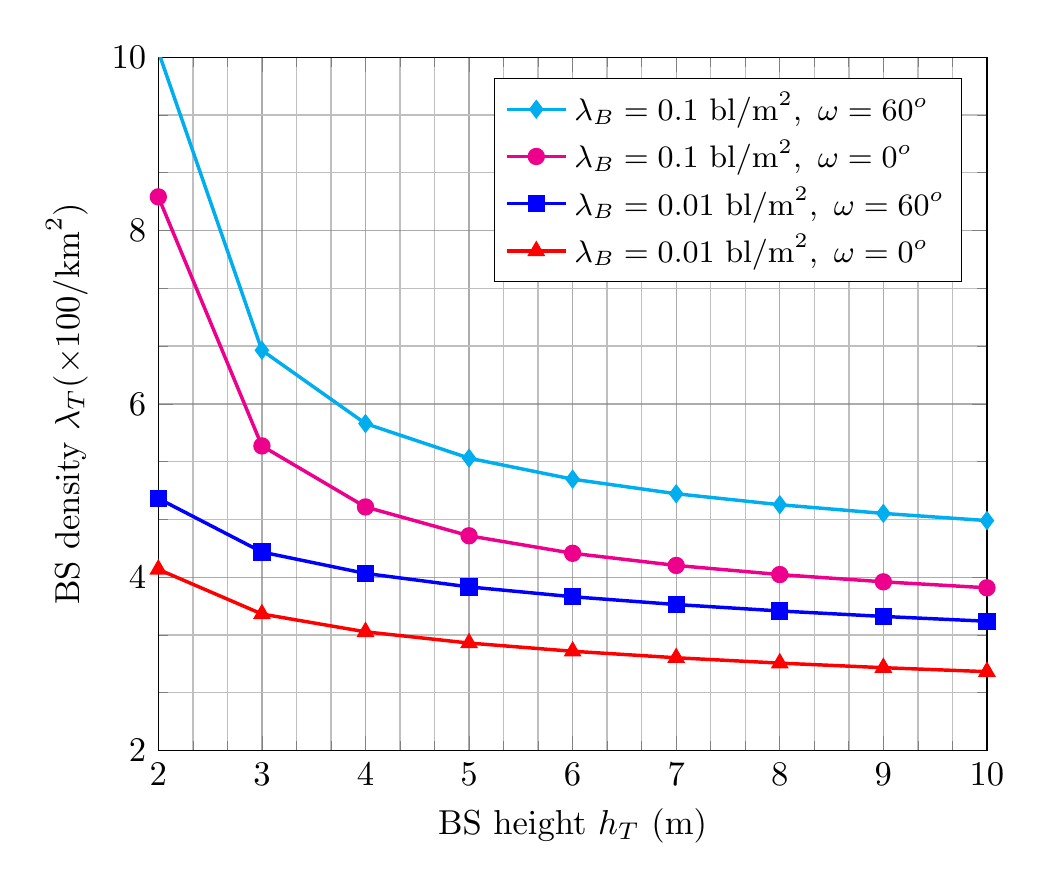}
    \captionsetup{skip=-2pt}
	\caption{Open park scenario: the trade-off between BS height and density for fixed dynamic blockage probability $P(B^{d}|\setC^{d})=10^{-5}$. }
	\label{fig:heightvsdensity}
	\vspace{-6mm}
\end{figure}

Figure~\ref{fig:condBl} presents the blockage statistics (blockage probability, expected blockage frequency, and expected blockage duration) of LOS links in the open park scenario for a communication range of 100 m. %\textcolor{red}{In this scenario, we consider only V2X and car entertainment services that include AR/VR.} 
For the analysis of reliability, we obtained the blockage probability and expected blockage duration, when the UE has at least one serving BS, i.e., the UE is always in the coverage area of at least one BS. 
The error bars in Figure~\ref{fig:condBl} represents a single standard deviation from the average value. 

\textbf{Impact of blocker density and communication range:} We consider the applications shown in Table~\ref{tab:applications} and evaluate the minimum density of BS required to satisfy their reliability and latency requirements. For V2X, the required reliability is 99\%, i.e., blockage probability 10$^{-2}$ and a maximum allowed latency of 20 ms. From Figure~\ref{fig:condBl}(\subref{fig:condBlprob}), for a communication range of 100 m, we can observe that to satisfy the reliability requirement of V2X, 200 BS/km$^2$ and less than 100 BS/km$^2$ may be required for blocker densities of 0.1 bl/m$^2$ and 0.01 bl/m$^2$, respectively. From Figure~\ref{fig:condBlR200}(\subref{fig:condBlprobR200}), we can observe that by increasing the communication range to 200 m, required number of BSs reduces significantly (less than 100 BS/km$^2$ and less than 50 BS/km$^2$ for blocker densities of 0.1 bl/m$^2$ and 0.01 bl/m$^2$, respectively). Furthermore, as caching is not a viable solution for V2X for achieving low latency, we can observe from Figure~\ref{fig:condBlR200}(\subref{fig:condBldurR200}) that approximately 200 BS/km$^2$ may be required to satisfy the latency requirement for $R=200$ m. However, a higher communication range (greater than 200 m) and traffic offloading employing the tight coupling of 5G NR and LTE network stacks may reduce the required BSs further.

 We now consider applications such as AR/VR car entertainment services, which require 99.99\% reliability and 20 ms maximum latency.
From Figure~\ref{fig:condBl}(\subref{fig:condBlprob}), for a communication range of 100 m, 
%we can observe that the blockage probability decreases exponentially with BS density.  For URLLC applications, this means that only a dense mmWave cellular network can achieve high reliability. 
we may require around 300 BS/km$^2$ and more than 400 BS/km$^2$ for blocker densities of 0.01 bl/m$^2$ and 0.1 bl/m$^2$, respectively. 
%decrease interruptions in the data transmission. For example, for a blocker density of 0.1 bl/m$^2$, a BS density of $100$/km$^2$ can decrease the interruptions to once in ten seconds, $200$/km$^2$ can decrease them to once in 100 seconds, and $300$/km$^2$
% \textcolor{red}{cellular radius is approximately $\frac{1}{\sqrt[]{\pi\lambda_T}}$}
%ecrease them to once in 1000 seconds. Reducing the frequency of interruptions is particularly crucial for AR/VR applications, therefore from this perspective a density of 200-300/km$^2$ may be required. This corresponds to about 6 to 9 BS, respectively,  within the range of each UE.
 %While this study presents a rough estimate of the required BS density for AR/VR streaming in car entertainment system, one can do a study for this by considering the highway lanes system and refined parameters.
From Figure~\ref{fig:condBlR200}(\subref{fig:condBlprobR200}), with a communication range of 200 m, the number of required BSs reduces significantly (less than 100 BS/km$^2$ and 150 BS/km$^2$ for blocker densities of 0.01 bl/m$^2$ and 0.1 bl/m$^2$, respectively). However, we can observe from Figure~\ref{fig:condBlR200}(\subref{fig:condBlfreqR200}) that the frequency of blockages also reduces significantly with a higher value of communication range. Thus, to mitigate the effect of these rare blockage events, caching of content at the UE can be used to achieve the reliability and latency requirements of AR/VR applications. We can observe from Figure~\ref{fig:condBlR200}(\subref{fig:condBldurR200}) that for a communication range of 200 m, caching of $40-60$ ms worth of data is required for a BS density $150$ BS/km$^2$ to have uninterrupted service. Thus, a BS density of $150$ BS/km$^2$ may be sufficient to achieve both reliability and latency of AR/VR for car entertainment services. As mentioned earlier, this number can be further reduced by considering a higher communication range (greater than 200 m) or tight coupling of 5G NR and LTE network stacks.

\textbf{Accuracy of simulation results:} From Figure~\ref{fig:condBl}(\subref{fig:condBlprob}) and~\ref{fig:condBl}(\subref{fig:condBlfreq}), we observe that both simulation and analytical results are approximately the same for a blocker density of $0.01$ bl/m$^2$, but deviates modestly for a high blocker density of $0.1$ bl/m$^2$, especially for high BS densities. Note that this deviation is due in part to our assumption that there is no more than one blocker blocking the link at the same time. However, no such assumption is made in simulations.
From Figure~\ref{fig:condBl}(\subref{fig:condBldur}), we observe our analytical results on blockage duration follow closely the simulation results for low blocker density ($0.01$ bl/m$^2$). For a high blocker density (0.1 bl/m$^2$), the percentage deviation is higher but still acceptable. 

\begin{figure}[!t]
\centering
\includegraphics[width=0.43\textwidth]{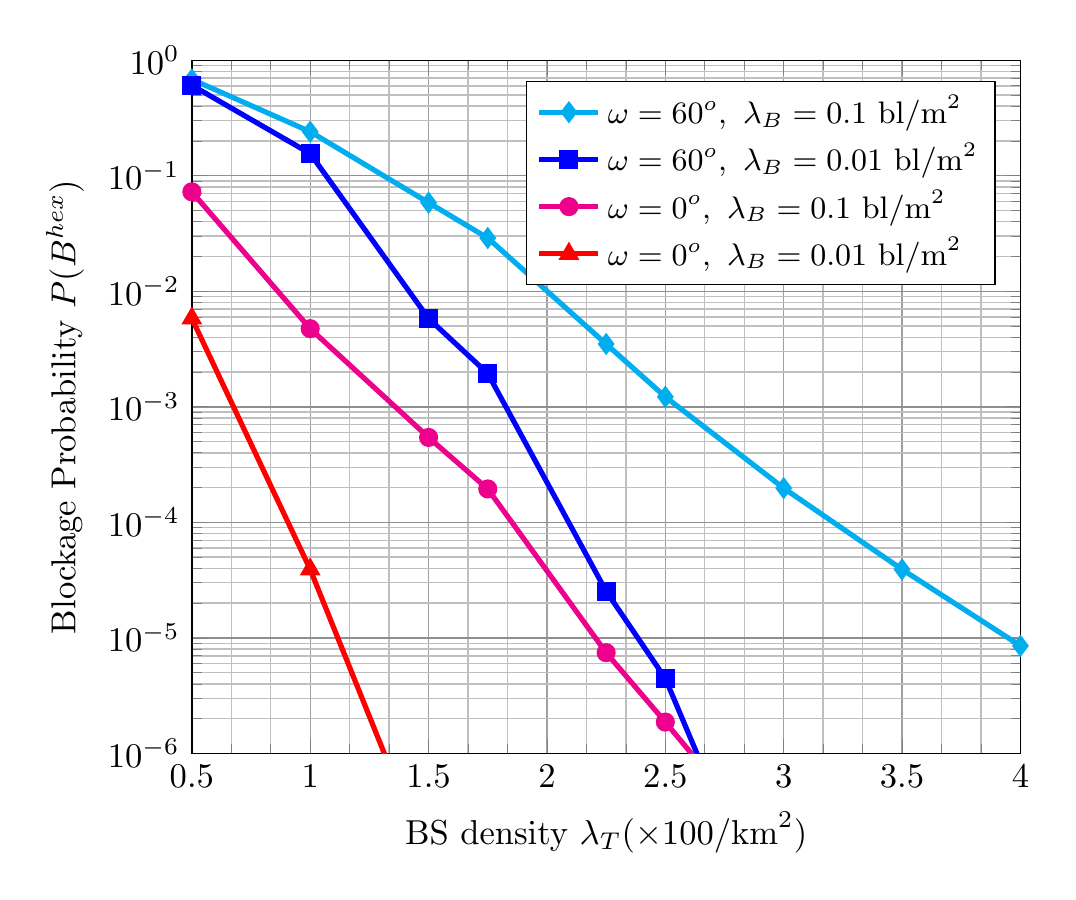}
\captionsetup{skip=-2pt}
\caption{Hexagonal cell open park scenario: blockage probability.}
\label{fig:pB_Hex}
\vspace{-6mm}
\end{figure}

\begin{figure}[!t]
    \centering
    \includegraphics[width=0.42\textwidth]{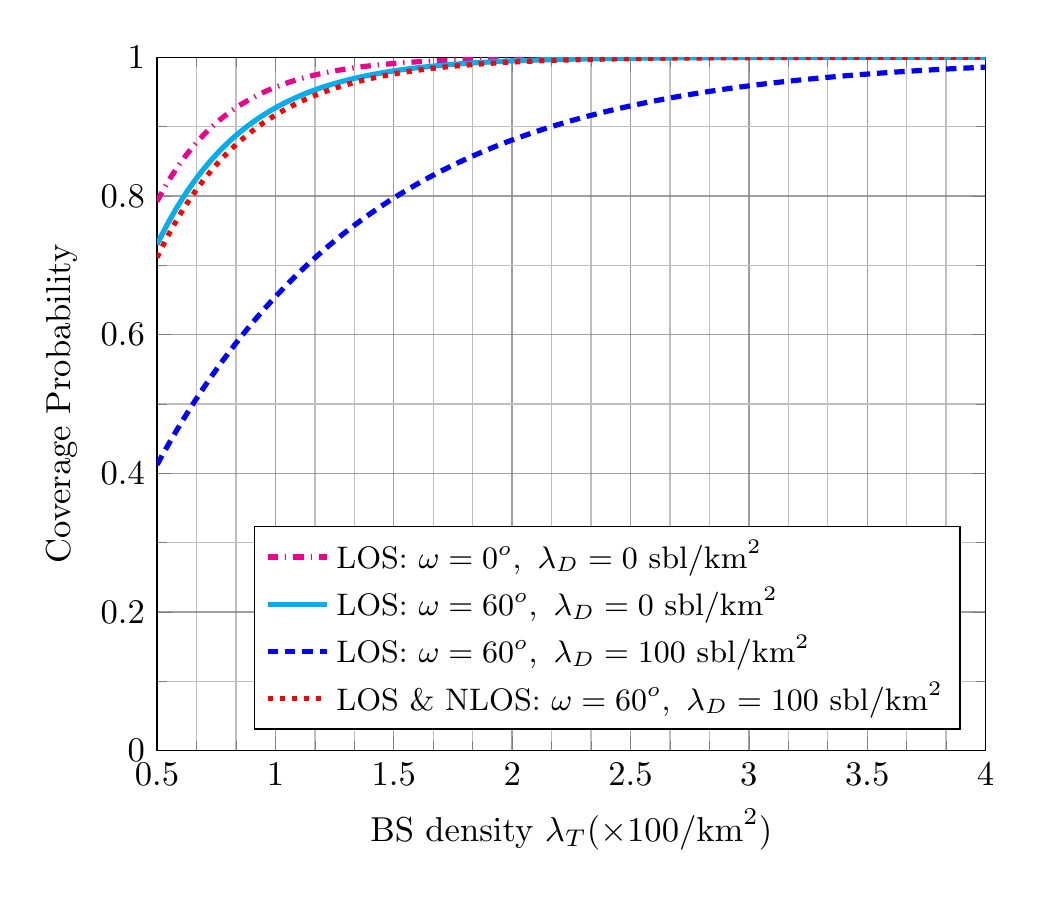}
    \captionsetup{skip=-2pt}
    \caption{Coverage probability for various scenarios.}
    \label{fig:coverage}
    \vspace{-5mm}
\end{figure}

\begin{figure*}[!t]
    \centering
    \begin{subfigure}[b]{0.4\textwidth}
        \captionsetup{skip=-2pt}    
        \includegraphics[width=\textwidth]{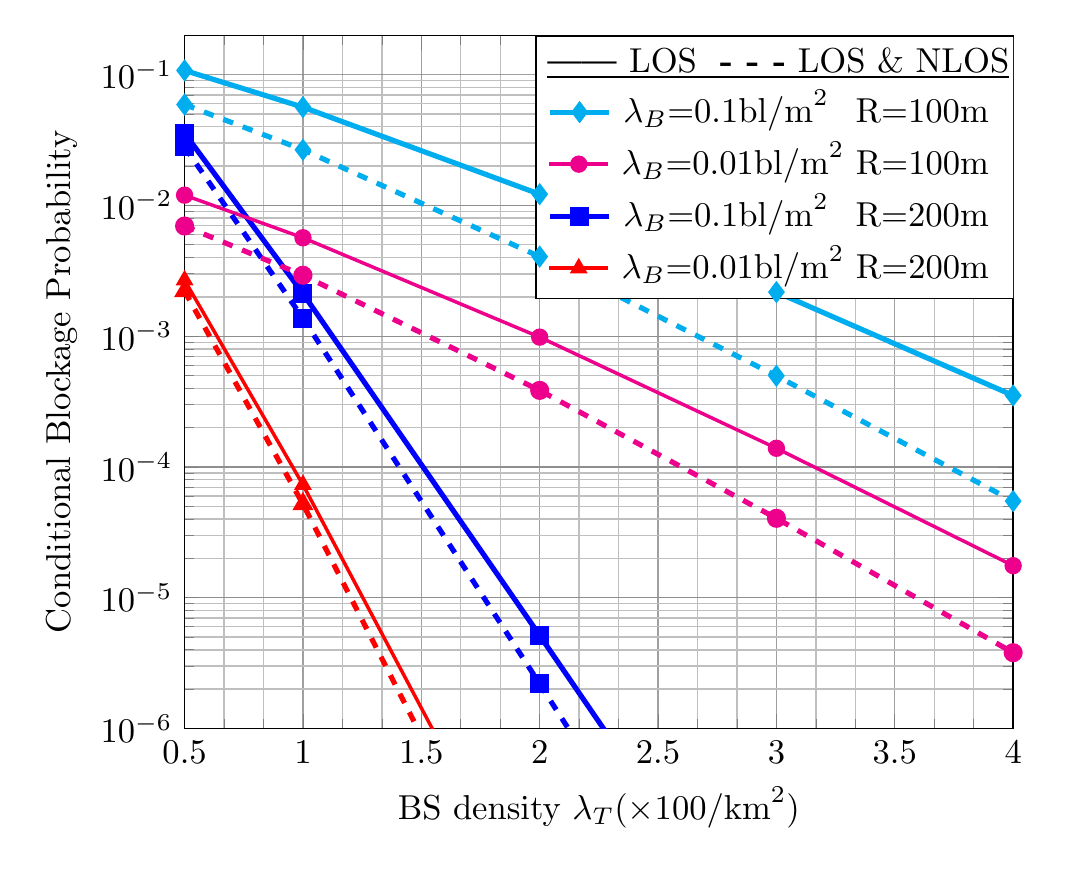}
        \caption{\small{Conditional blockage Probability (given coverage).}}
        \label{fig:StaticcondBlprob}
    \end{subfigure}\hspace{12mm}%
    \begin{subfigure}[b]{0.39\textwidth}
        \captionsetup{skip=-2pt}    
        \includegraphics[width=\textwidth]{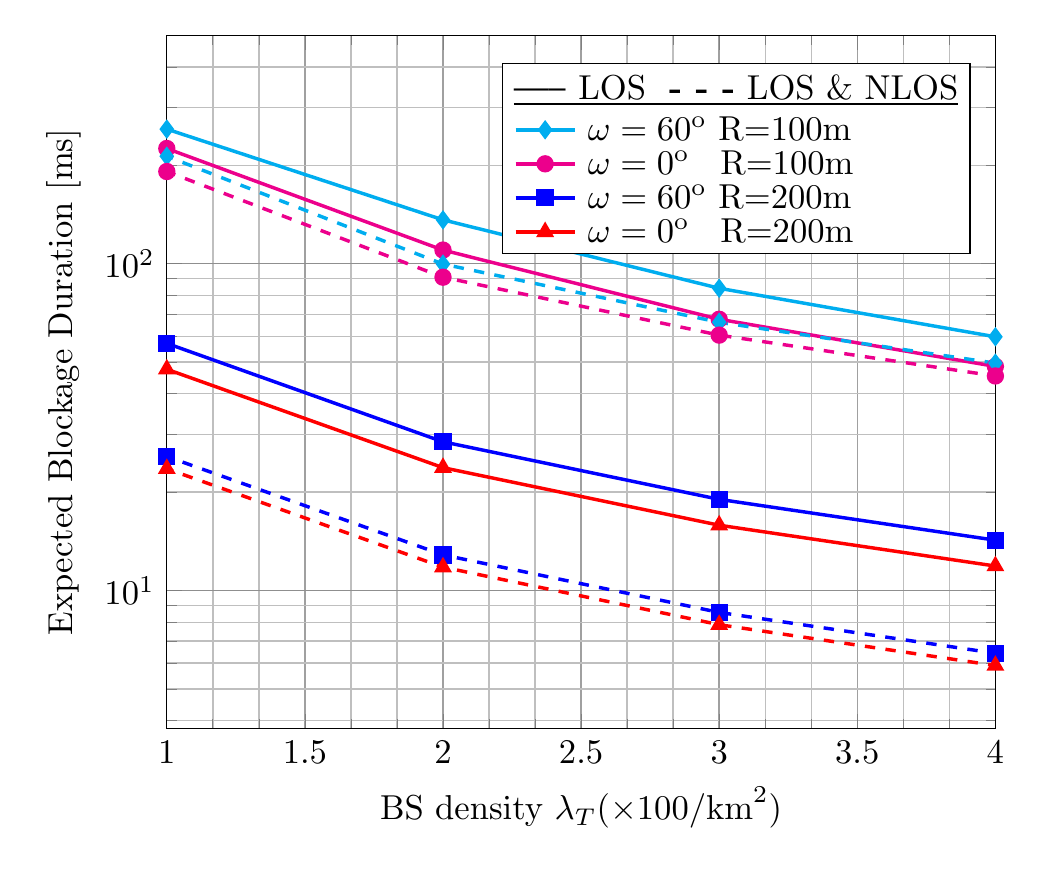}
        \caption{\small{Expected blockage duration (given coverage).}}
        \label{fig:StaticcondBldur}
    \end{subfigure}%
\caption{Urban scenario: Conditional blockage probability and duration (given coverage) considering 1) only LOS path, 2) both LOS and strong reflections (NLOS paths). We consider a static blockage density $\lambda_S=100$ sbl/km$^2$ and self-blockage angle $\omega=60^{\text{o}}$. Note that blockage duration is independent of dynamic blocker density.}
\label{fig:staticcondBl}
\vspace{-5mm}
\end{figure*}

\textbf{BS height--density tradeoff:} To increase the service reliability, apart from increasing the BS density, placing the BSs at a greater height may reduce the probability of blockage. The BS height vs. density trade-off is shown in Figure \ref{fig:heightvsdensity}. Note, for example, that doubling the height of the BS from 4 m to 8 m reduces the BS density requirement by approximately 20\%. 
% for blocker density $\lambda_B=0.1$  bl/m$^2$ and self-blockage angle $\omega = 60^{o}$.
The optimal BS height and density can be obtained by performing a cost analysis based on this trade-off.

\textbf{Results for the hexagonal cell model:} Finally, we present the results for the hexagonal cell model in Figure~\ref{fig:pB_Hex} and compare them with those for the random model in Figure \ref{fig:condBl}(\subref{fig:condBlprob}) for a communication range of 100 m. Note that for deterministic locations of BSs in hexagonal cells case, we were unable to get the closed-form solution and we used numerical integration to evaluate the performance. For the self-blockage angle 0$^{\text{o}}$, we observe that the blockage probability of $10^{-4}$  can be achieved with about half the BSs (less than 100 BS/km$^2$ for blocker density 0.01 bl/m$^2$, and less than 200 BS/km$^2$ for blocker density 0.1 bl/m$^2$). Next, we consider a self-blockage angle of $60^\text{o}$, where we computed an upper bound on the blockage probability. In the hexagonal cell case, with a blocker density of 0.01 bl/m$^2$ and self-blockage angle of 60$^{\text{o}}$, an upper bound of 200 BS/km$^2$ will be sufficient to achieve $10^{-4}$ blockage probability, which is significantly lower than that needed for the random topology in Figure \ref{fig:condBl}(\subref{fig:condBlprob}) (300 BS/km$^2$).

\textbf{Discussion on the rate of handovers:} To mitigate the effect of blockages by mobile blockers, the UE needs to handover frequently. Given that there is at least one unblocked BS (which occurs with a probability close to 1 for high BS densities), the rate of handover is equivalent to the blockage rate of a single BS-UE link (given in (\ref{eqn:SingleBS})). We obtain the average handover rate to be 0.05 handovers/s/UE and 0.5 handovers/s/UE for blocker densities of 0.01 bl/m$^2$ and 0.1 bl/m$^2$, respectively for $R=100$ m. These handover rates are much higher than those typically seen in 4G systems. This highlights the need for appropriate protocol design and resource allocation to make these  frequent handovers seamless so that they do not affect application layer QoS.

% We take the average BS-UE distance as $\frac{2R}{3}$ and obtain the average handover rate in table below: 
% }

% \begin{tabular}{|c|c|c|}\hline
%      Handover rate& $\lambda_B=0.01$ bl/m$^2$ &$\lambda_B=0.1$ bl/m$^2$\\\hline
%     $R=100$ m & 0.05 handovers/s & 0.5 handovers/s\\ \hline
%     $R=200$ m & 0.09 handovers/s&0.9 handovers/s\\ \hline
% \end{tabular}
% Although the handover process is very fast, a high frequency of handover may cause additional overhead to the system of The handover 

\subsection{Urban Scenario}
In an urban environment, the LOS paths to the UE may be blocked by buildings or permanent structures in addition to the dynamic and self-blockage. However, reflections from the buildings and other permanent structures may help in achieving higher coverage and service reliability by providing additional NLOS paths.
\textbf{Coverage analysis:} From Figure~\ref{fig:coverage}, we can observe that the coverage may degrade significantly due to static and self-blockage if the NLOS paths are not available. To achieve coverage of 90\% in the urban scenario, around 200 BS/km$^2$ may be required in the absence of NLOS paths. However, to achieve 90\% coverage in the presence of NLOS paths, significantly fewer BSs will be required ($<$ 100 BS/km$^2$). 
For NLOS links, we used the NLOS communication range of $\Rt = R\times 10^{-\frac{\gamma^{\text{NLOS}}}{10\times \textrm{PLE}}}$, where PLE is the path loss exponent and $\gamma^{\text{NLOS}}$ is the attenuation (in dB) due to reflection of the signal. We use PLE$=2.69$ given by the path loss model in~\cite{CellularCap-Rap} at 73 GHz and assume an average attenuation of $\gamma^{\text{NLOS}}=5$ dB. We thus get a NLOS range $\Rt=65$ m corresponding to the LOS range $R=100$ m and an $\Rt=130$ m corresponding to $R=200$ m. When considering these results, note that prior work on a capacity analysis in \cite{CellularCap-Rap} and \cite{bai2015coverage} suggest that a density of only 30-100 BS/km$^2$ is enough to meet the capacity requirements of mmWave users. 

\textbf{Impact of static blockages and NLOS paths:} Figure~\ref{fig:staticcondBl} presents the blockage probability and duration in the urban scenario. 
% for two values of the communication range ($R=100$ m and $R=200$ m). 
% We observe that the LOS blockage probability 
We observe that NLOS paths provide a surprisingly modest benefit in achieving a lower probability of blockage and expected blockage duration.
This is due to the fact that a UE may have far fewer BS within $\Rt$, which is significantly less than $R$, to provide NLOS paths as compared to LOS paths. UEs which are almost isolated, and are connected to only a few relatively distant BS, will suffer blockages disproportionately, and may have no NLOS paths to mitigate this effect.
% However, the probability of blockage in the NLOS case is bounded by the fact that there is a high probability of not having any real BS within $\Rt$, i.e., there is a high probability that the UE will not have any strong NLOS path. 
% On the other hand, a higher value of communication range can help significantly in achieving a low probability of blockage and expected blockage duration.

We now consider the applications mentioned in Table~\ref{tab:applications} and analyze the required number of BSs to satisfy both reliability and latency requirements. For applications such as smart grid and the tactile Internet, where high reliability (99.999\%, i.e., $10^{-5}$ blockage probability) is required, we may require a very high number of BSs, when we assume a communication range of 100 m. However, with a 200 m communication range, we may need less than 200 BS/km$^2$ (still a very high number) to achieve 99.999\% reliability (See Figure~\ref{fig:staticcondBl}(a)). Note that for V2X, tactile Internet, and smart grid, caching is not a viable solution. Thus, we may require around 200 BS/km$^2$ to satisfy the latency requirement of these applications (See Figure~\ref{fig:staticcondBl}(b) and Table~\ref{tab:applications}). The results suggest that even with a rich scattered environment and higher communication range, we need a potentially uneconomically high number of mmWave BSs to satisfy the QoS requirements of URLLC applications as compared to the BSs required for capacity and coverage~\cite{CellularCap-Rap,bai2015coverage}, which are typically of the order of $30-100$ BS/km$^2$. Thus, we may require tight coupling of different RATs such as 5G NR, LTE, and WiFi to collectively achieve QoS requirement for 5G applications to maintain a lower 5G mmWave BS density. Furthermore, note that most of these URLLC applications will require a $0$ ms SIT, further necessitating the tight coupling of different RATs. Note that the above discussion is necessarily tentative since mmWave 5G networks are only now being deployed. Experience with such network deployments, and the resultant technological improvements, may require us to revise our conclusions. However, we believe that the methodology developed in this paper, with appropriate amendments,  can still be used in designing future 5G mmWave networks.

\section{Conclusions}
\label{sec:conclusion}
In this paper, we presented simplified models to quantify key QoS parameters such as blockage probability and duration in mmWave cellular systems. We presented a generalized blockage analysis considering dynamic blockage due to mobile blockers, static blockages due to buildings and permanent structures and self-blockage due to the user's own body. The user is considered blocked when LOS and NLOS paths to BS around the UE are blocked simultaneously. We verified our theoretical model with MATLAB simulations for self-blockage and dynamic blockages. For the scenarios we considered, our results indicate that the density of BS required to provide an acceptable quality of experience for URLLC applications is much higher than that obtained by capacity or coverage requirements. These results suggest that the mmWave cellular network engineering may be driven by dynamic blockage rather than capacity or coverage requirements. Furthermore, the blockage events may be correlated for multiple BSs based on the blocker's size and location. This correlation, which we did not model, may result in an even higher blockage probability. We also present an analysis of blockage probability for regularly spaced hexagonal cells and showed that such a planned mmWave cellular architecture could reduce blockages events as compared to more randomly allocated BS locations. As pointed out in the introduction, sub-6 GHz bands could be used to maintain connectivity during dynamic blockages, but this requires tight control plane integration between the mmWave and sub-6 GHz bands and careful traffic engineering to prevent the random traffic overflow from mmWave bands from overwhelming sub-6 GHz capacity. In the future, we plan to address this issue, and issues related to correlation between blockage events. 

\appendix
\subsection{Proof of Lemma \ref{lemma:PN}}\label{app:proof_PNn}
The probability that a BS-UE link is not blocked (by static or self-blockage) follows from the independence of static blockages and the self-blockage by user's body. Denote $\mathcal{C}_i$ as the event that the $i$th BS is not blocked by either static blockage or self-blockage. We calculate the probability $P(\mathcal{C}_i|m)$ by utilizing the expressions for static and self-blockage from (\ref{eqn:pBstatic}) and (\ref{eqn:self}) respectively, and taking the average over the distance distribution $f_{R_i|M}(r|m)$ from (\ref{eqn:distribution}) as follows:
\begin{equation} \label{eqn:tempintegral}
\begin{split}
P(\mathcal{C}_i|m) &= \int_{r=0}^R pe^{-(\beta r+\beta_0)} \frac{2r}{R^2}dr\\
&=pq,
\end{split}
\end{equation}
where $q = \int_{r=0}^R e^{-(\beta r+\beta_0)} \frac{2r}{R^2}dr$ is solved to a closed-form expression given in (\ref{eqn:p}). We assume that given $m$ BSs in the disc $B(o,R)$, each BS may get blocked independently with probability $pq$. Therefore, the distribution of the number of BSs $n$ which are not blocked by static or self-blockage follows a binomial distribution,
\begin{equation}
P_{N|M}(n|m) = \binom{m}{n}(pq)^n(1-pq)^{m-n},\qquad n\le m.
\end{equation}
Taking the average over the distribution $P_M(m)$ given in (\ref{eqn:poisson}), we get the distribution of $N$ as follows:
\begin{equation}
\begin{split}
P_N(n) &= \sum_{m=0}^{\infty}P_{N|M}(n|m)P_M(m)\\
&= \sum_{m=n}^{\infty}\!\!\binom{m}{n} (pq)^n(1-pq)^{m-n} \frac{[\lambda_{T} \pi R^2]^m}{m!}e^{-\lambda_{T} \pi R^2}\\
&= \!\!\!\sum_{m-n=0}^\infty \!\!\!\frac{1}{(m-n)!}\left((1-pq)\lambda_T\pi R^2\right)^{m-n}\!\!e^{-(1-pq)\lambda_{T} \pi R^2}\\&\qquad\qquad\qquad\qquad\qquad\quad\times\frac{[pq\lambda_{T} \pi R^2]^n}{n!}e^{-pq\lambda_{T} \pi R^2} \\
&= \frac{[pq\lambda_{T} \pi R^2]^n}{n!}e^{-pq\lambda_{T} \pi R^2}.
\end{split}
\end{equation}
Note that we obtain the last equality using the fact that the sum of a Poisson distribution over its range $[0,\infty]$ is one.
This concludes the proof of Lemma \ref{lemma:PN}.

% \end{proof}

\subsection{Proof of Theorem 1}\label{app:th_prob}
The probability $P(B^{LOS}|m)$ is given by
\begin{equation}\label{eqn:proofpB1_firsthalf}
\begin{split}
&P(B^{LOS}|m)\\&=\!\!\int_{r_1}\!\!\!\cdots\!\int_{r_m} \prod_{i=1}^m \;P(B_i^{LOS}|m,r_i) f(\{r_i\}|m)\; dr_1\cdots dr_m\\
& =\!\!\int_{r_1}\!\!\!\cdots\!\int_{r_m} \prod_{i=1}^m\left[\left(1-pe^{-(\beta r_i+\beta_0)}\frac{1}{1+\frac{C}{\mu}r_i}\right) f(r_i|m)\right] \\&\qquad\qquad\qquad\qquad\qquad\qquad\qquad \times dr_1\cdots dr_m.\\
% & = \prod_{i=1}^m\int_{r=0}^R \left(1-pe^{-(\beta r+\beta_0)}\frac{1}{1+\frac{C}{\mu}r}\right)\frac{2r}{R^2}dr\\
\end{split}
\end{equation}
Note that the $m$-fold integral in (\ref{eqn:proofpB1_firsthalf}) can be solved separately (as the integrand can be separated into independent products) by solving $m$ identical integrals of the kind $\int_{r_i=0}^{R} \left(1-pe^{-(\beta r_i+\beta_0)}\frac{1}{1+\frac{C}{\mu}r_i}\right)\frac{2r_i}{R^2} d r_i$ as follows:
\begin{equation}\label{eqn:proofpB1}
\begin{split}
&P(B^{LOS}|m)\\
&=\left(\int_{r=0}^R \left(1-pe^{-(\beta r+\beta_0)}\frac{1}{1+\frac{C}{\mu}r}\right)\frac{2r}{R^2}dr\right)^m\\
&=\left(1-p\int_{r=0}^R \frac{e^{-(\beta r+\beta_0)}}{1+\frac{C}{\mu}r}\frac{2r}{R^2} dr\right)^m\\
&=(1-ap)^m,
\end{split}
\end{equation}

where we defined $a$ in (\ref{eqn:a}).
We now evaluate $P(B^{LOS})$ as
\begin{equation}\label{eqn:probPBpart2Dyn}
\begin{split}
   &P(B^{LOS})=\sum_{n=0}^{\infty} P(B^{LOS}|m)P_M(m)\\
   &=\sum_{m=0}^\infty (1-ap)^m  \frac{[\lambda_{T} \pi R^2]^m}{m!}e^{-\lambda_{T} \pi R^2} \\
 & = \sum_{m=0}^\infty \frac{[(1-ap)\lambda_{T} \pi R^2]^m}{m!}e^{-(1-ap)\lambda_{T} \pi R^2} e^{-ap\lambda_{T} \pi R^2}\\
&= e^{-ap\lambda_{T} \pi R^2}, \\
\end{split}
\end{equation}
where the last equality is obtained using the fact that the sum of a Poisson distribution over its range $[0,\infty]$ is one.
Finally, the conditional blockage probability $P(B^{LOS}|\setC^{LOS})$, conditioned on the coverage event $\setC^{LOS}$, is obtained as follows:
\begin{equation}
P(B^{LOS}) = P(B^{LOS}|\setC^{LOS})P(\setC^{LOS})+1-P(\setC^{LOS})
\end{equation}
and therefore,
\begin{equation}\label{eqn:pureBlk}
\begin{split}
P(B^{LOS}|\setC^{LOS})
&= \frac{P(B^{LOS})-\left(1-P(\setC^{LOS})\right)}{P(\setC^{LOS})}\\
&=\frac{e^{-a p\lambda_T\pi R^2} - e^{-pq\lambda_T\pi R^2}}{1-e^{-pq\lambda_T\pi R^2}}.
\end{split}
\end{equation}

This concludes the proof of Theorem \ref{th1}.

% \end{proof}

We now proceed with the proof of Corollary \ref{cor:LOSprobOpen}. We first derive $P(B^d|m)$ in a manner similar to (\ref{eqn:proofpB1}), but by setting $\beta=\beta_0=0$ for the dynamic blockage case without any static blockages (equivalently, we set $q=1$). Therefore, the expression of $a$ in (\ref{eqn:a}) gets simplified and is represented by $a'$ as follows:
\begin{equation}\label{eqn:aprimeDerive}
\begin{split}
a'&=\int_{r=0}^R \frac{1}{1+\frac{C}{\mu}r}\frac{2r}{R^2} dr\\
& = \frac{2\mu}{RC}-\frac{2\mu^2}{R^2C^2} \log\left(1+\frac{RC}{\mu}\right),
\end{split}
\end{equation}
where the intermediate steps of integration are omitted for brevity.
The rest of the analysis is similar to the derivation in (\ref{eqn:probPBpart2Dyn}) and (\ref{eqn:pureBlk}).

\subsection{Proof of Theorem \ref{th3}}\label{app:th_dur}
% \begin{proof}

Using the results from (\ref{eqn:exp_TBgivenN}), we find the expected blockage duration $\E\left[T^{LOS}|\setC^{LOS}\right]$, conditioned on the coverage event $\setC^{LOS}$ defined in (\ref{eqn:coveragep}), as follows:
\begin{equation}
\begin{split}
 &\mathbb{E} \left[T^{LOS}|\setC^{LOS}\right]\\
 &= \frac{\mathbb{E} \left[T^{LOS},n\ne 0\right]}{P(\setC^{LOS})}= \frac{\sum_{n=1}^\infty \frac{1}{n\mu} P_N(n)}{P(\setC^{LOS})} \\
&= \frac{\sum_{n=1}^\infty \frac{1}{n\mu} \frac{[pq\lambda_{T} \pi R^2]^n}{n!}e^{-pq\lambda_{T} \pi R^2}}{1-e^{-pq\lambda_T\pi R^2}} \\
&= \frac{e^{-pq\lambda_{T} \pi R^2}}{\mu\left(1-e^{-pq\lambda_T\pi R^2}\right)}\sum_{n=1}^\infty \frac{[pq\lambda_{T} \pi R^2]^n}{n n!}.
\end{split}
\end{equation}
This concludes the proof of Theorem \ref{th3}.

% \end{proof}
% \textcolor{red}{Is it important?}
% \textcolor{blue}{
% Let us consider the series expansion of $e^x$.
% \begin{equation}
% \begin{split}
% \MoveEqLeft e^x = 1+x+\frac{x^2}{2!}+\frac{x^3}{3!}+\frac{x^4}{4!}+\frac{x^5}{5!}+\cdots \qquad \qquad \qquad \\
% \MoveEqLeft e^x = 1+\sum_{n=1}^\infty \frac{x^n}{n!} \implies e^x-1 = \sum_{n=1}^\infty \frac{x^n}{n!} \\ \MoveEqLeft \implies \frac{e^x-1}{x} = \sum_{n=1}^\infty \frac{x^{n-1}}{n!}\\
% \end{split}
% \end{equation}
% Integrating both side, we have
% \begin{equation}
% \begin{split}
% \MoveEqLeft \int_{0}^{\lambda_T\pi R^2} \frac{e^x-1}{x} dx = \sum_{n=1}^\infty \int_{0}^{\lambda_T\pi R^2} \frac{x^{n-1}}{n!} dx\\
% \MoveEqLeft \text{E}\text{i}\left[\lambda_T\pi R^2\right] = \int_{0}^{\lambda_T\pi R^2} \frac{e^x-1}{x} dx = \sum_{n=1}^\infty \frac{[\lambda_{T} \pi R^2]^n}{n n!}.
% \end{split}
% \end{equation}
% Hence,
% \begin{equation*}
% \mathbb{E}\left[T_B|\setC\right] = \frac{e^{-\lambda_T\pi R^2}}{\mu\left(1-e^{-\lambda_T\pi R^2}\right)}\text{E}\text{i}\left[\lambda_T\pi R^2\right].
% \end{equation*}
% }%!!!!!!!!!!!!!!!!!!!!!!!!!!!!!imp!!!!!!!

\subsection{Approximation of the expected duration}\label{app:approx_dur}
The expectation of a function $f(n) = 1/n$ can be approximated using the Taylor expansions for the moments of functions of random variables~\cite{benaroya2005probability} as follows:
\begin{equation}\label{eqn:approxf_n}
\begin{split}
\E[f(n)] &= \E[f(\mu_n+(n-\mu_n))],\\
&\approx\E[f(\mu_n)+f'(\mu_n)(n-\mu_n)+\frac{1}{2}f''(\mu_n)(n-\mu_n)^2]\\
&= f(\mu_n)+\frac{1}{2}f''(\mu_n)\sigma_n^2=\frac{1}{\mu_n}+\frac{\sigma_n^2}{\mu_n^3},
\end{split}
\end{equation}
where $\mu_n$ and $\sigma_n^2$ are the mean and variance of Poisson random variable $N$ given in (\ref{eqn:PN}). We get the required expression by substituting $\mu_n = pq\lambda_T\pi R^2$ and $\sigma_n^2 = pq\lambda_T\pi R^2$ in (\ref{eqn:approxf_n}) for $f(n)=1/n$ as follows:
\begin{equation}
\E[1/n] \approx \frac{1}{pq\lambda_T\pi R^2}+\frac{1}{(pq\lambda_T\pi R^2)^2}.
\end{equation}
On further simplification for high BS densities, we can further approximate the expression as:
\begin{equation}\label{eqn:approxE1byn}
\E[1/n] \approx \frac{1}{pq\lambda_T\pi R^2} = \frac{1}{\E[n]}.
\end{equation}
Using (\ref{eqn:approxE1byn}), we approximate $\E\left[T^{LOS}|\setC^{LOS}\right]$ as follows:
\begin{equation}\label{eqn:durLOS_approx}
\begin{split}
\left[T^{LOS}|\setC^{LOS}\right] 
% &=  \E\left[\frac{1}{n\mu}|n\ne 0 \right]\\
 &\approx \frac{1}{P(n\ne0)\mu\E[n,n\ne0]}\\
&= \frac{1}{P(\setC^{LOS})\mu pq\lambda_T\pi R^2}.
\end{split}
\end{equation}
Finally, the expression in (\ref{eqn:durLOSapproxfinal}) can be obtained by substituting $P(\setC^{LOS})$ from (\ref{eqn:coveragep}) to (\ref{eqn:durLOS_approx}).

\subsection{Proof of Theorem \ref{th2}}\label{app:th_freq}
% \begin{proof}
Note that the probability $P(B_i^d|n,\{r_i\})$ follows the same expression as $P(B_i^d|m,\{r_i\})$ in (\ref{eqn:AllBSdynamic}) in the absence of static blockages. Therefore, we simplify $\zeta^d$ in (\ref{eqn:AllBSfreq}) as follows:
\begin{equation}
\zeta^d = n\mu  \prod_{i=1}^n \frac{\frac{C}{\mu}r_i}{1+\frac{C}{\mu}r_i}.
\end{equation}
We first evaluate $\E[\zeta^d|n]$ by following the steps similar to the derivation of (\ref{eqn:aprimeDerive}) as follows:
\begin{equation}
\begin{split}
\E[\zeta^d|n] &= n\mu \left(\int_r  \frac{\frac{C}{\mu}r}{1+\frac{C}{\mu}r}  \frac{2r}{R^2} dr\right)^n \\
& = n\mu (1-a')^n,
\end{split} 
\end{equation}
where $a'=1-\int_r  \frac{\frac{C}{\mu}r}{1+\frac{C}{\mu}r}  \frac{2r}{R^2} dr$ was solved to the closed-form expression in (\ref{eqn:aprime}). Next, we evaluate $\E[\zeta^d]$ as follows:
% where the first expression follow from~(\ref{eqn:blockageDurEQn1}) and the second expression follows from~(\ref{eqn:AllBSfreq}) and~(\ref{eqn:AllBS}). Finally, the third expression is evaluated using independence of distance distribution. Following from~(\ref{ep1n})
\begin{equation}\label{eqn:derive_Ezeta_d}
\begin{split}
 \mathbb{E} \left[\zeta^d\right]  &=  \sum_{n=0}^\infty  n \mu(1-a')^n \frac{[p\lambda_{T} \pi R^2]^n}{n!}e^{-p\lambda_{T} \pi R^2} \\
 &= \sum_{n=1}^\infty  \frac{[(1-a')p\lambda_{T} \pi R^2]^{(n-1)}}{(n-1)!}e^{-(1-a')p\lambda_{T} \pi R^2}\\ &\qquad\qquad\qquad\quad\times\mu(1-a')p\lambda_{T} \pi R^2e^{-ap\lambda_{T} \pi R^2}\\
 &= \mu(1-a')p\lambda_{T} \pi R^2e^{-a'p\lambda_{T} \pi R^2}.\\
\end{split}
\end{equation}

Finally, the expected frequency of blockage $\E[\zeta^d|\setC^d]$ conditioned on the coverage $C^d$ (\ref{eq:coverage_dyn}) is given as follows:
\begin{equation}
\begin{split}
 \mathbb{E} \left[\zeta^d|\setC^d\right] &= \frac{\sum_{n=1}^\infty \E[\zeta^d|n] P_N(n)}{P(\setC^d)} =\frac{\sum_{n=0}^\infty \E[\zeta^d|n] P_N(n)}{P(\setC^d)} \\
 &= \frac{\E[\zeta^d]}{P(\setC^d)}=\frac{\mu(1- a')p\lambda_T\pi R^2e^{-a'p\lambda_T\pi R^2}}{{1-e^{-p\lambda_T\pi R^2}}}.
\end{split}
\end{equation} 
This concludes the proof of Theorem~\ref{th2}. 

% \end{proof}

\subsection{Proof of Lemma \ref{lemma:NLOScoverage}}\label{app:proof_NLOS_coverage}
% \begin{proof}

Assuming the independence of LOS and NLOS links, we obtain the coverage probability $P(\setC)$ as follows:
\begin{equation}
\begin{split}
P(\setC) &= \sum_{m}\!\!\int_{r_1}\!\!\!\cdots\!\int_{r_m}\!\!\!P(\setC|m,r_i)f(\{r_i\}|m)dr_1\cdots dr_mP_M(m)\\
&=\sum_{m}\!\!\int_{r_1}\!\!\!\cdots\!\int_{r_m}\!\!\!\left(1-\prod_{i=1}^m\left(1\!-\!pe^{-(\beta r_i+\beta_0)}\right)\!\!\left(1\!-\!I_{(r_i\le \Rt)}\right)\!\!\right)\\
	&\qquad\qquad\qquad\qquad \times f(\{r_i\}|m)dr_1\cdots dr_mP_M(m)\\
&=1-\sum_m \left(\int_{r=\Rt}^{R}\left(1-pe^{-(\beta r+\beta_0)}\right)\frac{2r}{R^2}dr\right)^mP_M(m)\\
&=1-\sum_m \left(1-\qt\right)^mP_M(m),  
\end{split}
\end{equation}
where we defined $\qt$ as
\begin{equation}\label{eqn:qtIntegral}
\qt = 1-\int_{r=\Rt}^{R}\left(1-pe^{-(\beta r+\beta_0)}\right)\frac{2r}{R^2}dr.
\end{equation}
We solve the integration in (\ref{eqn:qtIntegral}) to a closed-form expression given in (\ref{eqn:qt}). Finally, we evaluate the coverage probability (\ref{eqn:coverageNLOS}) in Lemma \ref{lemma:NLOScoverage} by following the steps similar to the derivation of (\ref{eqn:probPBpart2Dyn}).

\bibliographystyle{IEEEtran}
\bibliography{references}
% that's all folks

% \subsection{Static Blockage Probability} \label{app:th_prob_static}

% Using the independence of the distribution of $R_i$, we get the final expression as
% \begin{equation}
% \begin{split}
% P(B_s|N) &= \int_rP(B_s|R_i,N) f(R_i)\;dr\\
% &=\prod_i^n \int_r(1-e^{-(\beta+\beta_0)r})\frac{2r}{R^2}\;dr\\
% &= \left(1-\frac{2}{\beta^2R^2}\left(e^{-\beta_0}-e^{-(\beta R+\beta_0)}(1+\beta R)\right)\right)^n\\
% &=(1-b)^n\\
% \end{split}
% \end{equation}
% Now, $P(B_s)$ can be obtained similar to (\ref{eqn:probPBpart2Dyn}).

% \subsection{Proof of Theorem \ref{th:NLOS}}\label{app:th_prob_NLOS}
% \textcolor{red}{ Incomplete??}

\begin{IEEEbiography}[{\includegraphics[width=1in,height=1.25in,clip,keepaspectratio]{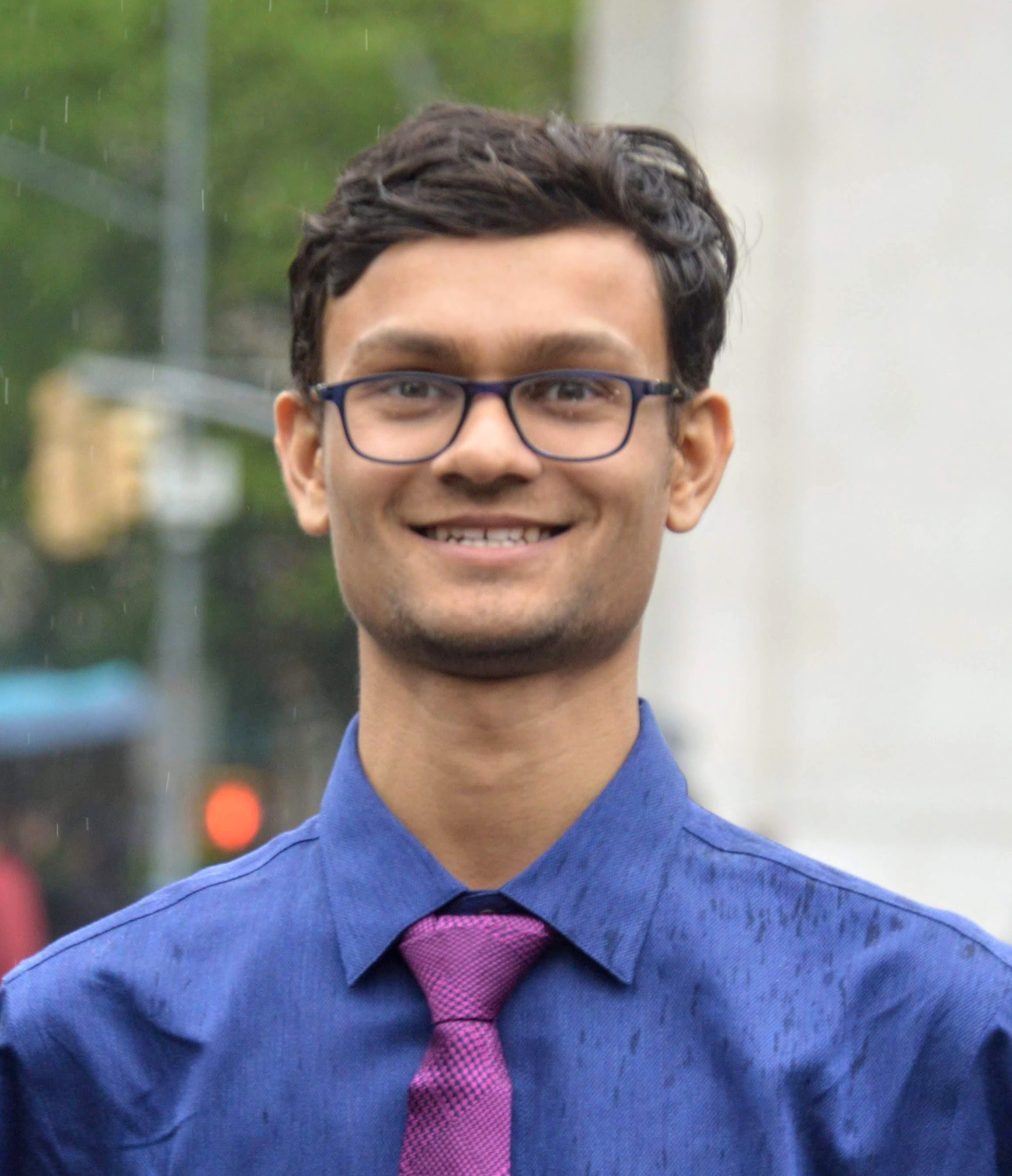}}]{Ish Kumar Jain}
received the B.Tech. degree from Indian Institute of Technology Kanpur, India, in 2016,
% He received the prestigious Motorola gold medal for the best all-round performance in electrical engineering during his B.Tech. convocation. 
and an MS degree from the Tandon School of Engineering, New  York University, NY, USA in 2018, both in Electrical Engineering. He was awarded the Samuel Morse Fellowship to pursue a Masters degree at NYU. He is currently pursuing a Ph.D. in Electrical Engineering at the University of California, San Diego, USA. His research interests include Wireless Communication Theory and Systems.
\end{IEEEbiography}

\begin{IEEEbiography}[{\includegraphics[width=1in,height=1in,clip,keepaspectratio]{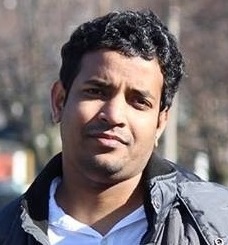}}]{Rajeev Kumar}
received the B.Tech. and M.Tech. degrees in Electrical Engineering from Indian Institute of Technology Madras, Chennai, India, in 2013. He is currently pursuing the Ph.D. degree in Electrical Engineering at the Tandon School of Engineering, New York University, NY, USA. He worked at Nokia Bell Labs during the summers of 2017 and 2018.
His research interests focus on latency issues related to 5G cellular systems.
\end{IEEEbiography}

\begin{IEEEbiography}[{\includegraphics[width=1in,height=1.25in,clip,keepaspectratio]{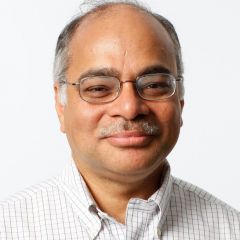}}]{Shivendra S. Panwar} (S'82-M'85-SM'00-F'11)
 is a Professor in the Electrical and Computer Engineering Department at the NYU Tandon School of Engineering. He received a Ph.D. degree in electrical and computer engineering from the University of Massachusetts, Amherst, in 1986. He is the Director of the New York State Center for Advanced Technology in Telecommunications (CATT), the Faculty Director and co-founder of the New York City Media Lab, and a member of NYU Wireless. His research interests include the performance analysis and design of networks. Current work includes cooperative wireless networks, switch performance and multimedia transport over networks. He has co-authored a textbook: ``TCP/IP Essentials: A Lab based Approach", Cambridge University Press. He was a winner of the IEEE Communication Society's Leonard Abraham Prize for 2004. He has served as the Secretary of the Technical Affairs Council of the IEEE Communications Society.
\end{IEEEbiography}

\end{document}